\crefname{equation}{}{}
\newenvironment{talign*}
 {\csname align*\endcsname}
 {\endalign}
\newenvironment{talign}
 {\csname align\endcsname}
 {\endalign}
\theoremstyle{plain}
\newtheorem{theorem}{Theorem}[section]
\newtheorem{proposition}[theorem]{Proposition}
\newtheorem{corollary}[theorem]{Corollary}
\theoremstyle{definition}
\theoremstyle{remark}
\crefname{theorem}{Thm.}{Thms.}
\crefname{lemma}{Lem.}{Lems.}
\crefname{corollary}{Cor.}{Cors.}
\crefname{proposition}{Prop.}{Props.}
\crefname{equation}{}{}
\crefname{section}{Sec.}{Secs.}
\crefname{appendix}{App.}{Apps.}
\crefname{figure}{Fig.}{Figs.}
\crefname{table}{Tab.}{Tabs.}
\newcommand{\ate}{\textup{ATE}}
\newcommand{\tauhat}[1][i]{\hat{\tau}_{n,-{#1}}}
\newcommand{\sighat}[1][i]{\hat{\sigma}_{n,-{#1}}}
\newcommand{\what}[1][i]{\hat{w}_{#1}^n}
\renewcommand{\P}{\mathbbm{P}}
\newcommand{\para}[1]{\textbf{#1}\quad}
\newcommand\numberthis{\addtocounter{equation}{1}\tag{\theequation}}
\newcommand{\ncref}[1]{\cref{#1}: \nameref*{#1}} 
\newcommand{\pcref}[1]{Proof of \ncref{#1}} 
\title{Should I Stop or Should I Go:\\ Early Stopping with Heterogeneous Populations}
\author{%
Hammaad Adam\thanks{Corresponding author (hadam@mit.edu).}\\
Massachusetts Institute of Technology\\
\And
Fan Yin\thanks{Current affiliation: Amazon. Work performed while at Microsoft (prior to joining Amazon).}\\
Microsoft Corporation\\
\And
Huibin (Mary) Hu\\
Microsoft Corporation\\
\And
Neil Tenenholtz\\
Microsoft Research\\
\And
Lorin Crawford\\
Microsoft Research\\
\And
Lester Mackey\\
Microsoft Research\\
\And
Allison Koenecke\\
Cornell University\\
}
\begin{document}

\maketitle

\begin{abstract}
  Randomized experiments often need to be stopped prematurely due to the treatment having an unintended harmful effect. Existing methods that determine when to stop an experiment early are typically applied to the data in aggregate and do not account for treatment effect heterogeneity. In this paper, we study the early stopping of experiments for harm on heterogeneous populations. We first establish that current methods often fail to stop experiments when the treatment harms a minority group of participants. We then use causal machine learning to develop CLASH, the first broadly-applicable method for heterogeneous early stopping. We demonstrate CLASH's performance on simulated and real data and show that it yields effective early stopping for both clinical trials and A/B tests.
\end{abstract}

\section{Introduction}
\label{intro}

Randomized experiments are the gold-standard method of determining causal effects, whether in clinical trials to evaluate medical treatments or in A/B tests to evaluate online product offerings. The sample size and duration of such experiments are typically  specified in advance. However, there are often strong ethical and financial reasons to stop an experiment before its scheduled end, especially if the treatment shows early evidence of harm \cite{cook2022stopping}. For example, if early data from a clinical trial demonstrates that the treatment increases the rate of serious side effects, it may be necessary to stop the trial to protect experimental participants \cite{deichmann2016bioethics}.

A variety of statistical methods can be used to determine when to stop an experiment for harm \citep{obf1979, wald1945sequential, johari2017peeking, howard2021time}. Investigators in both clinical trials and A/B tests will often choose to use a subset of these methods---collectively referred to as ``stopping tests''---based on the specifics of their domain. Stopping tests not only identify harmful effects from early data, but also limit the probability of stopping early when the treatment is not harmful. However, stopping tests are typically applied to the data in aggregate (i.e., ``homogeneously'') and do not account for heterogeneous populations. For example, a drug may be safe for younger patients but harm patients over the age of 65. While a growing body of literature has studied how to infer such heterogeneous effects \citep[see, e.g.,][]{yao2021survey}, little prior research has described how to adapt stopping tests to respond to heterogeneity.

We continue the above example to illustrate why stopping tests should account for heterogeneity. Consider a clinical trial for a drug such as warfarin, which has no harmful effect on the majority of the population but increases the rate of adverse effects in elderly patients \cite{shendre2018influence}. Using a simple simulation, we demonstrate that if elderly patients comprise $10\%$ of the trial population, then applying a stopping test homogeneously would rarely stop the trial for harm (\cref{fig:hetstop}). In most cases, the trial continues to recruit elderly patients until its scheduled end, many of whom will be harmed by their participation. This outcome violates the bioethical principle of non-maleficence \cite{varkey2021principles} and is clearly undesirable.

In this paper, we consider early stopping for harm with heterogeneous populations. We first formalize the problem of heterogeneous stopping and establish the shortcomings of the prevailing homogeneous approach. We then present our novel method: Causal Latent Analysis for Stopping Heterogeneously (CLASH). CLASH uses causal machine learning to infer the probability that a participant is harmed by the treatment, then adapts existing stopping tests to better detect this harm. CLASH allows an investigator to use their stopping test of choice: it is thus flexible and easy-to-use. We theoretically establish that, for sufficiently large samples, CLASH is more likely to stop a trial than the homogeneous approach if the treatment harms only a subset of trial participants. CLASH also does not stop trials unless a group is harmed; it thus increases statistical power while limiting Type I error rate. We demonstrate these properties in an extensive series of experiments on synthetic and real data, and establish that CLASH leads to effective heterogeneous stopping across a wide range of domains.

While a small number of papers have attempted to address the problem of heterogeneous early stopping, they have relied on restrictive assumptions, limiting their applicability. \citet{thall2008bayesian} require knowledge of the source of heterogeneity, which is rarely available in practice. \citet{sst2021} allow for unknown groups but only model linear heterogeneity. \citet{subtle2021} relax the linearity restriction but do not handle time-to-event data commonly observed in clinical trials. In contrast, CLASH does not require prior knowledge, makes no parametric assumptions, and works with any data distribution. It is thus the first broadly applicable tool for early stopping with any number of heterogeneous groups. 

We emphasize that early stopping is a nuanced decision. For example, if a treatment harms only a subset of participants, it may be desirable to stop the experiment only on the harmed group but continue it on the rest of the population. In other situations, it may make sense to stop the trial altogether. The decision to stop a trial to protect one group may disadvantage another group that would have benefited from the treatment; any decision on early stopping must thus carefully weigh the treatment's potential benefit, the size of the harmed group, the nature of the harm, and other ethical considerations. Our method is not intended to replace such discussion. It can, however, serve as an aid for investigators to make difficult decisions on early stopping. 

\begin{figure}[t]
    \centering
    \includegraphics[width=0.5\textwidth]{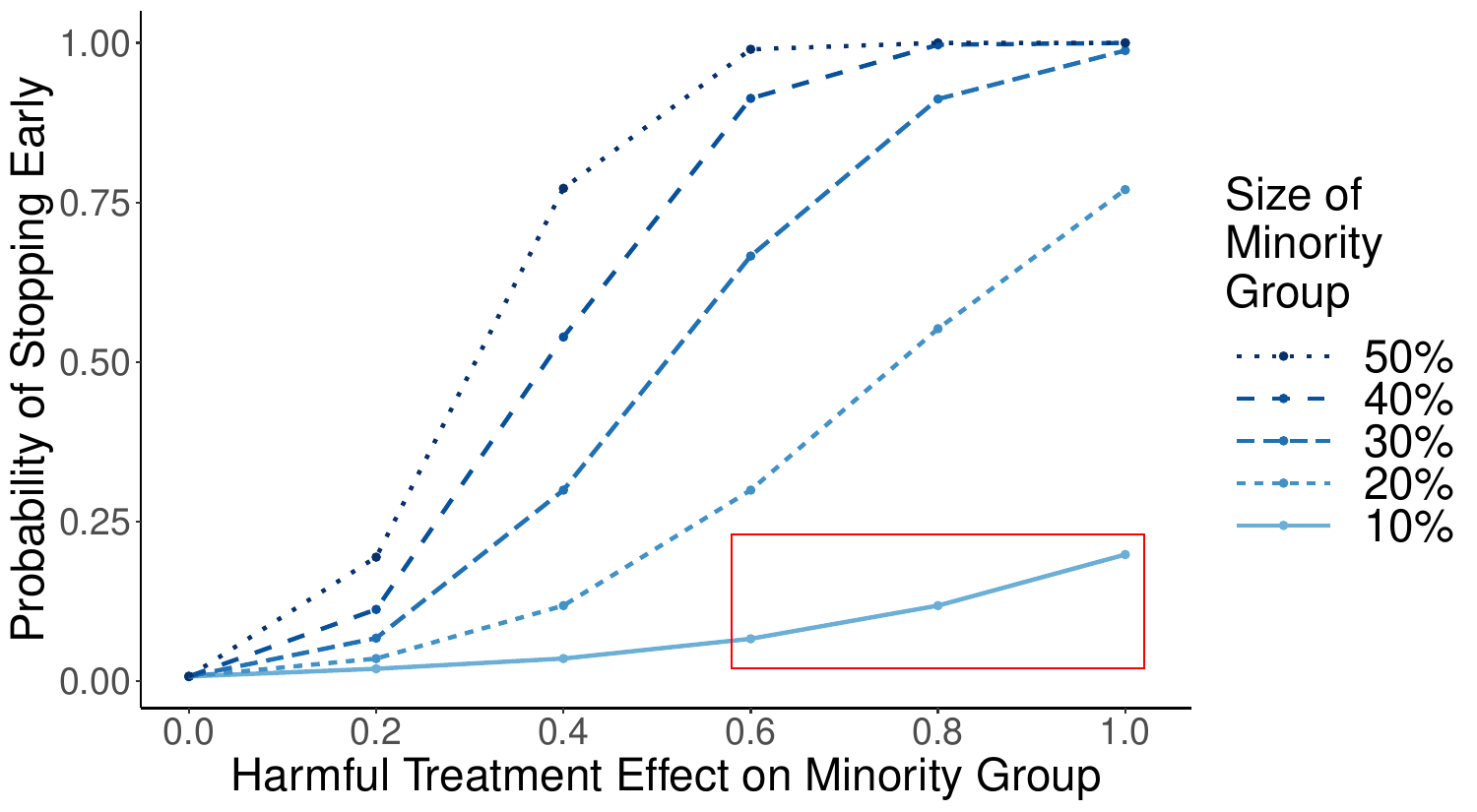}
    \caption{Stopping tests applied homogeneously may not stop trials in which the treatment harms only a minority group of participants. For example, if the treatment harms 10\% of all participants, the trial will stop less than $20\%$ of the time, even if the harmful treatment effect is strong (boxed in red). We plot the stopping probability of a commonly used stopping test (O'Brien-Fleming) in situations where the treatment harms the minority group but has no effect on the majority group.
    }
    \label{fig:hetstop}
    \vspace{-0.3cm}
\end{figure}

\section{Background and Setup}

\subsection{Randomized Experiment}

Consider an experiment that evaluates the effect of a binary treatment $D \in \{0,1\}$ on an observed outcome $Y$. $D$ is assigned uniformly at random to participants in the experiment. $Y$ is chosen specifically to measure harm and may be different from the experiment's primary outcome of interest. For example, in a clinical trial, $Y$ could convey a primary outcome such as mortality or a secondary outcome such as the occurrence of serious side effects. We allow for both scenarios, as experiments often care about harm on multiple dimensions. Assume that an increase in $Y$ is harmful. The harm caused by the treatment can be measured by the average treatment effect ($\ate$) of $D$ on $Y$, $\ate = \mathbbm{E}[Y| D=1] - \mathbbm{E}[Y| D=0]$. A positive ATE implies that the treatment is harmful, while a negative ATE implies that it is beneficial. Note that the ATE has a natural estimator: the difference in the observed means of $Y$ between the treated and untreated groups. 

Now, it is not necessary that every individual in the population responds to the treatment in the same way. The treatment may harm some individuals, benefit some, and have no effect on others. Assume that the population can be divided into $M$ discrete groups, where random variable $G$ conveys group membership. For group $g$, we define the conditional average treatment effect (CATE) $\tau(g) = \mathbbm{E}[Y| D=1, G = g] - \mathbbm{E}[Y| D=0, G = g]$. Note that in general, we do not observe $G$. Rather, we typically observe a set of covariates $X$ that correlate with group membership. We define the CATE for a participant with covariates $x$ as $\tau(x) = \mathbbm{E}_{g|x}[\tau(g)]$.

The experiment can recruit a maximum of $N$ participants and is run for $T$ time steps, where the values of $N$ and $T$ are set at the experiment's start. For simplicity, assume that we recruit two participants at each time step---one in treatment and one in control---and so $T=N/2$. For each participant $i$, we measure treatment assignment $d_i$, covariates $x_i$, and observed outcome $y_i$, but do not observe group membership $g_i$. The expected effect of the treatment on participant $i$ is given by $\tau(g_i)$. We assume that $(d_1, x_1,y_1,g_1),\dots, (d_N, x_N,y_N,g_N)$ are independent and identically distributed. 

\subsection{Early Stopping for Harm}

If the treatment is harmful, it is often important to end the experiment as early as possible, to minimize harm for participants and financial loss for the experimenter. Clinical trials and A/B tests usually build in interim checkpoints, where the data collected thus far is evaluated to make a decision on whether to continue the experiment. However, it is also important to \emph{not} stop the experiment at these checkpoints unless the treatment is actually harmful; incorrectly stopping early could result in lost innovation or societal benefit. Achieving this balance requires using dedicated statistical methods for early stopping, collectively known as ``stopping tests.''

A large body of literature has studied stopping tests. Frequentist methods like the Pocock adjustment \cite{pocock1977}, O'Brien-Fleming (OF) adjustment \cite{obf1979}, and alpha-spending \cite{demets1994interim} are common in clinical trials with a pre-specified schedule of checkpoints. True sequential methods that allow for continuous monitoring are often based around Wald's Sequential Probability Ratio Test (SPRT) \cite{wald1945sequential}, such as the 2-SPRT \cite{sprtlorden1976}, MaxSPRT \cite{maxsprt}, and mixture-SPRT (mSPRT) \cite{robbins1970statistical, johari2017peeking}. Other sequential methods are based on martingales \cite{balsubramani2014sharp, ramdas2020admissible} and testing by betting \cite{waudby2020estimating}. Such tests are more practical in A/B testing scenarios, where investigators can monitor responses as they arrive. There is also a growing interest in Bayesian designs and stopping tests, both for clinical trials \cite{berry2012bayesian} and A/B tests \cite{stucchio2015bayesian}. \cref{app:esmethods} summarizes some commonly-used stopping tests.

While these stopping tests differ in their details, many have the same general form: compute a test statistic from the hitherto observed data $\lambda_n(y_{1:n}, d_{1:n})$, then stop the experiment if this test statistic exceeds some bound $b(\alpha)$. Here, $n$ is the number of outcomes observed thus far and $\alpha$ is the desired bound on the rate of unnecessary stopping (i.e., stopping when the treatment is not harmful). 
For clarity, we provide a concrete example of a stopping test: the OF-adjusted z-test. Assume that $Y \sim \mathcal{N}(D\mu, \sigma^2)$, where $\sigma^2$ is known but $\mu$ is not. The treatment effect is homogeneous, and so the ATE (i.e., $\mu$) conveys if the treatment is harmful. To test between a null hypothesis of no harmful effect ($H_0: \ate  \leq 0$) and an alternate hypothesis of a harmful effect ($H_1: \ate > 0$), we use
\begin{talign}
\label{eq:ztest}
        \lambda_n^{OF} &= \frac{\sqrt{n}}{\sqrt{2 \sigma^2}} \Big(\frac{\sum_{i = 1}^n y_i d_i}{n/2} - \frac{\sum_{i = 1}^n  y_i (1-d_i)}{n/2}\Big).
\end{talign}
The OF-adjusted z-test compares this test statistic to an appropriate bound $b^{OF}(\alpha)$, which depends both on the number and timing of checkpoints. For example, if we were to conduct one checkpoint halfway through the trial, then $b^{OF}(0.05) = 2.37$. If $\lambda_{N/2}^{OF} > b^{OF}$, the test stops the experiment for harm; else, the experiment continues. Using the OF adjustment guarantees that if the treatment is not harmful (i.e., $\ate \leq 0$), then the probability of stopping the experiment is at most $5\%$.

\subsection{Stopping with Heterogeneous Populations}
Stopping tests are typically applied to the data in aggregate and define harm in terms of the ATE. While this approach is reasonable if the treatment effect is homogeneous, there is no guarantee that it will stop the experiment if the treatment only harms a subset of participants. For example, consider a situation in which there are two equally sized groups with equal but opposite treatment effects, that is, $p(G=0)=p(G=1)$ and $\tau(0) = -\tau(1)$. The ATE with this mixture distribution is zero, and so any stopping test with $H_0: \ate \leq 0$ is designed to continue to completion at least $(1-\alpha)\%$ of the time. Unfortunately, the failure to stop means that half of the trial participants will be harmed.

Experiments with heterogeneous populations thus require a more precise definition of harm. To comply with the bioethical principle of non-maleficence, experiments should be stopped not only if the treatment is harmful in aggregate (i.e., $\ate > 0$), but also if the treatment harms any group of participants (i.e., $\exists\; g \text{ s.t. } \tau(g) > 0$). We define the null hypothesis of no group harm,
\begin{align}
\label{eq:nogroupharm}
    H_0: \tau(g) \leq 0 \; \forall g \in \{1,..., M\}.
\end{align}
An effective stopping test must fulfill two conditions. First, it must have high stopping probability if the treatment harms any group of participants. Second, it must limit the probability of stopping if no group of participants is harmed (i.e., limit the ``rate of unnecessary stopping''). Note that these conditions are equivalent to high statistical power and low Type I error rate relative to \cref{eq:nogroupharm}.\footnote{We do not refer to these as statistical power and Type I error rate to prevent confusion between the hypothesis test of the overall experiment and that of the stopping test. However, we note the equivalence.} For the rest of the paper, we use these two metrics---the stopping probability if any group is harmed and the rate of unnecessary stopping---to evaluate stopping tests with heterogeneous populations.

\cref{fig:hetstop} demonstrates the limitations of applying stopping tests homogeneously to heterogeneous populations. We simulate an experiment that runs for 1,000 time steps and recruits 2,000 participants. Participants come from two groups, with $G \in \{0,1\}$ indicating membership in a minority group. The treatment has no effect on the majority group but harms the minority with treatment effect $\theta$, with $Y \sim \mathcal{N}(\theta \times D \times G, \sigma^2)$ and $\sigma^2=1$. The trial has one checkpoint at the halfway stage and uses an OF-adjusted z-test to decide whether to stop for harm. We vary $\theta$ from 0 to 1 and $\P(G=1)$ from 0.1 to 0.5, and plot the stopping probability across 1,000 replications. We find that as the size of the harmed group decreases, stopping probability falls significantly. Continuing our warfarin example from \cref{intro}, if elderly patients comprise 10\% of trial participants (lowest line in \cref{fig:hetstop}), then the stopping probability is less than $20\%$, even if the treatment has a large harmful effect ($\theta=1$).

\section{Causal Latent Analysis for Stopping Heterogeneously (CLASH)}

Thus far, we have introduced early stopping for harm and established that existing methods applied homogeneously may not be effective when applied to heterogeneous populations. In this section, we first  develop useful intuition on heterogeneous early stopping. We then present CLASH, our new two-stage approach. Finally, we theoretically establish that CLASH improves stopping probability if the treatment harms any group of participants without inflating the rate of unnecessary stopping.

\subsection{Motivation}
\label{clashmotivation}

The key problem with the homogeneous approach is that it averages over potential heterogeneous groups. One solution to this problem is to run a stopping test separately on different subsets of the population, but a question remains: which subset guarantees a high stopping probability? Assume for a moment that we have prior knowledge of each participant's group $g_i$ and their CATE $\tau(g_i)$. Then, \cref{lemma:optostop} (proved in \cref{proof3.1}) establishes that for $n$ sufficiently large, the stopping test that only considers participants from harmed groups stops with  probability approaching $1$.
\vspace{0.1cm}

\begin{proposition}[Group knowledge improves early stopping]
    Let $\mu^* = \mathbbm{E}[\tau(g_1)| \tau(g_1) > 0]$ denote the ATE for harmed groups and $S^*_n=\{i \leq n: \tau(g_i) > 0\}$ 
    the participants from harmed groups. 
    Let $\small \lambda^*_n = \Delta^*_n / \hat{\sigma}^*_n$ denote the value of a test statistic computed only on $S^*_n$, where $\small \mathbbm{E}\Delta_n^* = \mu^*$, $\small (\Delta_n^* -  \mu^*)/ \hat{\sigma}^*_n \xrightarrow{d} \mathcal{N}(0,1)$, $\small \sqrt{|S_n^*|}\hat{\sigma}^*_n  \xrightarrow{p} \sigma^*$, and $|S_n^*|$ diverges in probability to infinity. 
    Fix any stopping threshold $b(\alpha)$. Then, the stopping test that only considers $S_n^*$ has stopping probability converging to 1, i.e., $\mathbbm{P}(\lambda_n^* > b(\alpha)) \rightarrow 1.$
    \label{lemma:optostop}
\end{proposition}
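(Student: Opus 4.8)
The plan is to separate the test statistic into a \emph{noise} term and a \emph{signal} term,
\[
\lambda_n^* \;=\; \underbrace{\frac{\Delta_n^* - \mu^*}{\hat\sigma_n^*}}_{=:\,Z_n} \;+\; \underbrace{\frac{\mu^*}{\hat\sigma_n^*}}_{=:\,R_n},
\]
and then argue that $Z_n$ is bounded in probability whereas $R_n \xrightarrow{p} \infty$, so that $\lambda_n^*$ overshoots any fixed threshold $b(\alpha)$ with probability tending to one.

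First I would record that $\mu^* > 0$: since $\mu^* = \mathbbm{E}[\tau(g_1)\mid \tau(g_1) > 0]$ is the expectation of a strictly positive random variable conditioned on the event $\{\tau(g_1)>0\}$ (which must have positive probability for this conditional expectation to be defined and for ``$|S_n^*|$ diverges'' to hold), it is strictly positive. The key quantitative point is that the signal-to-noise ratio grows like $\sqrt{|S_n^*|}$: writing
\[
R_n \;=\; \frac{\mu^*\sqrt{|S_n^*|}}{\sqrt{|S_n^*|}\,\hat\sigma_n^*},
\]
the denominator $\sqrt{|S_n^*|}\,\hat\sigma_n^*$ converges in probability to the finite constant $\sigma^*$ and is hence $O_p(1)$, while the numerator $\mu^*\sqrt{|S_n^*|}$ diverges in probability to $\infty$ because $\mu^*>0$ and $|S_n^*|\xrightarrow{p}\infty$; together these give $R_n\xrightarrow{p}\infty$.

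For the noise term, the hypothesis $Z_n\xrightarrow{d}\mathcal{N}(0,1)$ implies tightness, so for every $\varepsilon>0$ there is $L_\varepsilon<\infty$ with $\mathbbm{P}(Z_n < -L_\varepsilon) < \varepsilon$ for all $n$. Combining the two pieces, for any fixed $b(\alpha)$ and any $\varepsilon>0$, taking $L = L_{\varepsilon/2}$,
\[
\mathbbm{P}\big(\lambda_n^* \le b(\alpha)\big) \;\le\; \mathbbm{P}(Z_n < -L) + \mathbbm{P}\big(R_n \le b(\alpha)+L\big) \;\le\; \tfrac{\varepsilon}{2} + \mathbbm{P}\big(R_n \le b(\alpha)+L\big),
\]
and the final term vanishes as $n\to\infty$ since $R_n\xrightarrow{p}\infty$. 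Hence $\mathbbm{P}(\lambda_n^* > b(\alpha)) \to 1$, as claimed.

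I expect the only delicate point to be bookkeeping around the \emph{random} subsample size $|S_n^*|$: the stated convergences for $\Delta_n^*$ and $\hat\sigma_n^*$ are really statements about a randomly-indexed array (a CLT/LLN for a random number of i.i.d.\ summands), so one should confirm that ``$\xrightarrow{d}$'' and ``$\xrightarrow{p}$'' there are with respect to the joint law of group labels and outcomes, after which the tightness bound and the elementary inequality above apply without change. Beyond that, the mathematical content is minimal — it is just the observation that conditioning on harmed groups replaces a fixed (possibly zero) ATE by a strictly positive $\mu^*$, so that the standardized statistic on $S_n^*$ has a location term $\mu^*/\hat\sigma_n^*$ that grows without bound.
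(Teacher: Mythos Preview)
Your proof is correct and rests on the same idea as the paper's: the standardized statistic on $S_n^*$ has a location term $\mu^*/\hat\sigma_n^*$ that diverges while the fluctuation term is stochastically bounded. Your execution is in fact more direct than the paper's. The paper introduces an auxiliary rescaled statistic $\rho_n^* = \sqrt{|S_n^*|}\,\Delta_n^*/\sigma^*$, then splits $\mathbbm{P}(\lambda_n^*\le b)$ according to whether $\rho_n^*/\lambda_n^* = \sqrt{|S_n^*|}\,\hat\sigma_n^*/\sigma^*$ has yet converged near its limit $1$, and finally uses the CLT on $\rho_n^*$ to kill the remaining piece. Your decomposition $\lambda_n^* = Z_n + R_n$ with $Z_n$ tight (from convergence in distribution) and $R_n\xrightarrow{p}\infty$ reaches the same conclusion without the detour through $\rho_n^*$; it is the textbook ``diverging location plus tight noise'' argument. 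Both approaches implicitly use $\sigma^*>0$ (you when bounding the denominator of $R_n$ away from zero, the paper when dividing by $\sigma^*$ to form $\rho_n^*$), and your remark about the random index $|S_n^*|$ is apt but, as you note, already baked into the stated hypotheses.
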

The assumptions of \cref{lemma:optostop} hold for many popular frequentist hypothesis tests, including an OF-adjusted z-test, an OF-adjusted t-test, and an OF-adjusted stopping test to detect differences in restricted mean survival time (RMST) for time-to-event data \cite{royston2013restricted}. To give a rough intuition of the proof, the assumed conditions imply that the mean of $\lambda_n^*$ grows with $|S_n^*|$ while its variance remains bounded, and so $\lambda_n^*$ eventually exceeds any fixed bound. \cref{lemma:optostop} establishes that to stop the experiment with high probability, the stopping test should be run only on the set of harmed groups $S_n^*$. We emphasize that the same stopping test applied to the whole population has no such guarantees. For example, if the overall $\ate$ is 0, then the stopping probability of the test applied homogeneously is $\leq \alpha$. Thus, running the stopping test only on $S_n^*$ stops the trial with far higher probability. 

One way to apply a stopping test only to the harmed groups is to assign a weight to each observation $w^*_i = \mathbbm{1}[i \in S_n^*]$, then reweight the test statistic by this indicator. For example, for the OF-adjusted z-test in \cref{eq:ztest}, using the following weighted test statistic is equivalent to running the test only on $S_n^*$,
\begin{talign}\label{ofweights}
    {\lambda_n^*}^{OF}= \frac{\sqrt{\sum_{1=1}^n w^*_i }}{\sqrt{2 \sigma^2}} \Big(\frac{\sum_{i = 1}^n w_i^* y_i d_i}{\sum_{1=1}^n w^*_id_i} - \frac{\sum_{i = 1}^n  w_i^* y_i (1-d_i)}{\sum_{1=1}^n w^*_i (1-d_i)}\Big).
\end{talign}
\cref{lemma:optostop} implies that using ${\lambda_n^*}^{OF}$ leads to high stopping probability. We thus call $w^*_i$ the optimal weights, as using them to reweight $\lambda_n$ guarantees stopping probability close to $1$ for large enough $n$.

Of course, this insight is not of direct practical use, as it assumes knowledge of $\tau(g_i)$ and $g_i$, neither of which is observed. However, it motivates how to construct better stopping tests for heterogeneous populations. If we can infer from the data which groups are harmed, then we can use this information to accelerate early stopping. Specifically, if we can estimate the optimal weights $w^*_i$, then we can use these estimates to construct weighted tests that have high stopping probability. In contrast, stopping tests applied homogeneously will have no such guarantees. This is the insight that drives our method CLASH, which we present in the following section.

\subsection{Method}

At any interim checkpoint $n$, CLASH operates in two stages. In Stage 1, CLASH uses causal machine learning to estimate the probability that each participant $i$ is harmed by the treatment. Then in Stage 2, it uses these inferred probabilities to reweight the test statistic of any chosen stopping test. 

\textbf{Stage 1: Estimate $w^*_i$}$\quad$ We first estimate the treatment effect $\tau(x_i)$ for each participant $i \leq n$ using any method of heterogeneous treatment effect estimation. The specific choice of method is left to the investigator; potential options include linear models, causal forests \cite{wager2018estimation}, and meta-learners \cite{kunzel2019metalearners}.

To maintain conditional independence between the estimated CATE and the observed outcome for participant $i$, we exclude $(x_i,y_i)$ from the training set when estimating ${\tau}(x_i)$. We thus use $\tauhat(x_i)$ to denote the estimated CATE for participant $i$, and $\sighat(x_i)$ for the estimated standard error of $\tauhat(x_i)$ (further discussed below). While the notation suggests leave-one-out cross validation, the estimation can instead use k-fold or progressive cross-validation \cite{blum1999beating}. We then set
\begin{talign}
\label{eq:clashweight}
    \what = 1 - \Phi \Big (\frac{\delta - \tauhat(x_i)}{\sighat(x_i)} \Big)
\end{talign}
where $\Phi$ is the cumulative distribution function (CDF) of the standard normal and the hyperparameter $\delta > 0$ (discussed below) is a small number. Intuitively, $\hat{w}^{n}_i$ conveys the probability that $i$ was harmed by the treatment; we establish that it is a good estimate of the optimal $w^*_i$ in \cref{sec:asymptotics}. We also discuss the choice of $\Phi$ in \cref{sec:asymptotics}.

\textbf{Stage 2: Weighted Early Stopping}$\quad$ We now use $\hat{w}^{n}_i$ to weight the test statistic of any existing stopping test. The choice of test is left to investigators based on their experiment and domain. For example, CLASH with the OF-adjusted z-test would use the test statistic from \cref{ofweights}, replacing $w_i^*$ with $\what$. We provide examples of weighted test statistics for other stopping tests in \cref{wtstats}.

We make three important practical notes. First, cross-validation (CV) in Stage 1 is employed to prevent unnecessary stopping with small samples. CV ensures that if the treatment has no effect, then the predicted CATE for participant $i$ will be independent of its observed outcome. Without this independence, outcomes that are large by random chance could be assigned high weights, especially in small samples. Second, most CATE estimation methods are designed to correct for confounding, which is not required in the experimental setting since the treatment is randomly administered. This simplifies the estimation problem (see \cref{app:noconfounding}). Third, the CLASH weights \cref{eq:clashweight} require not only point estimates of the CATE for each participant, but also standard errors. While bootstrapping can be used to obtain standard errors for any CATE estimation method, it is often computationally expensive. To save time and resources, investigators can use methods that compute standard errors during the process of fitting (e.g., causal forests or linear models).

Finally, we briefly discuss how to set the hyperparameter $\delta$. For our theoretical guarantees to hold (\cref{sec:asymptotics}), $\delta$ must be set to a number smaller than the smallest harmful treatment effect. This choice ensures that $\what$ converges to $w_i^*$ for all values of $\tau(x_i)$. In practice, we find that it is sufficient to set $\delta$ to a number that reflects the experiment's minimum effect size of interest. It is standard for clinical trials and A/B tests to specify this effect size, especially for \textit{pre-hoc} power calculations \cite{sullivan2012using}. Thus, $\delta$ can be set using the standard experimental procedures. 

\subsection{Properties}
\label{sec:asymptotics}

Recall that an effective method of heterogeneous early stopping must 1) have high stopping probability if any group of participants is harmed and 2) limit the rate of unnecessary stopping. CLASH achieves both of these criteria by constructing weights $\what$ that converge in probability to the optimal weights $w_i^* \; \forall \; i$ (\cref{lemma:wconvergence}). Thus, for $n$ large enough, CLASH is comparable to a stopping test run only on the harmed population $S_n^*$ and hence has high stopping probability in the presence of harm (\cref{lemma:optostop}). Further, if no group is harmed, the CLASH weights converge quickly to zero. Hence, the test statistic used in Stage 2 converges to zero, as does the probability of unnecessary stopping (\cref{lemma:type1error}). Thus, CLASH has high statistical power and limits the Type I error rate relative to the null hypothesis \cref{eq:nogroupharm}. 

Our first theorem (proved in \cref{proof3.2}) establishes the convergence of the CLASH weights to the optimal weights. It makes mild assumptions on the quality of CATE estimation in Stage 1. The choice of CATE estimator may demand further assumptions on the data; for example, using a linear regression provides suitable estimates if $\tau(x)$ is linear in $x$. Alternative causal machine learning methods minimize these assumptions; for example, a causal forest provides sufficiently accurate estimates under weak regularity conditions (e.g., bounded $x$ \citep[][Thm.~4.1]{wager2018estimation}).

\begin{theorem}[CLASH weights converge to optimal weights]
    CLASH weights \cref{eq:clashweight}
    satisfy the error bound
    \begin{talign*}
    \resizebox{1\hsize}{!}{%
        $|\what - w^*_i| \leq \exp \Big\{ - \frac{(\tau(x_i) - \delta)^2}{2\sighat^2(x_i)} \Big\} \; + \; \frac{|\tauhat(x_i) - \tau(x_i)|}{\sighat(x_i)} \exp \Big\{ - \frac{(\tau(x_i) - \delta)^2}{2\sighat^2(x_i)} + \frac{|\tau(x_i) - \delta ||\tauhat(x_i) - \tau(x_i)|}{\sighat^2(x_i)}\Big\}.$%
        }
    \end{talign*}
    Moreover, if $\small \delta < \inf_{x: \tau(x) > 0} \tau(x)$ and, given $\small x_i$, 
    $\small \tauhat(x_i) \xrightarrow{p} \tau(x_i)$ and  $\small \sighat(x_i) \xrightarrow{p} 0$, then $\what$ is a consistent estimator of the optimal weight: 
    $\small \what -  w^*_i \xrightarrow{p} 0$.
    \label{lemma:wconvergence}
\end{theorem}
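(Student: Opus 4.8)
The plan is to prove the displayed inequality as a deterministic bound, conditional on the fitted values $\tauhat(x_i)$ and $\sighat(x_i)$, and then read off consistency from it. Write $w_i^\star = \mathbbm{1}[\tau(x_i) > \delta]$ for the target (under the hypothesis $\delta < \inf_{x:\tau(x)>0}\tau(x)$ every $x_i$ has $\tau(x_i)\neq\delta$ and $w_i^\star = \mathbbm{1}[\tau(x_i)>0]$, the optimal weight of \cref{lemma:optostop}). Using $1-\Phi(t)=\Phi(-t)$ and splitting on the sign of $\tau(x_i)-\delta$, one sees that in both cases $|\what - w_i^\star| = \Phi(z_0 + r)$, where the ``noiseless'' part is $z_0 = -|\tau(x_i)-\delta|/\sighat(x_i)\le 0$ and the perturbation is $r = \pm(\tauhat(x_i)-\tau(x_i))/\sighat(x_i)$. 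I then bound $\Phi(z_0+r) \le \Phi(z_0) + |\Phi(z_0+r)-\Phi(z_0)|$ and estimate the two pieces separately.

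For the first piece, the Gaussian tail bound $\Phi(-t)\le \exp(-t^2/2)$ for $t\ge0$ gives $\Phi(z_0) \le \exp\{-(\tau(x_i)-\delta)^2/(2\sighat^2(x_i))\}$, which is exactly the first summand. For the second piece, the mean value theorem yields $|\Phi(z_0+r)-\Phi(z_0)| = \phi(\xi)\,|\tauhat(x_i)-\tau(x_i)|/\sighat(x_i)$ for some $\xi$ between $z_0$ and $z_0+r$, supplying the correct prefactor; it remains to bound $\phi(\xi)=(2\pi)^{-1/2}\exp(-\xi^2/2)$. Since $(2\pi)^{-1/2}<1$, it suffices to show the point of the mean-value interval closest to the origin has squared modulus at least $\big((\tau(x_i)-\delta)^2 - 2|\tau(x_i)-\delta|\,|\tauhat(x_i)-\tau(x_i)|\big)/\sighat^2(x_i)$, which then produces the exponent $-(\tau(x_i)-\delta)^2/(2\sighat^2(x_i)) + |\tau(x_i)-\delta|\,|\tauhat(x_i)-\tau(x_i)|/\sighat^2(x_i)$ in the second summand. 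This is the one genuinely fiddly step, and I expect it to be the main obstacle: one argues by cases on whether the interval lies on one side of $0$ (closest point is an endpoint, and, suppressing $x_i$, the elementary inequality $(|\tau-\delta|-|\tauhat-\tau|)^2 + (\tauhat-\tau)^2 \ge (\tau-\delta)^2 - 2|\tau-\delta|\,|\tauhat-\tau|$ closes it) or straddles $0$ (closest point is $0$, and $|\tauhat-\tau|\ge|\tau-\delta|$ forces the target right-hand side to be nonpositive). Adding the two pieces gives the bound; the boundary case $\tau(x_i)=\delta$ is trivial since the first summand is then $1 \ge |\what - w_i^\star|$.

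For consistency, fix $x_i$ with $\tau(x_i)\neq\delta$ and set $c = (\tau(x_i)-\delta)^2/4 > 0$. The first summand is a continuous function of $\sighat(x_i)$ that vanishes as $\sighat(x_i)\to 0^+$, so it tends to $0$ in probability by $\sighat(x_i)\xrightarrow{p}0$. For the second summand, work on the event $E_n=\{|\tauhat(x_i)-\tau(x_i)|\le |\tau(x_i)-\delta|/4\}$, whose probability tends to $1$ by $\tauhat(x_i)\xrightarrow{p}\tau(x_i)$; on $E_n$ the exponent is at most $-c/\sighat^2(x_i)$ and the prefactor is at most $|\tau(x_i)-\delta|/(4\sighat(x_i))$, so the summand is dominated by the deterministic quantity $h(\sighat(x_i))$ with $h(s)=\tfrac{|\tau(x_i)-\delta|}{4s}\exp(-c/s^2)$, and $h(s)\to 0$ as $s\to 0^+$ because $u\mapsto u e^{-cu^2}$ is bounded on $[0,\infty)$. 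A union bound over $E_n^c$ and $\{h(\sighat(x_i))>\epsilon\}\subseteq\{\sighat(x_i)>s_0(\epsilon)\}$ (valid since $h$ is continuous and vanishes at the origin) then gives $\what - w_i^\star \xrightarrow{p} 0$. Everything outside the case analysis for $\sup\phi$ is routine.
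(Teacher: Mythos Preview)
Your proposal is correct and follows essentially the same route as the paper: decompose $|\what - w_i^\star|$ via a first-order Taylor/mean-value expansion of $\Phi$, bound the leading term with the Gaussian tail inequality $\Phi(-t)\le e^{-t^2/2}$, and then bound the remainder by controlling $\phi$ at the intermediate point. The one place you make your life harder than necessary is the ``fiddly'' case analysis on whether the mean-value interval straddles the origin: the paper instead writes $\xi=z_0+h_n r$ with $h_n\in[0,1]$, expands $\xi^2 = z_0^2 + h_n^2 r^2 + 2h_n z_0 r$, drops the nonnegative $h_n^2 r^2$ term, and uses $2h_n z_0 r \ge -2|z_0||r|$ to obtain $\xi^2 \ge z_0^2 - 2|z_0||r|$ directly---no cases needed.
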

\cref{lemma:wconvergence} provides important intuition on when CLASH will perform well. For observation $i$, the rate of weight convergence depends on two factors. First, it is proportional to $\tau(x_i)$, the size of the harmful effect. The more harmful the treatment, the better CLASH will perform. Second, it is inversely proportional to $\sighat(x_i)$, the uncertainty in the estimate of $\tau(x_i)$. The more certain the estimate, the better CLASH will perform. Many factors impact $\sighat(x_i)$, including $n$ and the frequency of observing covariates like $x_i$. We expect CLASH to perform better with (1) larger datasets, (2) fewer covariates, and (3) larger $S_n^*$. We explore these relationships further in our experiments (\cref{sec:sims}-\ref{sec:emp}).

We emphasize that \cref{lemma:wconvergence} implies a fast rate of convergence, as the error bound decays exponentially in $1/\sighat^2(x_i)$. For example, if $\sighat^2(x_i) = o_p(1/\log n)$, 
then $|\what - w_i^*| = o_p(1/n)$ (see \cref{lemma:fastconv}). This is extremely fast: weights and other nuisance parameters are often estimated at a rate of $1/\sqrt{n}$ or slower \citep{chernozhukov2017double}. This speed justifies our use of the Gaussian CDF in \cref{eq:clashweight}. While other CDFs can provide \cref{lemma:wconvergence}'s consistency result, they may not demonstrate such provably fast convergence. Note that the additional assumption on $\sighat^2(x_i)$ is mild,\footnote{See \cite{chernozhukov2017double} for a discussion of the rate of convergence of machine learning-based CATE estimators.} 
as it just needs to decay faster than $1/\log n$. For example, a linear regression under standard assumptions (as in \cite{van2000asymptotic}) has $\sighat(x_i)= \mathcal{O}_p(1/\sqrt{n})$. 

Our next theorem (proved in \cref{proof3.3}) establishes that CLASH limits the rate of unnecessary stopping for the OF-adjusted z-test. If no group of participants is harmed, CLASH's stopping probability converges to zero. The rate of unnecessary stopping (i.e., Type I error rate relative to \cref{eq:nogroupharm}) is thus guaranteed to be $ \leq \alpha$ for $n$ large enough. The theorem requires a mild condition on the decay of $\max_{i\leq n}\sighat^2(x_i)$ (the same discussed above).
It also assumes bounded outcomes, a condition satisfied in most real experiments. \cref{lemma:type1errorsprt} provides a similar result for an SPRT stopping test.

\begin{theorem}[CLASH limits unnecessary stopping] Consider a stopping test with  weighted z-statistic \cref{ofweights} and weights estimated using CLASH. 
If $\max_{i \leq n} \sighat^2(x_i) = o_p(1/\log(n))$, $\max_{i\leq n}|\tau(x_i) - \tauhat(x_i)| = o_p(1)$, and $y_i$ are uniformly bounded, then 
the stopping probability of the test converges to zero if no participant group is harmed. 
\label{lemma:type1error}
\end{theorem}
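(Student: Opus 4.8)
The plan is to exploit \cref{lemma:wconvergence}: under the no-harm condition the optimal weights are identically zero, so each CLASH weight \emph{equals} its own deviation from the optimum and therefore inherits the exponential decay of \cref{lemma:wconvergence}; this forces the total weight mass to vanish, which collapses the weighted $z$-statistic below the fixed threshold $b(\alpha)$.

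\emph{Step 1 (reduction).} ``No participant group is harmed'' means $\tau(g)\le 0$ for every group $g$, hence $\tau(x_i)=\mathbbm{E}_{g|x_i}[\tau(g)]\le 0<\delta$ and $w^*_i=\mathbbm{1}[\tau(g_i)>0]=0$ for all $i$; thus $\what=\bigl|\what-w^*_i\bigr|\in[0,1]$, and I may invoke the deterministic error bound of \cref{lemma:wconvergence} with $w^*_i$ known to be $0$. Uniform boundedness of the outcomes, say $|y_i|\le B$, gives $|\tau(x_i)|=\bigl|\mathbbm{E}[Y\mid D{=}1,X{=}x_i]-\mathbbm{E}[Y\mid D{=}0,X{=}x_i]\bigr|\le 2B$, so $(\tau(x_i)-\delta)^2\ge\delta^2$ and $|\tau(x_i)-\delta|\le 2B+\delta=:C$, uniformly in $i$.

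\emph{Step 2 (the weight mass vanishes).} The bound of \cref{lemma:wconvergence} is a deterministic function of $(\tauhat(x_i),\sighat(x_i),\tau(x_i))$, so I plug in the estimates and maximize over $i\le n$. Write $m_n:=\max_{i\le n}\sighat^2(x_i)=o_p(1/\log n)$ and $r_n:=\max_{i\le n}|\tauhat(x_i)-\tau(x_i)|=o_p(1)$. Step 1 bounds the first summand by $\exp\{-\delta^2/(2m_n)\}$. For the second summand, $|\tau(x_i)-\delta|\,|\tauhat(x_i)-\tau(x_i)|\le Cr_n$, so once $n$ is large enough that $Cr_n\le\delta^2/4$ its exponent is $\le-\delta^2/(4\sighat^2(x_i))$; combining this with the prefactor $|\tauhat(x_i)-\tau(x_i)|/\sighat(x_i)\le r_n/\sighat(x_i)$ and the elementary fact that $t\mapsto t^{-1}e^{-\delta^2/(8t^2)}$ is bounded on $(0,\infty)$, the second summand is at most $C' r_n\,e^{-\delta^2/(8m_n)}$. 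Hence, with probability tending to one,
\[
\max_{i\le n}\what\ \le\ 2\,e^{-\delta^2/(8m_n)},\qquad W_n:=\sum_{i\le n}\what\ \le\ 2n\,e^{-\delta^2/(8m_n)}=2\exp\{\log n-\tfrac{\delta^2}{8m_n}\}.
\]
Since $m_n\log n\xrightarrow{p}0$, the exponent $\delta^2/(8m_n)$ exceeds $2\log n$ with probability $\to 1$, so $W_n\xrightarrow{p}0$ (this is the fast-convergence behaviour of \cref{lemma:fastconv}).

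\emph{Step 3 (the weighted statistic collapses).} In \cref{ofweights} with $w^*_i$ replaced by $\what$, each of $\sum_i\what y_i d_i/\sum_i\what d_i$ and $\sum_i\what y_i(1-d_i)/\sum_i\what(1-d_i)$ is a convex combination of the $y_i$'s and hence lies in $[-B,B]$ whenever its denominator is positive; if a denominator vanishes the statistic is undefined and the test does not stop. On the event that the statistic is defined,
\[
\Bigl|{\lambda_n^*}^{OF}\Bigr|\ \le\ \frac{\sqrt{W_n}}{\sqrt{2\sigma^2}}\cdot 2B\ =\ O_p\bigl(\sqrt{W_n}\bigr)\ \xrightarrow{p}\ 0,
\]
so for any fixed $b(\alpha)$, $\ \P\bigl({\lambda_n^*}^{OF}>b(\alpha)\bigr)\le\P\bigl(\sqrt{W_n}>\sqrt{2\sigma^2}\,b(\alpha)/(2B)\bigr)\to 0$, which is the claim. (The SPRT version \cref{lemma:type1errorsprt} follows the same template, bounding its weighted log-likelihood ratio by a constant multiple of $W_n$.)

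\emph{Main obstacle.} Steps 1 and 3 are bookkeeping; the real work is Step 2, namely upgrading the pointwise (conditional-on-$x_i$) bound of \cref{lemma:wconvergence} to something uniform over $i\le n$. The delicate points are (i) showing the positive term $|\tau(x_i)-\delta|\,|\tauhat(x_i)-\tau(x_i)|/\sighat^2(x_i)$ in the exponent is truly dominated by $\delta^2/(2\sighat^2(x_i))$, which needs bounded outcomes (to control $|\tau(x_i)-\delta|$) \emph{together with} the uniform rate $\max_{i\le n}|\tauhat(x_i)-\tau(x_i)|=o_p(1)$, and (ii) checking that the $1/\sighat(x_i)$ prefactor cannot spoil the decay, which holds because $t^{-1}e^{-\delta^2/(8t^2)}$ is bounded and $\to 0$ as $t\downarrow 0$ while $\max_{i\le n}\sighat^2(x_i)=o_p(1/\log n)$ pushes the exponent past $\log n$, beating the factor $n$ from summing $n$ weights.
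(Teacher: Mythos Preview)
Your proof is correct and mirrors the paper's argument: both reduce to $w_i^*\equiv 0$, use the exponential error bound of \cref{lemma:wconvergence} together with $\max_{i\le n}\sighat^2(x_i)=o_p(1/\log n)$ to show $W_n=\sum_{i\le n}\what\xrightarrow{p}0$, and then finish via $|\lambda_n^w|\le (2c/\sqrt{2\sigma^2})\sqrt{W_n}$. The only difference is cosmetic bookkeeping in Step~2---the paper controls the prefactor through $(1+x/2)\le e^{x/2}$ and AM--GM without ever bounding $|\tau(x_i)-\delta|$ from above, whereas you invoke the outcome bound to get $|\tau(x_i)-\delta|\le 2B+\delta$ and then use that $t\mapsto t^{-1}e^{-\delta^2/(8t^2)}$ is bounded.
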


We note that our presented theory is asymptotic, not exact. 
However, many stopping tests---both frequentist and sequential---rely on asymptotic approximations and so large enough sample sizes are often already required. For example, the mSPRT stopping test with binary outcomes in \cite{johari2017peeking} requires a large-sample Gaussian approximation. In our simulation experiments (\cref{sec:sims}) and real-world application (\cref{sec:emp}), we show that sample sizes typical in many clinical trials and A/B tests are sufficient for CLASH to improve stopping probability while limiting unnecessary stopping.

\subsection{The Decision to Stop}
\label{sec:decision}

Before proceeding to our experimental evaluation, we emphasize that CLASH is not intended to automate human decisions on early stopping. When CLASH indicates that an experiment should be stopped, the investigator can make one of several decisions. If the nature of harm is serious (e.g., mortality in a clinical trial), the investigator may decide to stop the experiment for all participants. For milder harms (e.g., high latency during application startup in an A/B test), they may decide to stop the experiment only for the harmed group and continue it for all other participants. If the identified harm is much less consequential than the potential benefit (e.g., harm of increased headaches vs. benefit of curing cancer), the investigator may decide to not stop the experiment at all. The specific decision must depend heavily on the ethical and financial aspects of the experiment, a thorough review of the interim data, guidance from the investigator's institutional review board, and other relevant factors. 

If the investigator chooses to stop the experiment only on the harmed group, they face two practical challenges: identifying the harmed group and ATE estimation. To identify the harmed group, investigators can either use existing subgroup identification methods (e.g., \cite{athey2021policy, bayescredsubgroups, zhang2018subgroup}) on the observed outcomes or analyze the CLASH weights to find covariate combinations with weights close to 1. To estimate the whole population ATE at the end of the experiment, investigators can use inverse propensity weights; this approach allows one to correct for the selection bias induced by stopping the experiment only in one group (see example in \cref{table:ate}). Further details, including ensuring actionable group sizes, are discussed in \cref{app:stophetero}. 

\section{Simulation Experiments}
\label{sec:sims}

We now assess the performance of CLASH in an extensive series of simulation experiments.\footnote{Replication code for the simulation experiments can be found at \footnotesize{\url{https://github.com/hammaadadam1/clash}}} We focus on two important types of experimental outcomes: Gaussian and time-to-event (TTE).
Gaussian outcomes are the most frequently considered (e.g., \cite{johari2017peeking}), as they are common in both clinical trials and A/B tests. 
Many stopping tests assume either that the data is Gaussian or that the average outcome is asymptotically Gaussian. 
Meanwhile, TTE outcomes are very common in clinical trials. 
We establish that CLASH improves stopping times over relevant baselines in both these scenarios. Our real-world application (\cref{sec:emp}) then applies CLASH to outcomes that are approximately negative binomial, further demonstrating that CLASH can be impactful in experiments across several domains. 

\para{Setup: Gaussian Outcomes} We consider a randomized experiment that evaluates the effect of $D$ on $Y$. Participants come from two groups, with $G \in \{0,1\}$ indicating membership in a minority group. We do not observe $G$, but observe a set of binary covariates $X=[X_1, .., X_d]$, where $d$ varies between $3$ and $10$ and $p(X_j=1) = 0.5 \;\forall\; j$. $X$ maps deterministically to $G$, with $G = \prod_{j=1}^k X_j$.\footnote{This is a realistic setting (e.g., $X$ is device type and new feature $D$ increases application latency $Y$ only on some devices). However, CLASH also performs well if the mapping from X to G is stochastic (see \cref{sec:stochasticG}).} We vary $k$ between $1$ and $3$: the expected size of the minority thus varies between 12.5\% and 50\% (of all participants).
$Y$ is normally distributed within each group, with $Y | G=0 \sim \mathcal{N}(\theta_0 D,1)$ and $Y | G=1 \sim \mathcal{N}(\theta_1 D,1)$. The distribution of $Y$ over the whole population is thus a mixture of two Gaussians. We vary $\theta_0$ between 0 and -0.1; the majority group is thus unaffected or benefited by the treatment. We vary $\theta_1$ between 0 and 1: the minority group is thus either unaffected or harmed.

The experiment runs for $N=4000$ participants and $T=2000$ time steps, recruiting one treated and one untreated participant at each step.\footnote{This assumption is required for some stopping tests (e.g., mSPRT, see \cite{johari2017peeking}), not for CLASH. CLASH can be used with any experiment design (e.g. batch recruitment, imbalanced arms) if a suitable stopping test is used.} The experiment has three checkpoints, with 1,000, 2,000, and 3,000 participants (corresponding to 25\%, 50\%, and 75\% of the total time duration). At each checkpoint, we run CLASH and compute its stopping probability across 1,000 replications. In Stage 1, CLASH uses a causal forest with 5-fold CV and $\delta = 0.1$. Standard errors were estimated by the causal forest itself (i.e., not via the bootstrap). In Stage 2, CLASH uses one of four commonly-used stopping tests: an OF-adjusted z-test, an mSPRT \cite{johari2017peeking}, a MaxSPRT \cite{maxsprt}, and a Bayesian estimation-based test.

\begin{figure*}[!t]
    \centering
    \includegraphics[width=0.9\textwidth]{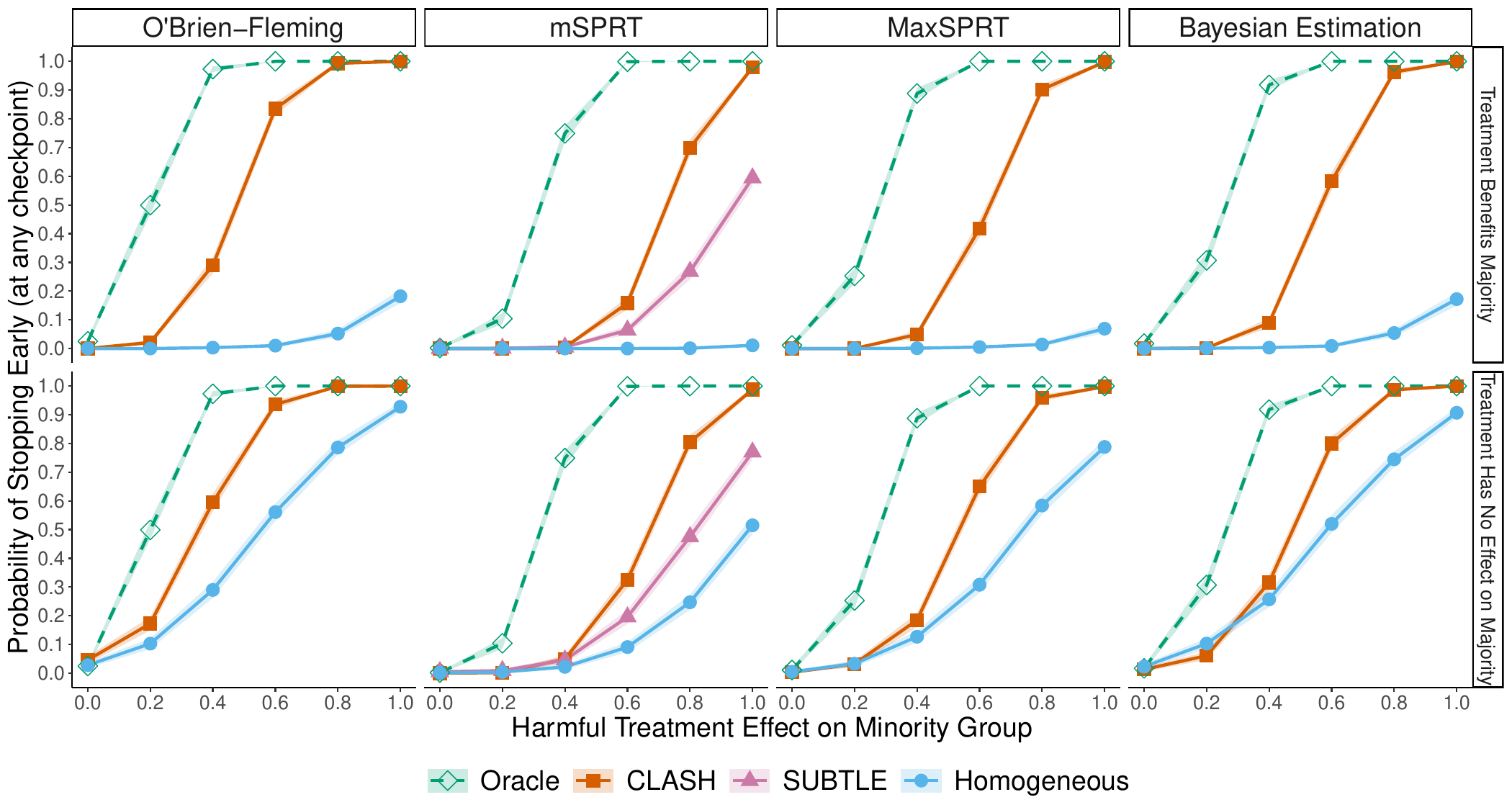}
    \caption{Performance of CLASH in a simulation experiment with Gaussian outcomes. If the minority group is harmed, CLASH (red) significantly increases the stopping probability over the homogeneous baseline (blue) and SUBTLE (pink). For large effect sizes, CLASH is as effective as an oracle that has prior knowledge of the harmed group (green). Crucially, if the treatment has no harmful effect (i.e., effect = 0), CLASH does not stop the trial more often than either baseline (i.e., CLASH limits Type I error rate). These results hold for four commonly-used stopping tests (corresponding to each column). SUBTLE builds upon the mSPRT framework specifically, so is only presented in the second column. These results hold whether the majority group's treatment is beneficial (top row) or has no effect (bottom row). We simulate 1,000 trials with 4,000 participants each and a 12.5\% minority group; we plot the stopping probability (with 95\% CIs) at any interim checkpoint.}
    \label{fig:gaussiansummary}
\end{figure*}

We compare CLASH's stopping probability to three alternative approaches (two baselines and one oracle). The \textbf{homogeneous baseline} applies the four aforementioned stopping tests to the collected data in aggregate. The \textbf{SUBTLE baseline} uses a recently-developed heterogeneous stopping test \cite{subtle2021}. SUBTLE is the only existing approach that can handle unknown groups without strong parametric assumptions (e.g., linearity). SUBTLE builds on the mSPRT framework and has the same decision rule; we thus compare its performance to CLASH using mSPRT in Stage 2. The \textbf{Oracle} applies the four aforementioned stopping tests only to data from the harmed group. This approach reflects the optimal test discussed in \cref{lemma:optostop}, and represents an upper bound on performance.

\begin{figure*}[!b]
\centering
\subfigure[Minority Group Size: 12.5\%]{%
\label{fig:nda}%
\includegraphics[height=1.7in]{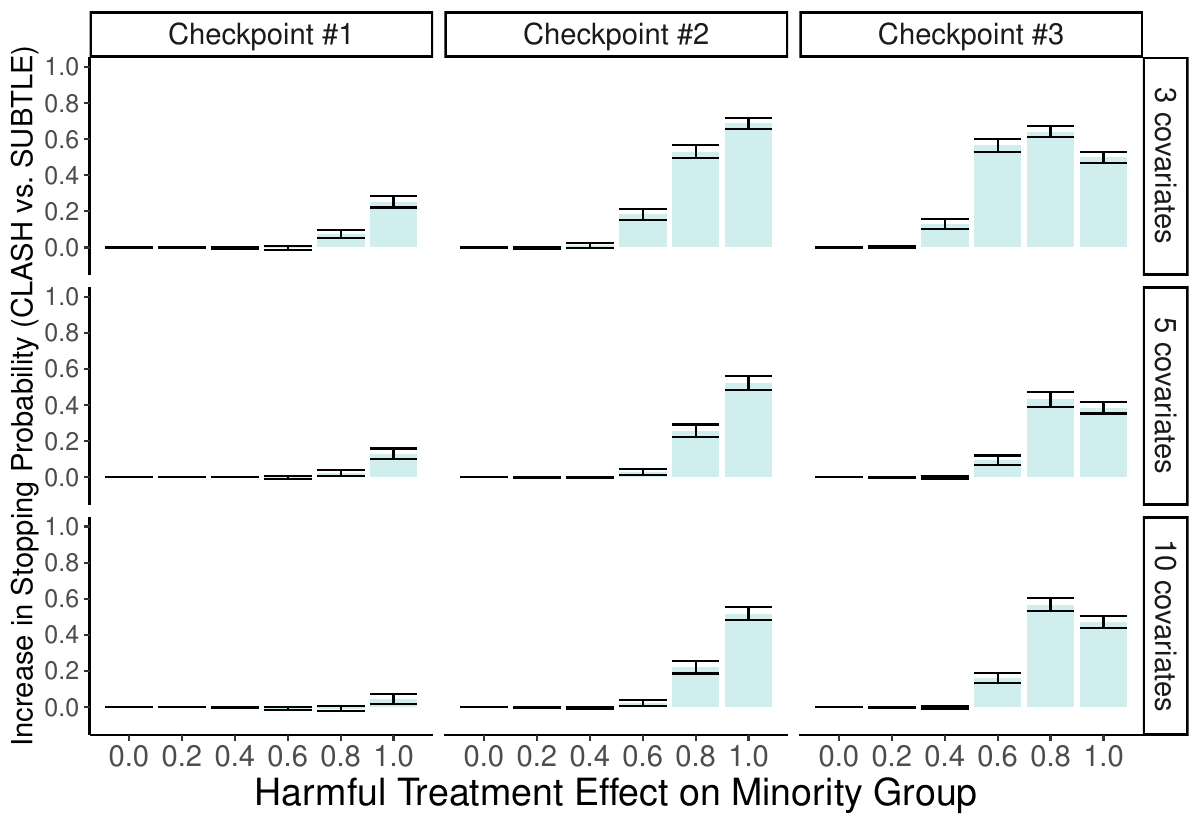}}%
\qquad
\subfigure[Minority Group Size: 25\%]{%
\label{fig:ndb}%
\includegraphics[height=1.7in]{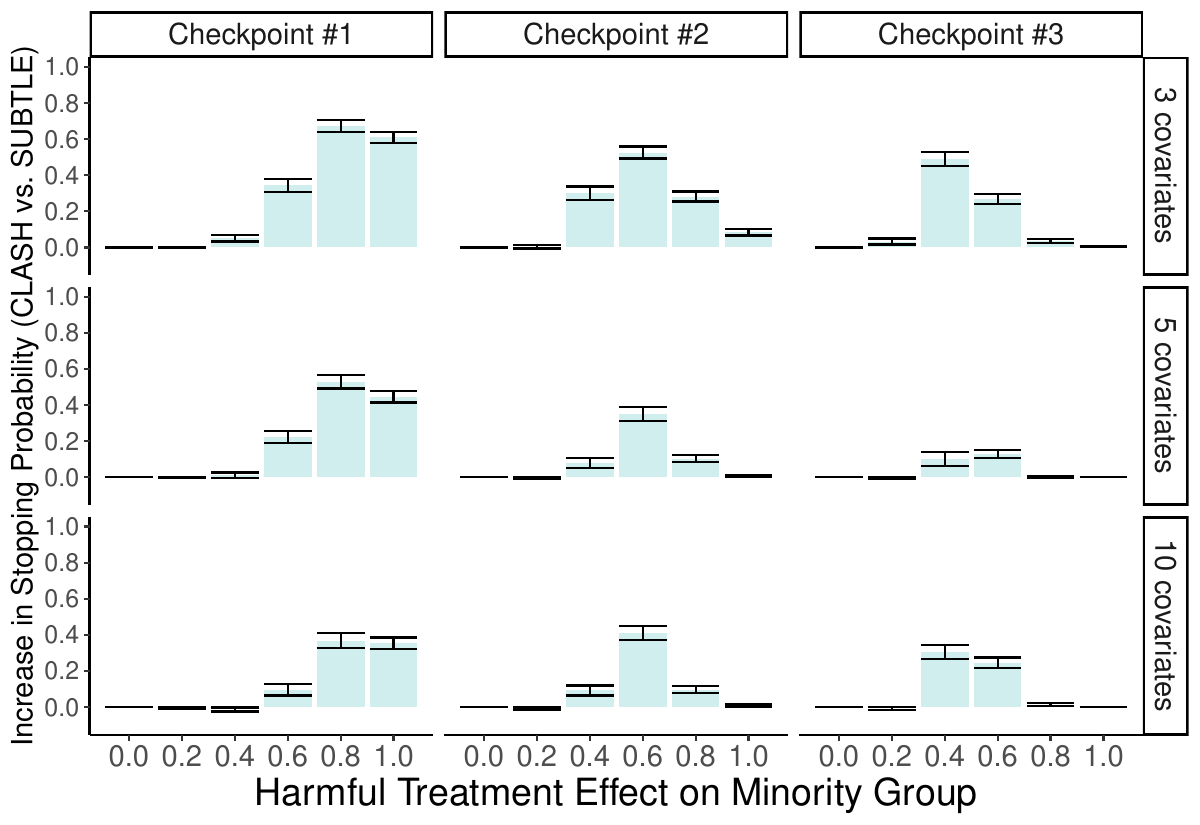}}%
\caption{CLASH's performance improvement over the SUBTLE baseline with Gaussian outcomes. CLASH significantly increases stopping probability over SUBTLE across a range of effect sizes (x- axis) and covariate dimensions (rows). CLASH not only stops more often but also faster, with higher stopping probability at earlier checkpoints (columns). CLASH does not increase stopping probability if the treatment is not harmful (i.e., effect = 0). We plot the difference in stopping probability between CLASH and SUBTLE (with 95\% CIs), where the minority group forms (a) 12.5\% or (b) 25\% of all participants. For the larger group size, CLASH's improvement is most notable for medium effects, which SUBTLE struggles to detect. All performance increases are robust to an increase in the number of covariates. See \cref{moregaussian} for a similar comparison of CLASH to the homogeneous baseline.}
\label{fig:nd}%
\end{figure*}

\para{Setup: TTE Outcomes}
We consider a clinical trial that measures time to a positive event (e.g., remission) and defines treatment effects using the hazard ratio. We adapt the simulation setup from \cite{lu2021statistical}; see \cref{ttesetup} for a full description. CLASH's Stage 1 uses a survival causal forest \cite{cui2020estimating} and Stage 2 uses an OF-adjusted Cox regression. Note that existing SPRTs cannot be applied to TTE outcomes, as they cannot account for repeated observations for each participant. Crucially, SUBTLE cannot be used either; CLASH is thus the first heterogeneous method applicable to this setting.

\para{Results} CLASH outperforms both the homogeneous baseline and SUBTLE in a wide range of experiments. We first discuss performance in the Gaussian setting with five covariates and a minority group that represents 12.5\% of participants. \cref{fig:gaussiansummary} plots the probability that the experiment is stopped at any of the three checkpoints. If the minority group is harmed (x-axis $\theta_1 > 0$), CLASH stops the experiment significantly more often than the homogeneous and SUBTLE baselines. The magnitude of the improvement depends on the effect size: the larger the effect, the better CLASH performs. For large effects, CLASH even converges to the oracle. Crucially, CLASH does not increase the rate of unnecessary stopping: if the minority group is unharmed ($\theta_1 = 0$), CLASH stops the experiment no more frequently than either baseline (i.e., CLASH controls the Type I error rate). These results hold for all four stopping tests; CLASH thus leads to effective stopping no matter which test is used. Moreover, its performance gains are robust to the specific choice of the hyperparameter $\delta$ (\cref{fig:delta}).

We now compare CLASH to SUBTLE, the heterogeneous baseline presented in the second column of \cref{fig:gaussiansummary}. CLASH improves stopping probability over SUBTLE whether the treatment benefits the majority group or has no effect. We focus on the former case and plot the increase in stopping probability of CLASH (using mSPRT) over SUBTLE in \cref{fig:nd}. CLASH improves stopping probability in a large number of scenarios. We make three specific notes. First, when the minority group is larger, CLASH performs well for a broader range of effect sizes. Notably, SUBTLE greatly underperforms CLASH for medium effects (though it is able to detect large effects as often as CLASH). Second, increasing the number of covariates only has a small impact on CLASH, which outperforms SUBTLE despite the higher dimensionality. Third, CLASH greatly increases the probability of stopping the trial at the first and second checkpoints. CLASH thus not only stops trials more often, but also faster. 

CLASH also demonstrates strong performance with TTE outcomes (\cref{morette}). CLASH improves the stopping probability of the OF-adjusted Cox regression by $20$ percentage points if the treatment harms 25\% of the population with hazard ratio of $0.7$. No existing method for heterogeneous stopping (including SUBTLE) can handle TTE; this broad applicability is one of CLASH's major advantages. 

\cref{moresims} presents detailed simulation results that further demonstrate CLASH's efficacy. We note the discussed results are robust across a range of simulation settings, including effect size, sample size, covariate dimensionality, harmed group size, and outcome variance. Furthermore, CLASH is highly effective in situations with multiple harmed groups, noisy group membership, and unknown outcome variance. While CLASH broadly outperforms existing methods, we note two specific limitations. First, CLASH may struggle to detect harmful effects in very small samples (e.g, N=200, see \cref{fig:smalln}). Note that this is a challenging setting for any method, since we only observe 10-20 data points from the harmed group at any interim checkpoint. Second, CLASH may struggle with a very large number of covariates (e.g., 500 covariates, see \cref{fig:largeD}). In such settings, we recommend performing feature selection before running CLASH.

\section{Real-world Application}
\label{sec:emp}

In this section, we apply CLASH to real data from a digital experiment. We consider an A/B test run by a technology company to evaluate the effect of a software update on user experience. The dataset was collected in 2022 and consists of 500,000 participants, roughly half of whom received the update. We evaluate this treatment's effect on a proprietary metric that measures negative user experience. An increase in the outcome---which is discrete and right-skewed---indicates a worse user experience. The experiment also recorded which one of eight geographic regions the user's device was located in. We first use the full dataset to assess whether the treatment had a heterogeneous impact by region. Separate negative binomial (NB) regressions indicate that the treatment led to a large statistically significant increase in the metric in Region 1, a small significant increase in Regions 2, 3, and 4, no significant effect in Regions 5 and 6, and a significant decrease in Regions 7 and 8 (\cref{table:empest}). 

We evaluate stopping tests on this dataset, conducting checkpoints every 20,000 participants. We run CLASH with a causal forest and $\delta=1$ in Stage 1 and an OF-adjusted NB regression in Stage 2. We find that CLASH decreases stopping time over the homogeneous baseline, stopping the experiment after 40,000 as opposed to 60,000 participants. SUBTLE stops after 80,000 participants; this under-performance may result from the data's strong non-normality. An oracle that only considers Region 1 also stops after 40,000 observations (i.e., same as CLASH). We note that CLASH's optimal performance is not a fluke: across 1,000 random shuffles of the dataset, CLASH stops the experiment at the same interim checkpoint as the Oracle in 62.6\% of shuffles (details in \cref{table:shuffles}).

We emphasize that by stopping the experiment early, the company can minimize customer frustration, preventing churn and financial impact. Domain expertise can guide whether to stop the experiment altogether, or just in the harmed regions. The harmed regions can be identified at stopping time using separate regressions (\cref{table:empestinterim}), CLASH weights (\cref{fig:empw}), or policy learning \cite{kitagawa2018should, athey2021policy}. Note that stopping the experiment just in one region would affect statistical inference at the end of the experiment, as the treatment would no longer be randomly assigned across regions. As described in \cref{sec:decision}, investigators can use inverse probability weighting to adjust for this selection (\cref{table:ate}).

\begin{wrapfigure}{l}{0.5\textwidth}
    \centering
    \includegraphics[width=0.45\textwidth]{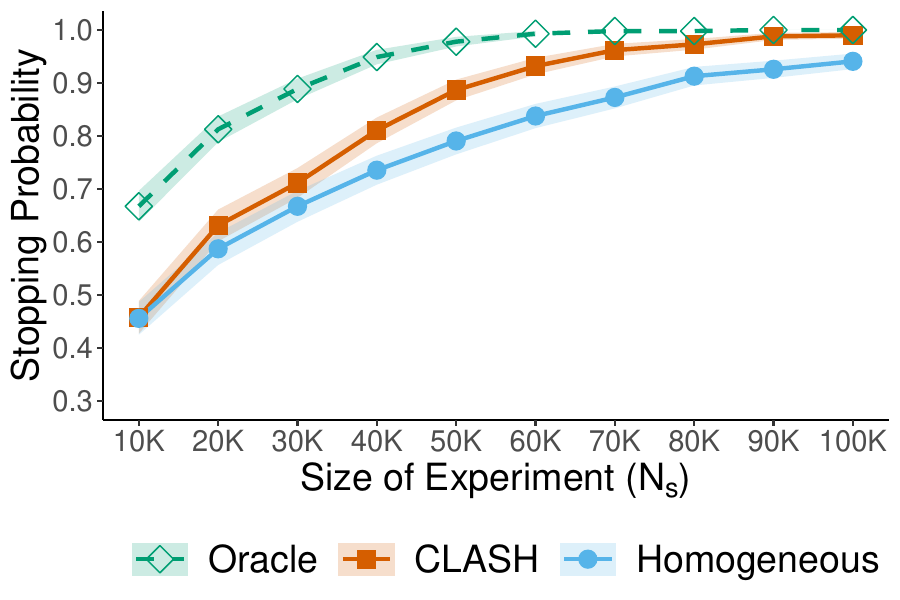}
    \caption{CLASH performance by sample size on real data. We plot stopping probability across 1,000 experiments with 95\% CIs.}
    \label{fig:empn}
\end{wrapfigure}

Finally, we study the impact of sample size on CLASH's performance. We vary sample size $N_s$ between 10,000 and 100,000, and generate 1,000 experiments for each $N_s$ by sampling from our dataset. We only sample from Region 1 and Regions 5-8; this gives us one harmed group (Region 1) that comprises $29\%$ of the total population (see \cref{moreemp} for results with Regions 2-4). We conduct one checkpoint at the halfway stage of each simulated experiment. We find that CLASH significantly improves stopping probability over the homogeneous baseline for all $N_s \geq $ 40,000, and converges to near-oracle performance around $N_s= $ 90,000 (\cref{fig:empn}). These sizes represent $8\%$ and $18\%$ of the total sample of the A/B test. This result is encouraging, as it indicates that datasets that are a fraction of the size of typical online A/B tests are large enough for our asymptotic theory (\cref{sec:asymptotics}) to hold.

\section{Discussion and Limitations}

We propose a new method CLASH for the early stopping of randomized experiments on heterogeneous populations. CLASH stops experiments more often than existing approaches if the treatment harms only a subset of participants, and does not stop experiments if the treatment is not harmful. Prior work is either limited by restrictive assumptions (e.g., linearity) or incompatible with common data distributions (e.g., time-to-event). In contrast, CLASH is easy to use, adapts any existing stopping test, and is broadly applicable to clinical trials and A/B tests. Our work has a few important limitations. First, CLASH works better with a relatively small number of covariates. While this is not a problem for most experiments---clinical trials and A/B tests typically collect only a few covariates---a different pre-processing approach may be required in high-dimensional settings (e.g., genetic data with over 500 covariates). Second, CLASH may not detect harm in experiments with a very small number of participants. For example, sample sizes typical in Phase 1 clinical trials (i.e., less than 100 participants) may be too small for CLASH to be more effective than existing approaches. Finally, this paper only considers stopping for harm. In practice, experiments are often stopped for early signs of benefit and futility. Expanding CLASH to these decisions is an important direction for future work.

\begin{ack}
We thank Steven Goodman, Sanjay Basu, Dean Eckles, Nikos Trichakis, and Sonia Jaffe for their helpful comments. This work was graciously supported through the Bowers CIS Strategic Partnership Program with LinkedIn.
\end{ack}

\bibliography{earlystopping}

\begin{thebibliography}{44}
\providecommand{\natexlab}[1]{#1}
\providecommand{\url}[1]{\texttt{#1}}
\expandafter\ifx\csname urlstyle\endcsname\relax
  \providecommand{\doi}[1]{doi: #1}\else
  \providecommand{\doi}{doi: \begingroup \urlstyle{rm}\Url}\fi

\bibitem[Athey and Wager(2021)]{athey2021policy}
Susan Athey and Stefan Wager.
\newblock Policy learning with observational data.
\newblock \emph{Econometrica}, 89\penalty0 (1):\penalty0 133--161, 2021.

\bibitem[Athey et~al.(2019)Athey, Tibshirani, and Wager]{athey2019grf}
Susan Athey, Julie Tibshirani, and Stefan Wager.
\newblock Generalized random forests.
\newblock \emph{The Annals of Statistics}, 47\penalty0 (2):\penalty0
  1148--1178, 2019.

\bibitem[Balsubramani(2014)]{balsubramani2014sharp}
Akshay Balsubramani.
\newblock Sharp finite-time iterated-logarithm martingale concentration.
\newblock \emph{arXiv preprint arXiv:1405.2639}, 2014.

\bibitem[Berry(2012)]{berry2012bayesian}
Donald~A Berry.
\newblock Bayesian approaches for comparative effectiveness research.
\newblock \emph{Clinical Trials}, 9\penalty0 (1):\penalty0 37--47, 2012.

\bibitem[Blum et~al.(1999)Blum, Kalai, and Langford]{blum1999beating}
Avrim Blum, Adam Kalai, and John Langford.
\newblock Beating the hold-out: Bounds for k-fold and progressive
  cross-validation.
\newblock In \emph{Proceedings of the twelfth annual conference on
  Computational learning theory}, pages 203--208, 1999.

\bibitem[Chernozhukov et~al.(2017)Chernozhukov, Chetverikov, Demirer, Duflo,
  Hansen, and Newey]{chernozhukov2017double}
Victor Chernozhukov, Denis Chetverikov, Mert Demirer, Esther Duflo, Christian
  Hansen, and Whitney Newey.
\newblock Double/debiased/neyman machine learning of treatment effects.
\newblock \emph{American Economic Review}, 107\penalty0 (5):\penalty0 261--65,
  2017.

\bibitem[Cook and Buhule(2022)]{cook2022stopping}
Thomas Cook and Olive~D Buhule.
\newblock Stopping trials early due to harm.
\newblock \emph{NEJM Evidence}, 1\penalty0 (5):\penalty0 EVIDctw2100026, 2022.

\bibitem[Cui et~al.(2020)Cui, Kosorok, Sverdrup, Wager, and
  Zhu]{cui2020estimating}
Yifan Cui, Michael~R Kosorok, Erik Sverdrup, Stefan Wager, and Ruoqing Zhu.
\newblock Estimating heterogeneous treatment effects with right-censored data
  via causal survival forests.
\newblock \emph{arXiv preprint arXiv:2001.09887}, 2020.

\bibitem[Deichmann et~al.(2016)Deichmann, Krousel-Wood, and
  Breault]{deichmann2016bioethics}
Richard~E Deichmann, Marie Krousel-Wood, and Joseph Breault.
\newblock Bioethics in practice: considerations for stopping a clinical trial
  early.
\newblock \emph{Ochsner Journal}, 16\penalty0 (3):\penalty0 197--198, 2016.

\bibitem[Demets and Lan(1994)]{demets1994interim}
David~L Demets and KK~Gordon Lan.
\newblock Interim analysis: the alpha spending function approach.
\newblock \emph{Statistics in Medicine}, 13\penalty0 (13-14):\penalty0
  1341--1352, 1994.

\bibitem[Deng et~al.(2016)Deng, Lu, and Chen]{deng2016continuous}
Alex Deng, Jiannan Lu, and Shouyuan Chen.
\newblock Continuous monitoring of {A}/{B} tests without pain: Optional
  stopping in bayesian testing.
\newblock In \emph{2016 IEEE international conference on data science and
  advanced analytics (DSAA)}, pages 243--252. IEEE, 2016.

\bibitem[Drake(1993)]{drake1993effects}
Christiana Drake.
\newblock Effects of misspecification of the propensity score on estimators of
  treatment effect.
\newblock \emph{Biometrics}, pages 1231--1236, 1993.

\bibitem[Howard et~al.(2021)Howard, Ramdas, McAuliffe, and
  Sekhon]{howard2021time}
Steven~R Howard, Aaditya Ramdas, Jon McAuliffe, and Jasjeet Sekhon.
\newblock Time-uniform, nonparametric, nonasymptotic confidence sequences.
\newblock \emph{The Annals of Statistics}, 49\penalty0 (2):\penalty0
  1055--1080, 2021.

\bibitem[Johari et~al.(2017)Johari, Koomen, Pekelis, and
  Walsh]{johari2017peeking}
Ramesh Johari, Pete Koomen, Leonid Pekelis, and David Walsh.
\newblock Peeking at {A}/{B} tests: Why it matters, and what to do about it.
\newblock In \emph{Proceedings of the 23rd ACM SIGKDD International Conference
  on Knowledge Discovery and Data Mining}, pages 1517--1525, 2017.

\bibitem[Kitagawa and Tetenov(2018)]{kitagawa2018should}
Toru Kitagawa and Aleksey Tetenov.
\newblock Who should be treated? empirical welfare maximization methods for
  treatment choice.
\newblock \emph{Econometrica}, 86\penalty0 (2):\penalty0 591--616, 2018.

\bibitem[Kulldorff et~al.(2011)Kulldorff, Davis, Kolczak, Lewis, Lieu, and
  Platt]{maxsprt}
Martin Kulldorff, Robert~L Davis, Margarette Kolczak, Edwin Lewis, Tracy Lieu,
  and Richard Platt.
\newblock A maximized sequential probability ratio test for drug and vaccine
  safety surveillance.
\newblock \emph{Sequential analysis}, 30\penalty0 (1):\penalty0 58--78, 2011.

\bibitem[K{\"u}nzel et~al.(2019)K{\"u}nzel, Sekhon, Bickel, and
  Yu]{kunzel2019metalearners}
S{\"o}ren~R K{\"u}nzel, Jasjeet~S Sekhon, Peter~J Bickel, and Bin Yu.
\newblock Metalearners for estimating heterogeneous treatment effects using
  machine learning.
\newblock \emph{Proceedings of the national academy of sciences}, 116\penalty0
  (10):\penalty0 4156--4165, 2019.

\bibitem[Lachin(2005)]{condpower2005review}
John~M Lachin.
\newblock A review of methods for futility stopping based on conditional power.
\newblock \emph{Statistics in Medicine}, 24\penalty0 (18):\penalty0 2747--2764,
  2005.

\bibitem[Lorden(1976)]{sprtlorden1976}
Gary Lorden.
\newblock 2-sprt's and the modified kiefer-weiss problem of minimizing an
  expected sample size.
\newblock \emph{The Annals of Statistics}, pages 281--291, 1976.

\bibitem[Lu and Tian(2021)]{lu2021statistical}
Ying Lu and Lu~Tian.
\newblock Statistical considerations for sequential analysis of the restricted
  mean survival time for randomized clinical trials.
\newblock \emph{Statistics in biopharmaceutical research}, 13\penalty0
  (2):\penalty0 210--218, 2021.

\bibitem[Nie and Wager(2021)]{nie2021quasi}
Xinkun Nie and Stefan Wager.
\newblock Quasi-oracle estimation of heterogeneous treatment effects.
\newblock \emph{Biometrika}, 108\penalty0 (2):\penalty0 299--319, 2021.

\bibitem[O'Brien and Fleming(1979)]{obf1979}
Peter~C O'Brien and Thomas~R Fleming.
\newblock A multiple testing procedure for clinical trials.
\newblock \emph{Biometrics}, pages 549--556, 1979.

\bibitem[Pocock(1977)]{pocock1977}
Stuart~J Pocock.
\newblock Group sequential methods in the design and analysis of clinical
  trials.
\newblock \emph{Biometrika}, 64\penalty0 (2):\penalty0 191--199, 1977.

\bibitem[{R Core Team}(2022)]{rcite}
{R Core Team}.
\newblock \emph{R: A Language and Environment for Statistical Computing}.
\newblock R Foundation for Statistical Computing, Vienna, Austria, 2022.
\newblock URL \url{https://www.R-project.org/}.

\bibitem[Ramdas et~al.(2020)Ramdas, Ruf, Larsson, and
  Koolen]{ramdas2020admissible}
Aaditya Ramdas, Johannes Ruf, Martin Larsson, and Wouter Koolen.
\newblock Admissible anytime-valid sequential inference must rely on
  nonnegative martingales.
\newblock \emph{arXiv preprint arXiv:2009.03167}, 2020.

\bibitem[Robbins(1970)]{robbins1970statistical}
Herbert Robbins.
\newblock Statistical methods related to the law of the iterated logarithm.
\newblock \emph{The Annals of Mathematical Statistics}, 41\penalty0
  (5):\penalty0 1397--1409, 1970.

\bibitem[Robins et~al.(1994)Robins, Rotnitzky, and Zhao]{robins1994estimation}
James~M Robins, Andrea Rotnitzky, and Lue~Ping Zhao.
\newblock Estimation of regression coefficients when some regressors are not
  always observed.
\newblock \emph{Journal of the American statistical Association}, 89\penalty0
  (427):\penalty0 846--866, 1994.

\bibitem[Royston and Parmar(2013)]{royston2013restricted}
Patrick Royston and Mahesh~KB Parmar.
\newblock Restricted mean survival time: an alternative to the hazard ratio for
  the design and analysis of randomized trials with a time-to-event outcome.
\newblock \emph{BMC medical research methodology}, 13\penalty0 (1):\penalty0
  1--15, 2013.

\bibitem[Saville et~al.(2014)Saville, Connor, Ayers, and
  Alvarez]{saville2014utility}
Benjamin~R Saville, Jason~T Connor, Gregory~D Ayers, and JoAnn Alvarez.
\newblock The utility of bayesian predictive probabilities for interim
  monitoring of clinical trials.
\newblock \emph{Clinical Trials}, 11\penalty0 (4):\penalty0 485--493, 2014.

\bibitem[Schnell et~al.(2016)Schnell, Tang, Offen, and
  Carlin]{bayescredsubgroups}
Patrick~M Schnell, Qi~Tang, Walter~W Offen, and Bradley~P Carlin.
\newblock A bayesian credible subgroups approach to identifying patient
  subgroups with positive treatment effects.
\newblock \emph{Biometrics}, 72\penalty0 (4):\penalty0 1026--1036, 2016.

\bibitem[Seabold and Perktold(2010)]{statsmodels}
Skipper Seabold and Josef Perktold.
\newblock statsmodels: Econometric and statistical modeling with python.
\newblock In \emph{9th Python in Science Conference}, 2010.

\bibitem[Shendre et~al.(2018)Shendre, Parmar, Dillon, Beasley, and
  Limdi]{shendre2018influence}
Aditi Shendre, Gaurav~M Parmar, Chrisly Dillon, Timothy~Mark Beasley, and
  Nita~A Limdi.
\newblock Influence of age on warfarin dose, anticoagulation control, and risk
  of hemorrhage.
\newblock \emph{Pharmacotherapy: The Journal of Human Pharmacology and Drug
  Therapy}, 38\penalty0 (6):\penalty0 588--596, 2018.

\bibitem[Stucchio(2015)]{stucchio2015bayesian}
Chris Stucchio.
\newblock Bayesian {A}/{B} testing at {VWO}.
\newblock \emph{Whitepaper, Visual Website Optimizer}, 2015.

\bibitem[Sullivan and Feinn(2012)]{sullivan2012using}
Gail~M Sullivan and Richard Feinn.
\newblock Using effect size—or why the p value is not enough.
\newblock \emph{Journal of graduate medical education}, 4\penalty0
  (3):\penalty0 279--282, 2012.

\bibitem[Thall and Wathen(2008)]{thall2008bayesian}
Peter~F Thall and J~Kyle Wathen.
\newblock Bayesian designs to account for patient heterogeneity in phase ii
  clinical trials.
\newblock \emph{Current opinion in oncology}, 20\penalty0 (4):\penalty0 407,
  2008.

\bibitem[Van~der Vaart(2000)]{van2000asymptotic}
Aad~W Van~der Vaart.
\newblock \emph{Asymptotic statistics}, volume~3.
\newblock Cambridge University Press, 2000.

\bibitem[Varkey(2021)]{varkey2021principles}
Basil Varkey.
\newblock Principles of clinical ethics and their application to practice.
\newblock \emph{Medical Principles and Practice}, 30\penalty0 (1):\penalty0
  17--28, 2021.

\bibitem[Wager and Athey(2018)]{wager2018estimation}
Stefan Wager and Susan Athey.
\newblock Estimation and inference of heterogeneous treatment effects using
  random forests.
\newblock \emph{Journal of the American Statistical Association}, 113\penalty0
  (523):\penalty0 1228--1242, 2018.

\bibitem[Wald(1945)]{wald1945sequential}
A~Wald.
\newblock Sequential tests of statistical hypotheses.
\newblock \emph{Annals of Mathematical Statistics}, 1945.

\bibitem[Waudby-Smith and Ramdas(2020)]{waudby2020estimating}
Ian Waudby-Smith and Aaditya Ramdas.
\newblock Estimating means of bounded random variables by betting.
\newblock \emph{arXiv preprint arXiv:2010.09686}, 2020.

\bibitem[Yao et~al.(2021)Yao, Chu, Li, Li, Gao, and Zhang]{yao2021survey}
Liuyi Yao, Zhixuan Chu, Sheng Li, Yaliang Li, Jing Gao, and Aidong Zhang.
\newblock A survey on causal inference.
\newblock \emph{ACM Transactions on Knowledge Discovery from Data (TKDD)},
  15\penalty0 (5):\penalty0 1--46, 2021.

\bibitem[Yu et~al.(2020)Yu, Lu, and Song]{sst2021}
Miao Yu, Wenbin Lu, and Rui Song.
\newblock A new framework for online testing of heterogeneous treatment effect.
\newblock In \emph{Proceedings of the AAAI Conference on Artificial
  Intelligence}, volume~34, pages 10310--10317, 2020.

\bibitem[Yu et~al.(2021)Yu, Lu, and Song]{subtle2021}
Miao Yu, Wenbin Lu, and Rui Song.
\newblock Online testing of subgroup treatment effects based on value
  difference.
\newblock In \emph{2021 IEEE International Conference on Data Mining (ICDM)},
  pages 1463--1468. IEEE, 2021.

\bibitem[Zhang et~al.(2018)Zhang, Seibold, Vettore, Song, and
  Fran{\c{c}}ois]{zhang2018subgroup}
Zhongheng Zhang, Heidi Seibold, Mario~V Vettore, Woo-Jung Song, and Vieille
  Fran{\c{c}}ois.
\newblock Subgroup identification in clinical trials: an overview of available
  methods and their implementations with r.
\newblock \emph{Annals of translational medicine}, 6\penalty0 (7), 2018.

\end{thebibliography}


\newpage
\appendix

\renewcommand{\thetable}{S\arabic{table}}
\renewcommand{\thefigure}{S\arabic{figure}}

\section{Early Stopping Methods}
\label{app:esmethods}

\begin{table*}[ht]
\centering
  \caption{Comparison of stopping tests. We categorize commonly-used stopping tests into three categories---frequentist, sequential, and Bayesian---and describe their key features. Stopping type describes whether the stopping test is used to stop for benefit (B), harm (H), or futility (F). Stopping criteria refers to the statistic that is used to make the stopping decision. Monitoring frequency describes how often the method allows peeking at the data; for example, frequentist methods only allow for a small number of interim checkpoints, while sequential methods allow for continuously monitoring outcomes as they are observed. Finally, the test type describes what kinds of alternate hypotheses the method allows for testing; the SPRT, for example, only allows the investigator to test simple vs. simple hypotheses.}
  \vspace{0.2cm}
  \label{table:stopping}
  \begin{adjustbox}{width=\textwidth}
  \begin{tabular}{l l l l l l}
    \toprule
    Stopping Test & Stopping Type & Stopping Criteria & Monitoring Frequency & Test Types \\
    \midrule
    Frequentist  / group sequential & & & \\
    \quad Pocock \citep{pocock1977} & B/H & p-value & Limited checks, pre-planned  & All \\ 
    \quad O'Brien-Fleming \citep{obf1979} & B/H & p-value & Limited checks, pre-planned & All \\ 
    \quad Alpha-spending \citep{demets1994interim}  & B/H & p-value & Limited checks, flexible  & All \\
    \quad Conditional power \citep{condpower2005review}  & F & Conditional power & Limited checks, flexible  & All \\
    Sequential / quasi-Bayesian & & & \\
    \quad SPRT \citep{wald1945sequential} & B/H/F & Likelihood ratio & Continuous & Simple \\ 
    \quad 2-SPRT \citep{sprtlorden1976} & B/H/F & Likelihood ratio & Continuous & One-sided \\ 
    \quad Max-SPRT \citep{maxsprt} & B/H & Likelihood ratio & Continuous & Two-sided \\  
    \quad Mixture-SPRT \citep{robbins1970statistical, johari2017peeking} & B/H & Likelihood ratio / always-valid p-value & Continuous & Two-sided \\
    \quad Martingale-based \citep{balsubramani2014sharp,howard2021time} & B/H & Always-valid p-value & Continuous & All \\
    Bayesian & & & \\
    \quad Parameter estimation \citep{stucchio2015bayesian} & B/H & Posterior expected loss & Continuous & All \\ 
    \quad Hypothesis testing \citep{deng2016continuous} & B/H & Posterior odds & Continuous & All \\ 
    \quad Predictive probability \citep{saville2014utility} & F & Posterior predictive probability & Continuous & All \\ 
    \bottomrule
  \end{tabular}
  \end{adjustbox}
  \smallskip
\end{table*}

\section{Adapting the SPRT for Early Stopping}
\label{adaptingsprt}

Here, we briefly describe how to adapt SPRT-based tests for early stopping. This largely follows from Section 4.3 in \cite{johari2017peeking}. Recall that for every participant $i$, we observe their treatment assignment $d_i$, outcome $y_i$, and covariates $x_i$. We now slightly update this notation. Assume that at time $t$, we observe two participants, one treated, one untreated. Let $y_t^{(1)}$ and $y_t^{(0)}$ denote the outcomes for the treated and untreated participants respectively, and $x_t$ denote their joint covariates. We define the difference in outcomes observed at time $t$, $z_t = y_t^{(1)} - y_t^{(0)}$. 

We can now use Wald's SPRT \cite{wald1945sequential}, mSPRT \cite{johari2017peeking}, and other sequential tests in this setup. For example, assume that $z_t$ follows a known distribution with probability density function ${p}$ and mean parameter $\mu$. Then, to test between the null hypothesis of no effect $H_0: \mu=0$ against an alternate hypothesis of a specific harmful effect $H_1: \mu= \beta$, Wald's SPRT would use
\begin{align*}
    \lambda_{2n}^{SPRT} &= \sum_{t=1}^n \log \frac{p(z_t | \mu = \beta)}{p(z_t | \mu = 0)} \\
        b^{SPRT}(\alpha) &= -\log \alpha
\end{align*}
Note that we denote the statistic as $\lambda_{2n}^{SPRT}$, not $\lambda_{n}^{SPRT}$, as it uses $2n$ observations ($n$ treated, $n$ untreated). For the special case where $p$ is the Gaussian probability density function, the test statistic reduces to
\begin{align*}
    \lambda_{2n}^{SPRT} = \beta \sum_{t=1}^n  z_t - \frac{n}{2} \beta^2.
\end{align*}

\newpage 
\section{Weighted Test Statistics}
\label{wtstats}

We present examples of weighted test statistics for some commonly-used stopping tests in \cref{table:wtstats}. Note that weighted inference is already implemented in standard data analysis packages for a wide range of estimators. For example, for a group-sequential test (e.g., O'Brien-Fleming) that considers p-values from a generalized linear model, standard packages in R \cite{rcite} and Python \cite{statsmodels} allow for user-defined regression weights. 

\begin{table*}[!ht]
\centering
  \caption{Weighted test statistics for commonly used stopping tests.}
  \label{table:wtstats}
  \begin{adjustbox}{width=\textwidth}
  \begin{tabular}{l c c}
    \toprule
    Stopping Test & Test Statistic & Weighted Test Statistic\\
    \midrule
    Group-sequential z-test & $\frac{\sqrt{n}}{\sqrt{2 \sigma^2}} \Bigg(\frac{\sum_{i = 1}^n y_i d_i}{\sum_{1=1}^n d_i} - \frac{\sum_{i = 1}^n y_i (1-d_i)}{\sum_{1=1}^n (1-d_i)}\Bigg)$ & $\frac{\sqrt{\sum_{1=1}^n w_i }}{\sqrt{2 \sigma^2}} \Bigg(\frac{\sum_{i = 1}^n w_i y_i d_i}{\sum_{1=1}^n w_id_i} - \frac{\sum_{i = 1}^n  w_i y_i (1-d_i)}{\sum_{1=1}^n w_i (1-d_i)}\Bigg)$\\
    & & \\
    Generic SPRT  & $\sum_{t=1}^n \log \frac{p(z_t | \mu = \beta)}{p(z_t | \mu = 0)}$ & $\sum_{t=1}^n w_t \log \frac{p(z_t | \mu = \beta)}{p(z_t | \mu = 0)}$\\
    & & \\
    Gaussian SPRT  & $\beta \sum_{t=1}^n  z_t - \frac{n}{2} \beta^2$& $\beta \sum_{t=1}^n  w_tz_t - \frac{\sum_{t=1}^n w_t}{2} \beta^2$\\
    & & \\
    Gaussian mSPRT & $\sqrt{\frac{2\sigma^2}{2\sigma^2 + \tau^2n}} \exp \Big\{\frac{\tau^2 (\sum_{t=1}^n z_t -\theta_0n)^2}{4\sigma^2(2\sigma^2 + \tau^2n)} \Big\}$ & $\sqrt{\frac{2\sigma^2}{2\sigma^2 + \tau^2 \sum_{t=1}^n w_t}} \exp \Big\{\frac{\tau^2 (\sum_{i=1}^n w_t z_t - \theta_0 \sum_{t=1}^n w_t)^2}{4\sigma^2(2\sigma^2 + \tau^2 \sum_{t=1}^n w_t)} \Big\}$\\
    \cite{johari2017peeking} & & \\
    Gaussian maxSPRT & $\frac{\sum_{t=1}z_t}{n}
     \sum_{t=1}^n  z_t - \frac{n}{2} \Big( \frac{\sum_{t=1}z_t}{n}\Big)^2$ & $\frac{\sum_{t=1}w_tz_t}{\sum_{t=1}w_t}
     \sum_{t=1}^n  w_tz_t - \frac{\sum_{t=1}w_t}{2} \Big( \frac{\sum_{t=1}w_tz_t}{\sum_{t=1}w_t}\Big)^2$\\
    (based on \cite{maxsprt}) & &\\
    \bottomrule
  \end{tabular}
  \end{adjustbox}
  \smallskip
\end{table*}

\section{CATE Estimation with No Confounding}
\label{app:noconfounding}

Most CATE estimation methods are designed to control for confounders, that is, variables that affect both the treatment status $D$ and observed outcome $Y$. Our work, however, considers experiments in which $D$ is randomly administered; there is no confounding, as $D$ is independent of all variables  except $Y$. The CATE estimation method thus only needs to infer how the treatment effect varies with the covariates $X$ and does not need to correct for the relationship between $X$ and $D$. This simplifies the application of several CATE estimation methods; we discuss a few specific examples below. In short, many methods use estimates of the treatment propensity $\eta(x) = \P(D=1|X=x)$ to correct for confounding. In our setting, we can simply set $\eta(x) = 0.5 \; \forall \; x$. 

\begin{itemize}
    \item \textbf{Causal forests \quad} While the original causal forest algorithm \cite{wager2018estimation} did not estimate a propensity score, later improvements \cite{athey2019grf, nie2021quasi} use estimates of $\eta(x)$ to ``orthogonalize'' the estimator. In practice, this orthogonalization is essential in obtaining accurate CATE estimates from observational data \cite{athey2019grf}. However, it is not required in our experimental setting; these improved algorithms \cite{athey2019grf, nie2021quasi} can be used by setting $\eta(x)$ to its true value (i.e., 0.5).
    \item \textbf{Meta-learners \quad} T-learners and S-learners (as described in \cite{kunzel2019metalearners}) do not use estimates of the propensity score and thus do not require any modifications. The X-learner \cite{kunzel2019metalearners} uses an estimate of $\eta(x)$ to weight the predictions of models trained separately on the treatment and control groups   \citep[see][equation 9]{kunzel2019metalearners}. These weights can be replaced by 0.5 in our setting. 
    \item \textbf{Linear Models \quad} Linear CATE models
    often use estimates of $1/\eta(x)$ as regression weights. This approach ensures that the estimation is ``doubly robust'': as long as either the model of treatment assignment or outcome heterogeneity is correctly specified, the resulting regression coefficients are unbiased \cite{robins1994estimation}. In our setting, such reweighting is unnecessary, as $1/\eta(x)$ is known and the same for all observations.
\end{itemize}

Not needing to estimate the propensity score has two key benefits. First, it reduces the computational complexity of the problem. Repeated fitting techniques like cross-fitting \citep{chernozhukov2017double} are often used to reduce the bias of simultaneous propensity score and treatment effect estimation, but  these procedures increase the computational cost of CATE estimation, especially for large datasets. Second, it protects against inaccurate estimates of the treatment propensity. Poor estimates of $\eta(x)$ can lead to highly biased treatment effect estimates \cite{drake1993effects}. By explicitly specifying $\eta(x)=0.5$, we avoid 
spuriously inferring any relationships between $X$ and $D$ that would create such bias.

\section{Proofs}
\label{lemmaproofs}
We now present proofs for all theoretical results described in the main text. Note that we repeatedly use the following property. For any events $A$ and $B$,
\begin{align}
    \P(A,B) \leq \min\{\P(A), \P(B)\}.
    \label{eq:prob}
\end{align} 

\subsection{\texorpdfstring{\pcref{lemma:optostop}}{Prop 3.1}}
\label{proof3.1}

We argue that the stopping probability of the test on $S_n^*$ converges to 1. Recall that 
\begin{align*}
    \lambda_n^* &=\frac{\Delta_n^*}{\hat{\sigma}_n^*}.
\end{align*}
By assumption, $\lambda_n^* - \mu^*/\hat{\sigma}_n^* \xrightarrow{d} \mathcal{N}(0,1)$ and $\small \sqrt{|S_n^*|}\hat{\sigma}^*_n /\sigma^* \xrightarrow{p} 1$. Define $\rho_n^* = \sqrt{|S_n^*|} \Delta_n^* / \sigma^*$. Then, by Slutsky's theorem,
\begin{align*}
    \rho_n^* - \mu^*\sqrt{|S_n^*|} / \sigma^* \xrightarrow{d} \mathcal{N}(0,1).
\end{align*}
Now, consider the probability that the test does not stop. For notational convenience, fix some $\alpha > 0$ and denote $b(\alpha)$ simply as $b$. Then,
\begin{align*}
    \P (\lambda^*_n \leq b) &= \P(\rho_n^* \leq b\rho_n^* / \lambda_n^*) \\
    &= \P(\rho_n^* \leq b\rho_n^* / \lambda_n^*\,,\, \rho_n^* / \lambda_n^* \geq 0.5) + \P(\rho_n^* \leq b\rho_n^* / \lambda_n^*\,,\, \rho_n^* / \lambda_n^* < 0.5) \\
    & \leq \P(\rho_n^* \leq b/2) + \P(\rho_n^* / \lambda_n^* < 0.5) \numberthis \label{eq: pineq}
\end{align*}
Now, we know that, 
\begin{align*}
    \frac{\rho_n^*}{\lambda_n^*} &= \frac{\sqrt{|S_n^*|} \Delta_n^* / \sigma^*}{{\Delta_n^*}/{\hat{\sigma}_n^*}} \\
    &= \frac{\sqrt{|S_n^*|}\hat{\sigma}_n^*}{\sigma^*} \\
    & \xrightarrow{p} 1.
\end{align*}
Hence, $\P(\rho_n^* / \lambda_n^* < 0.5) \rightarrow 0$. This bounds the second term in \cref{eq: pineq}. We now focus on the first term. Fix any $\epsilon > 0$. Since $|S_n^*| \rightarrow \infty$, there must exist some $n_\epsilon$ such that $b/2 - \mu^*\sqrt{|S_n^*|} /\sigma^* < \Phi^{-1}(\epsilon)$ for all $n > n_\epsilon$.
Assume $n > n_\epsilon$. Then, 
\begin{align*}
    \P(\rho_n^* \leq b/2) &= \P(\rho_n^* - \mu^*\sqrt{|S_n^*|} / \sigma^* \leq b/2 - \mu^*\sqrt{|S_n^*|} /\sigma^*) \\
    & \leq \P(\rho_n^* - \mu^*\sqrt{|S_n^*|} / \sigma^* \leq \Phi^{-1}(\epsilon)) \\
    &\rightarrow \Phi(\Phi^{-1}(\epsilon)) = \epsilon
\end{align*}
since $\rho_n^* - \mu^*\sqrt{|S_n^*|} / \sigma^* \xrightarrow{d} \mathcal{N}(0,1)$. Since this holds for any value of $\epsilon$, we have that $\P(\rho_n^* \leq b/2) \rightarrow 0$. Hence, from \cref{eq: pineq}, we have that $\P (\lambda^*_n \leq b) \rightarrow 0$, and thus 
\begin{align*}
    \P (\lambda^*_n > b) \rightarrow 1
\end{align*}
Hence, the stopping probability of the test on the harmed population converges to 1.

\subsection{\texorpdfstring{\pcref{lemma:wconvergence}}{Thm 3.2}}
\label{proof3.2}

Our entire proof will be carried out conditional on the value $x_i$. Recall that we define our weights as 
\begin{align*}
     \what = 1 - \Phi \Big (\frac{\delta - \tauhat(x_i)}{\sighat(x_i)} \Big)
\end{align*}
Further recall that the functions $\tauhat$ and $\sighat$ are, by construction, independent of $x_i$.

\paragraph{Error Bound} 
We first establish a bound on the difference between our estimated weights $\what$ and the optimal weights $w_i^*$. Consider two cases. 

\textbf{Case 1: $\tau(x_i) \leq 0$}. 
In this case, $w_i^* = 0$ by definition, and so we just need to prove a bound on the magnitude of $\what$. Using Taylor's theorem with Lagrange remainder, we know that $\exists h_n \in [0,1]$ such that
\begin{align*}
    \what &= 1- \Phi \Big (\frac{\delta - \tauhat(x_i)}{\sighat(x_i)} \Big) \\
    &= 1 - \Phi \Big (\frac{\delta - {\tau}(x_i)}{\sighat(x_i)} \Big) - \frac{\tau(x_i) - \tauhat(x_i)}{\sighat(x_i)} \phi \Big( \frac{\delta - {\tau}(x_i)}{\sighat(x_i)} + h_n \frac{\tau(x_i) - \tauhat(x_i)}{\sighat(x_i)} \Big) \\
    & \leq 1 - \Phi \Big (\frac{\delta - {\tau}(x_i)}{\sighat(x_i)} \Big) + \frac{|\tau(x_i) - \tauhat(x_i)|}{\sighat(x_i)} \phi \Big( \frac{\delta - {\tau}(x_i)}{\sighat(x_i)} + h_n \frac{\tau(x_i) - \tauhat(x_i)}{\sighat(x_i)} \Big)
\end{align*}
where $\phi$ is the standard Gaussian probability density function. Since $\delta - \tau(x_i) > 0$, we can use the Chernoff inequality to bound the first term,
\begin{align*}
    1 - \Phi \Big (\frac{\delta - {\tau}(x_i)}{\sighat(x_i)} \Big) & \leq \exp \Big( - \frac{({\delta - \tau}(x_i))^2}{2 \sighat^2(x_i)}\Big).
\end{align*}
We now focus on the second term, which contains 
\begin{align*}
    \phi \Big( \frac{\delta - {\tau}(x_i)}{\sighat(x_i)} + h_n &\frac{\tau(x_i) - \tauhat(x_i)}{\sighat(x_i)} \Big) 
    \\&= \frac{1}{\sqrt{2\pi}} \exp \Bigg\{ -\frac{1}{2} \Big( \frac{\delta - {\tau}(x_i)}{\sighat(x_i)} + h_n \frac{\tau(x_i) - \tauhat(x_i)}{\sighat(x_i)} \Big)^2\Bigg\} \\
    &= \frac{1}{\sqrt{2\pi}} \exp \Bigg\{ -\frac{1}{2} \Bigg[ \Big(\frac{\delta - {\tau}(x_i)}{\sighat(x_i)}\Big)^2 + h_n^2 \Big(\frac{\tau(x_i) - \tauhat(x_i)}{\sighat(x_i)}\Big)^2  - \\ 
    & \quad \quad \quad \quad \quad \quad \quad \quad \quad  2h_n \frac{(\delta - \tau(x_i))( \tauhat(x_i) - \tau(x_i)}{\sighat^2(x_i)}\Bigg]\Bigg\} \\
    & \leq \exp \Bigg\{ -\frac{1}{2} \Bigg[ \Big(\frac{\delta - {\tau}(x_i)}{\sighat(x_i)}\Big)^2 - 2h_n \frac{(\delta - \tau(x_i))| \tauhat(x_i) - \tau(x_i)|}{\sighat^2(x_i)}\Bigg]\Bigg\} \\
    & \leq \exp \Bigg\{ -\frac{1}{2} \Bigg[ \Big(\frac{\delta - {\tau}(x_i)}{\sighat(x_i)}\Big)^2 - 2 \frac{(\delta - \tau(x_i))| \tauhat(x_i) - \tau(x_i)|}{\sighat^2(x_i)}\Bigg]\Bigg\},
\end{align*}
since $h_n \in [0,1]$. Thus, we have that 
\begin{align*}
\resizebox{1\hsize}{!}{%
        $\what \leq \exp \Bigg\{ - \frac{(\delta-\tau(x_i))^2}{2\sighat^2(x_i)} \Bigg\} +  \frac{|\tau(x_i) - \tauhat(x_i)|}{\sighat(x_i)} \exp \Bigg\{ - \frac{(\delta - {\tau}(x_i))^2}{2\sighat^2(x_i)} + \frac{(\delta - \tau(x_i))| \tauhat(x_i) - \tau(x_i)|}{\sighat^2(x_i)} \Bigg\}.$}%
\end{align*}
\textbf{Case 2: $\tau(x_i) > 0$}. In this case, $w_i^* = 1$. Thus, 
\begin{align*}
    |\what - w_i^*| &= 1 - \what  \\
    &=\Phi \Big (\frac{\delta - \tauhat(x_i)}{\hat{\sigma}_i(x_i)} \Big) \\  
    &= \Phi \Big (\frac{\delta - {\tau}(x_i)}{\sighat(x_i)} \Big) + \frac{|\tau(x_i) - \tauhat(x_i)|}{\sighat(x_i)} \phi \Big( \frac{\delta - {\tau}(x_i)}{\sighat(x_i)} + h_n \frac{\tau(x_i) - \tauhat(x_i)}{\sighat(x_i)} \Big) \\
    &= 1- \Phi \Big (\frac{{\tau}(x_i)- \delta}{\sighat(x_i)} \Big) + \frac{|\tau(x_i) - \tauhat(x_i)|}{\sighat(x_i)} \phi \Big( \frac{\delta - {\tau}(x_i)}{\sighat(x_i)} + h_n \frac{\tau(x_i) - \tauhat(x_i)}{\sighat(x_i)} \Big) \\
    & \leq \exp \Bigg\{ - \frac{(\delta -\tau(x_i))^2}{2\sighat^2(x_i)} \Bigg\} \; +  \\
    & \quad \frac{|\tau(x_i) - \tauhat(x_i)|}{\sighat(x_i)} \exp \Bigg\{ - \frac{(\delta - {\tau}(x_i))^2}{2\sighat^2(x_i)} + \frac{(\tau(x_i)-\delta)| \tauhat(x_i) - \tau(x_i)|}{\sighat^2(x_i)}) \Bigg\},
\end{align*}
using the same Taylor expansion and Chernoff bound from above. 

In summary, we have established that 
\begin{align*}
\resizebox{1\hsize}{!}{$|\what - w_i^*| \leq \exp \Bigg\{ - \frac{(\delta -\tau(x_i))^2}{2\sighat^2(x_i)} \Bigg\} +  \frac{|\tau(x_i) - \tauhat(x_i)|}{\sighat(x_i)}  \exp \Bigg\{ - \frac{(\delta - {\tau}(x_i))^2}{2\sighat^2(x_i)} + \frac{|\tau(x_i)-\delta||\tauhat(x_i) - \tau(x_i)|}{\sighat^2(x_i)}) \Bigg\}.$}
\end{align*}

\paragraph{Consistency of $\what$} 
We now establish the consistency of $\what$ using the derived bound. We assume that $\small \delta < \inf_{x: \tau(x) > 0} \tau(x)$ and, given $\small x_i$, $\small \tauhat(x_i) \xrightarrow{p} \tau(x_i)$ and $\small \sighat(x_i) \xrightarrow{p} 0$.

Define $a_i = |\tau(x_i) - \delta|/\sqrt{2}$ and $Z_{i,n} = \frac{\tau(x_i) - \tauhat(x_i)}{\sighat(x_i)}$. Note that $a_i$ is strictly positive and a constant (given $x_i$). From the error bound, we have that
\begin{align*}
|\what - w_i^*| \leq \exp(-a_i^2/ \sighat(x_i)^2) + |Z_{i,n}|\exp[-a_i^2/ \sighat(x_i)^2 + \sqrt{2} a_i |Z_{i,n}|/\sighat(x_i)].
\end{align*}
Now, we fix any $\epsilon > 0$. Applying the law of total probability and equation \cref{eq:prob}, we find that
\begin{align*}
    \P&(|\what - w_i^*| > \epsilon) \\
    \leq &\P(\exp(-a_i^2/ \sighat(x_i)^2) + |Z_{i,n}| \exp[-a_i^2/ \sighat(x_i)^2 + \sqrt{2} a_i|Z_{i,n}|/\sighat(x_i)]  > \epsilon) \\
    = &\P(\exp(-a_i^2/ \sighat(x_i)^2) + |Z_{i,n}|\exp[-a_i^2/ \sighat(x_i)^2 + \sqrt{2} a_i|Z_{i,n}|/\sighat(x_i)]  > \epsilon,\\ & \quad |Z_{i,n} \sighat(x_i)|  > a_i/(2\sqrt{2})) \; + \\
    & \P(\exp(-a_i^2/ \sighat(x_i)^2) + |Z_{i,n}|\exp[-a_i^2/ \sighat(x_i)^2 + \sqrt{2} a_i|Z_{i,n}|/\sighat(x_i)]  > \epsilon, \\
    & \quad |Z_{i,n} \sighat(x_i)|  \leq a_i/(2\sqrt{2})) \\
    \leq & \P(|Z_{i,n} \sighat(x_i)|  > a_i/(2\sqrt{2})) + \\
    & \P(\exp(-a_i^2/ \sighat(x_i)^2) + |Z_{i,n}|\exp[-a_i^2/ \sighat(x_i)^2 + \sqrt{2} a_i|Z_{i,n}|/\sighat(x_i)]  > \epsilon \;,\;\\
    & \quad |Z_{i,n} \sighat(x_i)|  \leq a_i/(2\sqrt{2})) \\
    \leq & \P(|Z_{i,n} \sighat(x_i)|  > a_i/(2\sqrt{2})) + \\
    & \quad \P\Bigg(\Big(1+\frac{a_i}{2\sqrt{2} \sighat(x_i)}\Big)\exp\Big(\frac{-a_i^2}{2 \sighat(x_i)^2}\Big)  > \epsilon \;,\; |Z_{i,n} \sighat(x_i)|  \leq a_i/(2\sqrt{2})\Bigg) \\
    \leq & \P(|Z_{i,n} \sighat(x_i)|  > a_i/(2\sqrt{2})) + \P\Bigg(\Big(1+\frac{a_i}{2\sqrt{2} \sighat(x_i)}\Big)\exp\Big(\frac{-a_i^2}{2 \sighat(x_i)^2}\Big)  > \epsilon\Bigg) \numberthis \label{eq:aizi}
\end{align*}
where the second-to-last inequality follows by substituting $|Z_{i,n} \sighat(x_i)|  \leq a_i/(2\sqrt{2})$ into the bound and algebraically simplifying, and the last inequality follows from \cref{eq:prob}. Now, the first term on the right converges to $0$ since $Z_{i,n} \sighat(x_i) = \tau(x_i) - \tauhat(x_i) \xrightarrow{p} 0$. We examine the second term. Define $\xi_{i,n} = a_i / (\sqrt{2}\sighat(x_i))$.
Since a convex function is no smaller than its tangent line, we have $(1+\xi_{i,n}/2) \leq \exp(\xi_{i,n}/2)$. Then, 
\begin{align*}
    \P\Bigg(\Big(1+\frac{a_i}{2\sqrt{2} \sighat(x_i)}\Big)\exp\Big(\frac{-a_i^2}{2 \sighat(x_i)^2}\Big)  > \epsilon\Bigg) & \leq \P\big( (1+\xi_{i,n}/2)\exp(-\xi_{i,n}^2) > \epsilon \big) \\
    & \leq \P( \exp(-\xi_{i,n}^2 + \xi_{i,n}/2) > \epsilon)
\end{align*}
which converges to 0 as $\sighat(x_i) \xrightarrow{p} 0$ (and thus $\xi_{i,n}$ diverges in probability to $\infty$). Thus, we have shown that
\begin{align*}
    \P(|\what - w_i^*| > \epsilon) \rightarrow 0,
\end{align*}
which establishes the desired consistency.

\subsection{\texorpdfstring{\pcref{lemma:type1error}}{Thm 3.3}}
\label{proof3.3}
Suppose that no participant group is harmed so that $w_i^* = 0$ for all $i$.
Let $a = \delta/\sqrt{2}$, and  $a_i = |\tau(x_i) - \delta|/\sqrt{2}$, and note that $a\leq a_i$ for each $i$ as each $\tau(x_i) \leq 0$.
Thus, if $|Z_{i,n} \sighat(x_i)|  \leq a/(2\sqrt{2})$, then $|Z_{i,n} \sighat(x_i)|  \leq a_i/(2\sqrt{2})$ and hence, as in the proof of \cref{lemma:wconvergence},
\begin{align*}
\what &\leq \Big(1+\frac{a_i}{2\sqrt{2} \sighat(x_i)}\Big)\exp\Big(\frac{-a_i^2}{2 \sighat(x_i)^2}\Big).
\end{align*}
Fix any $\epsilon > 0$, let $\xi_{i,n} = a_i / (\sqrt{2}\sighat(x_i))$, and define $f(x) = (1+x/2) \exp(-x^2)$. As we did for equation \cref{eq:aizi} in
the proof of \cref{lemma:wconvergence}, we apply the law of total probability and equation \cref{eq:prob} to write,
\begin{talign*}
\P(\sum_i \what > \epsilon)  \leq &
\P( \max_{i\leq n} |Z_{i,n} \sighat(x_i)|  > a/(2\sqrt{2})) + \\
    & \P(\sum_i \what > \epsilon \; , \; \max_{i\leq n} |Z_{i,n} \sighat(x_i)|  \leq a/(2\sqrt{2})) \\
    \leq & 
    \P( \max_{i\leq n} |Z_{i,n} \sighat(x_i)|  > a/(2\sqrt{2})) + \\
    & \P(\sum_i (1+ a_i/(2\sqrt{2}\sighat(x_i)))\exp(-a_i^2/ (2\sighat(x_i)^2)) > \epsilon) \\
    = &\P( \max_{i\leq n} |Z_{i,n} \sighat(x_i)|  > a/(2\sqrt{2})) + 
    \P(\sum_i f(\xi_{i,n})  > \epsilon).
\end{talign*}
Since $\max_{i\leq n}|Z_{i,n} \sighat(x_i)| = o_p(1)$ by assumption, we have $\P( \max_{i\leq n} |Z_{i,n} \sighat(x_i)|  > a/(2\sqrt{2})) \to 0$.
Meanwhile, since a convex function is no smaller than its tangent line, we have $(1+x/2) \leq \exp(x/2)$ and hence
\begin{talign*}
f(x) 
    \leq \exp(-x^2 + x/2)
    \leq \exp(-x^2 + x^2/2 + 1/8)
    = \exp(-x^2/2 + 1/8)
\end{talign*}
where we used the arithmetic-geometric mean inequality in the final inequality.
Therefore, since $a \leq \min_i a_i$,
\begin{talign*}
    \P(\sum_i f(\xi_{i,n})  > \epsilon)
    &\leq
    \P(\sum_i \exp(-\xi_{i,n}^2/2 + 1/8)  > \epsilon)\\
    &\leq 
    \P(n \exp(-(\min_{i\leq n} \xi_{i,n}^2)/2 + 1/8)  > \epsilon) \\
    & \leq 
    \P(n \exp(-(\min_{i\leq n} a_i^2/\sighat^2(x_i))/4 + 1/8)  > \epsilon) \\
    & \leq \P(\exp(-a^2 / (4\max_{i\leq n}\sighat^2(x_i)) + 1/8 + \log n)  > \epsilon)
\end{talign*}
Since $\max_{i\leq n} \sighat^2(x_i) = o_p(1/\log(n))$ by assumption, we further have $\P(\sum_i f(\xi_{i,n})  > \epsilon) \to 0$.
Since $\epsilon>0$ was arbitrary, we have shown that $\sum_{i=1}^n \what \xrightarrow{p} 0$.

Now, recall the form of the CLASH weighted z-statistic 
\begin{align*}
    \lambda_{n}^w &= \frac{\sqrt{\sum_{1=1}^n \what }}{\sqrt{2 \sigma^2}} \Bigg(\frac{\sum_{i = 1}^n \what y_i d_i}{\sum_{1=1}^n \what d_i} - \frac{\sum_{i = 1}^n  \what y_i (1-d_i)}{\sum_{1=1}^n \what  (1-d_i)}\Bigg)
\end{align*}
Define $c$ such that  $|y_i| \leq c$. We know $c$ must exist, since the outcomes are bounded by assumption. Then, we have that
\begin{align*}
    |\lambda_{n}^w| &= \Bigg|\frac{\sqrt{\sum_{1=1}^n \what  }}{\sqrt{2 \sigma^2}} \Bigg(\frac{\sum_{i = 1}^n \what y_i d_i}{\sum_{1=1}^n \what d_i} - \frac{\sum_{i = 1}^n  \what y_i (1-d_i)}{\sum_{1=1}^n \what  (1-d_i)}\Bigg)\Bigg| \\
    &\leq \frac{\sqrt{\sum_{1=1}^n \what  }}{\sqrt{2 \sigma^2}} \Bigg(\frac{\sum_{i = 1}^n \what c d_i}{\sum_{1=1}^n \what d_i} - \frac{\sum_{i = 1}^n  \what (-c) (1-d_i)}{\sum_{1=1}^n \what  (1-d_i)}\Bigg) \\
    &= \frac{\sqrt{\sum_{1=1}^n \what  }}{\sqrt{2 \sigma^2}} 2c \\
    &\xrightarrow{p} 0
\end{align*}
Thus, we see that the weighted test statistic $\lambda_{n}^w$ converges in probability to 0. Now, the test can only reject if $\lambda_{n}^w$ exceeds a fixed and positive bound $b_\alpha$. By the definition of convergence in probability, this probability must shrink to zero.

\newpage
\renewcommand\qedsymbol{}
\section{Additional Theoretical Results}
\label{morelemmas}
\begin{corollary}[CLASH weights converge quickly to optimal weights]
    \label{lemma:fastconv} Assume that $\sighat^2(x_i) = o_p(1/\log(n))$. Then, \cref{lemma:wconvergence} implies that $|\what - w_i^*| = o_p(1/n)$.
\end{corollary}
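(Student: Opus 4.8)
The plan is to feed the explicit, non-asymptotic error bound of \cref{lemma:wconvergence} into a ``tail beats polynomial'' argument: since $\sighat^2(x_i)=o_p(1/\log n)$, the Gaussian tail $\exp\{-c/\sighat^2(x_i)\}$ shrinks faster than any fixed power of $1/n$, which swamps both the extra factor $n$ and the multiplicative prefactor $|\tauhat(x_i)-\tau(x_i)|/\sighat(x_i)$ present in the bound. As in the proof of \cref{lemma:wconvergence}, I would argue conditionally on $x_i$ and keep that result's hypotheses in force: $\delta<\inf_{x:\tau(x)>0}\tau(x)$, $\tauhat(x_i)\xrightarrow{p}\tau(x_i)$, and $\sighat(x_i)\xrightarrow{p}0$. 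The first of these makes $a_i:=|\tau(x_i)-\delta|/\sqrt2$ a strictly positive constant (either $\tau(x_i)\le 0<\delta$, or $\tau(x_i)>\delta$), which is what keeps the exponential genuinely small.

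First I would collapse the two-term bound of \cref{lemma:wconvergence} into a single exponential on a high-probability event. Let $R_n:=|\tauhat(x_i)-\tau(x_i)|$ and $E_n:=\{R_n\le a_i/(2\sqrt2)\}$. Substituting $(\tau(x_i)-\delta)^2=2a_i^2$ and $|\tau(x_i)-\delta|=\sqrt2\,a_i$ into the bound, the computation already carried out in the proof of \cref{lemma:wconvergence} shows that on $E_n$,
\[
|\what-w_i^*|\ \le\ \bigl(1+\tfrac{a_i}{2\sqrt2\,\sighat(x_i)}\bigr)\exp\bigl(-\tfrac{a_i^2}{2\sighat^2(x_i)}\bigr)\ =\ \bigl(1+\tfrac12\xi_{i,n}\bigr)\exp\bigl(-\xi_{i,n}^2\bigr),\qquad \xi_{i,n}:=\tfrac{a_i}{\sqrt2\,\sighat(x_i)}.
\]
Using that a convex function lies above its tangent line, $1+\tfrac12\xi_{i,n}\le\exp(\tfrac12\xi_{i,n})$, and then the arithmetic--geometric mean inequality $\tfrac12\xi_{i,n}\le\tfrac12\xi_{i,n}^2+\tfrac18$, I get $|\what-w_i^*|\le e^{1/8}\exp\bigl(-\tfrac14 a_i^2/\sighat^2(x_i)\bigr)$ on $E_n$. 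On $E_n^c$ I would not attempt to bound $|\what-w_i^*|$; since $\tauhat(x_i)\xrightarrow{p}\tau(x_i)$, it is enough that $\P(E_n^c)\to0$.

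It then remains to deduce $n\,|\what-w_i^*|\xrightarrow{p}0$. Fixing $\eta>0$, write $\P(n|\what-w_i^*|>\eta)\le\P(E_n^c)+\P\bigl(n\,e^{1/8}\exp(-\tfrac14 a_i^2/\sighat^2(x_i))>\eta\bigr)$; the first term vanishes. For the second, set $\epsilon:=a_i^2/8$: on the event $\{\sighat^2(x_i)\le\epsilon/\log n\}$ one has $\tfrac14 a_i^2/\sighat^2(x_i)\ge\tfrac{a_i^2}{4\epsilon}\log n=2\log n$, so $n\,e^{1/8}\exp(-\tfrac14 a_i^2/\sighat^2(x_i))\le e^{1/8}n^{-1}\to0$; and $\sighat^2(x_i)=o_p(1/\log n)$ gives $\P(\sighat^2(x_i)>\epsilon/\log n)\to0$, whence the second probability is at most $\P(\sighat^2(x_i)>\epsilon/\log n)+\mathbbm{1}[e^{1/8}n^{-1}>\eta]\to0$. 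Thus $n\,|\what-w_i^*|\xrightarrow{p}0$, i.e.\ $|\what-w_i^*|=o_p(1/n)$.

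The only substantive step is the last one: converting the qualitative hypothesis $\sighat^2(x_i)=o_p(1/\log n)$ into the quantitative conclusion that $\exp(-c/\sighat^2(x_i))$ dominates the polynomial factor $n$. I handle this by conditioning on the high-probability event $\{\sighat^2(x_i)\le\epsilon/\log n\}$ with $\epsilon$ chosen small relative to $a_i^2$; everything else is a verbatim reuse of the inequalities established in the proof of \cref{lemma:wconvergence}.
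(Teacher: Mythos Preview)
Your proof is correct and follows essentially the same route as the paper's: split on the event $\{|\tauhat(x_i)-\tau(x_i)|\le a_i/(2\sqrt2)\}$, apply the collapsed bound $(1+\xi_{i,n}/2)\exp(-\xi_{i,n}^2)$ from the proof of \cref{lemma:wconvergence}, sharpen via convexity and AM--GM to $e^{1/8}\exp(-\xi_{i,n}^2/2)$, and conclude using $\sighat^2(x_i)=o_p(1/\log n)$. The only difference is cosmetic---you make the final $o_p$ argument explicit by fixing $\epsilon=a_i^2/8$ and conditioning on $\{\sighat^2(x_i)\le\epsilon/\log n\}$, whereas the paper leaves that step implicit.
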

\begin{proof}
    Define $a_i = |\tau(x_i) - \delta|/\sqrt{2}$ and $Z_{i,n} = \frac{\tau(x_i) - \tauhat(x_i)}{\sighat(x_i)}$. Fix any $\epsilon > 0$, let $\xi_{i,n} = a_i / (\sqrt{2}\sighat(x_i))$, and define $f(x) = (1+x/2) \exp(-x^2)$. Then, as in equation \cref{eq:aizi} in the proof of \cref{lemma:wconvergence}, we apply the law of total probability and equation \cref{eq:prob} to yield
    \begin{align*}
    \P(n|\what - w_i^*| > \epsilon) \leq \; & \P(|Z_{i,n} \sighat(x_i)|  > a_i/(2\sqrt{2})) \quad + \\
    &\P\Bigg(n\Big(1+\frac{a_i}{2\sqrt{2} \sighat(x_i)}\Big)\exp\Big(\frac{-a_i^2}{2 \sighat(x_i)^2}\Big)  > \epsilon\Bigg) \\
    = \; &\P(|Z_{i,n} \sighat(x_i)|  > a_i/(2\sqrt{2})) + 
    \P(n f(\xi_{i,n})  > \epsilon).
\end{align*}
The first term on the right converges to $0$ since $Z_{i,n} \sighat(x_i) = \tau(x_i) - \tauhat(x_i) \xrightarrow{p} 0$ (by the assumptions of \cref{lemma:wconvergence}). We focus on the second term. Since a convex function is no smaller than its tangent line, we have $(1+\xi_{i,n}/2) \leq \exp(\xi_{i,n}/2)$ and hence
\begin{talign*}
f(\xi_{i,n}) \leq \exp(-\xi_{i,n}^2 + \xi_{i,n}/2)
    \leq \exp(-\xi_{i,n}^2 + \xi_{i,n}^2/2 + 1/8)
    = \exp(-\xi_{i,n}^2/2 + 1/8)
\end{talign*}
where we used the arithmetic-geometric mean inequality in the final inequality. Thus, 
\begin{talign*}
    \P(n f(\xi_{i,n}) > \epsilon) 
    &\leq 
    \P(n\exp(-\xi_{i,n}^2/2 + 1/8) > \epsilon) \\
    &\leq 
    \P(\exp(-a_i^2 / 4\hat{\sigma}_{n,-i}^2 + 1/8 + \log n) > \epsilon) \\
    &\rightarrow 
    0,
\end{talign*}
since $\hat{\sigma}_{n,-i}(x_i)^2 = o_p(1/\log(n))$. Thus, $\P(n|\what - w_i^*| > \epsilon) \rightarrow 0$, and so by definition, $|\what - w_i^*| = o_p(1/n)$.
\end{proof}

\begin{theorem}[CLASH limits unnecessary stopping for the Gaussian SPRT]
Consider a stopping test with the Gaussian SPRT and weights estimated using CLASH. If $\max_{t \leq n} \sighat[t](x_t)^2 = o_p(1/\log(n))$, $\max_{t\leq n}|\tau(x_t) - \tauhat(x_t)| = o_p(1)$, and $z_t$ are uniformly bounded, then the stopping probability of the test converges to zero if no participant group is harmed.
\label{lemma:type1errorsprt}
\end{theorem}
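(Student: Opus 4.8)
The plan is to mirror the proof of \cref{lemma:type1error} almost verbatim, since the only thing that changes between the two theorems is the form of the weighted test statistic used in Stage~2. Throughout, suppose no participant group is harmed, so that $w_i^* = 0$ and $\tau(x_t) \le 0$ for every time step $t$. Recall from \cref{table:wtstats} that the CLASH-weighted Gaussian SPRT statistic is $\lambda_{2n}^{w} = \beta \sum_{t=1}^n \what[t] z_t - \tfrac{1}{2}\beta^2 \sum_{t=1}^n \what[t]$, where the weight $\what[t]$ for time step $t$ is formed exactly as in \cref{eq:clashweight}, using cross-validated estimates $\tauhat[t](x_t)$ and $\sighat[t](x_t)$ in place of $\tauhat(x_i)$ and $\sighat(x_i)$. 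The Wald/mSPRT-style stopping rule rejects only when $\lambda_{2n}^{w} > b^{SPRT}(\alpha) = -\log\alpha$, which is a fixed positive constant. So it suffices to show that $\lambda_{2n}^{w} \xrightarrow{p} 0$.

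First I would establish that $\sum_{t=1}^n \what[t] \xrightarrow{p} 0$. This is exactly the argument already carried out in the first half of the proof of \cref{lemma:type1error}: invoke the error bound of \cref{lemma:wconvergence}, set $a = \delta/\sqrt 2 \le a_t := |\tau(x_t) - \delta|/\sqrt 2$ (valid since each $\tau(x_t) \le 0$), split on the event $\{\max_{t \le n}|Z_{t,n}\sighat[t](x_t)| \le a/(2\sqrt 2)\}$ with $Z_{t,n} = (\tau(x_t) - \tauhat[t](x_t))/\sighat[t](x_t)$, and bound the resulting sum above by $n \exp\!\big(-a^2/(4\max_{t\le n}\sighat[t]^2(x_t)) + 1/8\big)$, which vanishes in probability because $\max_{t\le n}\sighat[t]^2(x_t) = o_p(1/\log n)$ and $\max_{t\le n}|\tau(x_t) - \tauhat[t](x_t)| = o_p(1)$. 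The only change is that the running index is over time steps $t \le n$ rather than participants $i \le n$, which affects none of the inequalities.

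Given $\sum_t \what[t] \xrightarrow{p} 0$, the conclusion follows from uniform boundedness of the paired differences: let $c$ satisfy $|z_t| \le c$ for all $t$ (such $c$ exists by assumption). Using $0 \le \what[t] \le 1$,
\begin{align*}
    |\lambda_{2n}^{w}| \;\le\; |\beta| \sum_{t=1}^n \what[t] |z_t| + \tfrac{1}{2}\beta^2 \sum_{t=1}^n \what[t] \;\le\; \Big(|\beta| c + \tfrac{1}{2}\beta^2\Big) \sum_{t=1}^n \what[t] \;\xrightarrow{p}\; 0 .
\end{align*}
Hence $\P(\lambda_{2n}^{w} > -\log\alpha) \to 0$, i.e., the stopping probability converges to zero.

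The main (and essentially only) obstacle is a cosmetic one: confirming that the SPRT statistic --- which, unlike the $z$-statistic of \cref{lemma:type1error} that scales like $\sqrt{\sum_t w_t}$ times a ratio of weighted sums, is controlled by the \emph{first power} of $\sum_t \what[t]$ including its deterministic $-\tfrac12\beta^2\sum_t \what[t]$ penalty term --- so that once the weight mass collapse $\sum_t \what[t]\xrightarrow{p}0$ is in hand the rest is immediate. If one wanted to also cover the generic (non-Gaussian) weighted SPRT of \cref{table:wtstats}, the identical argument applies provided the per-step log-likelihood ratios $\log\{p(z_t\mid\mu=\beta)/p(z_t\mid\mu=0)\}$ are uniformly bounded, which holds under bounded outcomes for the usual exponential-family models.
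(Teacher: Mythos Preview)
Your proposal is correct and follows essentially the same route as the paper: reuse the weight-mass collapse $\sum_t \what[t]\xrightarrow{p}0$ from the proof of \cref{lemma:type1error}, then exploit uniform boundedness of $z_t$ to kill the weighted SPRT statistic. The only cosmetic difference is that you bound $|\lambda_{2n}^{w}|$ while the paper bounds $\lambda_{2n}^{w}$ from above (dropping the nonpositive penalty term), which suffices since the threshold $-\log\alpha$ is positive.
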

\begin{proof}
    We adopt the setup for the SPRT described in \cref{adaptingsprt}. Recall that the Gaussian SPRT test statistic is given by \begin{align*}
        \lambda_{2n} &= \beta \sum_{t=1}^n  z_t - \frac{n}{2} \beta^2
    \end{align*}
    and the CLASH- weighted version of this statistic is given by 
    \begin{align*}
        \lambda^w_{2n} &= \beta \sum_{t=1}^n  \what[t] z_t - \frac{\sum_{t=1}^n  \what[t]}{2} \beta^2.
    \end{align*}
    Now, by assumption, there exists some $c$ that bounds $|z_t|$ for all $t$. Since no participant is harmed, $w_i^* = 0$ for all $i$. Similar to the proof in \cref{proof3.3}, if $\max_{i \leq n} \sighat(x_i)^2 = o_p(1/\log(n))$ and $\max_{i\leq n}|\tau(x_i) - \tauhat(x_i)| = o_p(1)$, then $\sum_{t=1}^n  \what[t] \rightarrow 0$. Thus, since $z_t$ are uniformly bounded by some $c$,
    \begin{align*}
        \lambda^w_{2n} &\leq  \beta \sum_{t=1}^n  \what[t] c \\
        & \xrightarrow{p} 0
    \end{align*}
    This shows that $\lambda^w_{2n} \xrightarrow{p} 0$ as $n \rightarrow \infty$. Thus, $\P(\lambda^w_{2n} > \log \alpha) \rightarrow 0$ for any fixed $\alpha> 0$, proving the claim.
\end{proof}

\newpage

\section{Stopping Only on the Harmed Group}
\label{app:stophetero}
If the investigators choose to stop the experiment only on the harmed group, they face two practical challenges: identifying the harmed group and ATE estimation. We discuss each of these below.

\textbf{Identifying the harmed group.} To stop the experiment only for harmed participants, investigators must first identify the harmed group. This is non-trivial, since group membership is unobserved. The distribution of CLASH weights at stopping time can help in this task: groups with estimated weights close to 1 are likely to be harmed. \cref{fig:empw} illustrates this in our empirical application: the estimated weights in Regions 1 and 2 are both close to 1, indicating that these are the groups on which to stop. With few covariates, investigators can manually inspect the weight distribution for each covariate combination to identify the harmed group. With many covariates, investigators can use a simple heuristic: a regression decision tree on the estimated CLASH weights can find the covariate values for which the weights are the largest. Limiting the depth of this tree can ensure that the identified group is actionable (i.e., it is possible to stop the experiment on it) and of non-trivial size.

There are alternative approaches to this harmed group identification task; for example, investigators can use subgroup identification methods on the raw outcomes (e.g. \cite{athey2021policy, bayescredsubgroups, zhang2018subgroup}). We are agnostic to the choice of method: investigators can pick the approach most appropriate for their domain.

\textbf{ATE estimation.} Stopping the experiment on only one group can affect inference at the end of the experiment, as the treatment is no longer randomly assigned across covariates. To estimate the whole population ATE, we recommend using inverse propensity weights (IPW) to correct for the induced selection bias. Let $\Tilde{g}$ be the group that the experiment is stopped on and $p(\Tilde{g})$ denote the proportion of the total population that comes from $\Tilde{g}$. $p(\Tilde{g})$ may either be known (e.g., from domain knowledge or prior experiments) or estimated from the interim data when the experiment is stopped on $\Tilde{g}$. Let $p_{collected}(\Tilde{g})$ denote the observed proportion of $\Tilde{g}$ in the data collected over the entire experiment (i.e., pre- and post-stopping on $\Tilde{g}$). Then, to estimate the whole-population ATE, observations from $\Tilde{g}$ should be assigned a weight of $p(\Tilde{g}) / p_{collected}(\Tilde{g})$ and all other observations should be assigned a weight of 1. This IPW approach will lead to unbiased ATE inference (see the example in \cref{table:ate}).

\newpage
\section{Additional Simulation Experiments}
\label{moresims}

In this section, we present detailed results from our simulation experiments. \cref{moregaussian} presents results with Gaussian outcomes and \cref{morette} presents results with time-to-event outcomes. In almost all settings, CLASH outperforms both the homogeneous and SUBTLE baselines. We note that CLASH's performance gains are relatively limited in three situations: (1) when the harmed group forms $50\%$ of the population (i.e., is no longer a true ``minority'') and the treatment has no effect on the remaining population, (2) when the experiment collects a very large number of covariates ($d=500$), and (3) when the experiment recruits a very small number of participants ($N=200$). However, even in these settings, CLASH's stopping probability is equivalent to the baselines. 

\subsection{Gaussian Outcomes}
\label{moregaussian}

\subsubsection{Setup}
\label{gaussiansetup}

The simulation setup is the same as described in \cref{sec:sims}; we include it here again for easy reference. We consider a randomized experiment that evaluates the effect of $D$ on $Y$. Participants come from two groups, with $G \in \{0,1\}$ indicating membership in a minority group. We do not observe $G$, but observe a set of binary covariates $X=[X_1, .., X_d]$, where $d$ varies between $3$ and $10$ and $p(X_j=1) = 0.5 \;\forall\; j$. $X$ maps deterministically to $G$, with $G = \prod_{j=1}^k X_j$. We vary $k$ between $1$ and $3$: the expected size of the minority thus varies between 12.5\% and 50\% (of all participants).
$Y$ is normally distributed, with $Y | G=0 \sim \mathcal{N}(\theta_0 D,1)$ and $Y | G=1 \sim \mathcal{N}(\theta_1 D,1)$. We vary $\theta_0$ between 0 and -0.1; the majority group is thus unaffected or benefited by the treatment. We vary $\theta_1$ between 0 and 1: the minority group is thus either unaffected or harmed.

The experiment runs for $N=4000$ participants and $T=2000$ time steps, recruiting one treated and one untreated participant at each step. The experiment has three checkpoints, with 1,000, 2,000, and 3,000 participants (corresponding to 25\%, 50\%, and 75\% of the total time duration). At each checkpoint, we run CLASH and compute its stopping probability across 1,000 replications. In Stage 1, CLASH uses a causal forest with 5-fold CV and $\delta = 0.1$. Standard errors were estimated by the causal forest itself (i.e., not via the bootstrap). In Stage 2, CLASH uses one of four commonly-used stopping tests: an OF-adjusted z-test, an mSPRT \cite{johari2017peeking}, a MaxSPRT \cite{maxsprt}, and a Bayesian estimation-based test.

Each replication uses its own random seed. One replication---including all combinations of effect size, minority group size, number of covariates and stopping tests---takes between 58 and 82 minutes to run on a single CPU (depending on the random seed). This yields a total compute time of approximately 1,200 hours for the Gaussian experiments. The simulations were run in parallel on an academic computing cluster with over 200 CPUs. Note that each replication used one CPU with 4GB of RAM.

\newpage
\subsubsection{Impact of Hyperparameter \texorpdfstring{$\delta$}{delta}}
\label{deltasims} 

We first discuss the choice of hyperparameter $\delta$. We recommend that investigators set this value to the minimum effect size of interest (MESI). Here, we evaluate three different values of the MESI: 0.05, 0.1, and 0.2. Note that an effect size of 0.05 is detectable in this experiment (with $N=4,000$) with 70\% power; we thus do not consider $\textup{MESI} < 0.05$, as such small effects would not be reliably detectable.

We present our results for CLASH with an OF-adjusted z-test and an mSPRT in \cref{fig:delta}. In most scenarios, all three settings of $\delta$ perform similarly. For the mSPRT with a $12.5\%$ minority group, we note that a larger value of $\delta$ corresponds with higher stopping probability for larger effects. This is intuitive, as the higher $\delta$ is, the more weight CLASH will assign to parts of the covariate space that show clear signs of harm. However, CLASH performs better than the homogeneous and SUBTLE baselines for all three values of $\delta$. Thus, the results presented in the main text are robust to an increase or decrease in this hyperparameter.

\begin{figure*}[ht]
\centering
\subfigure[O'Brien Fleming]{%
\includegraphics[width = 0.75\textwidth]{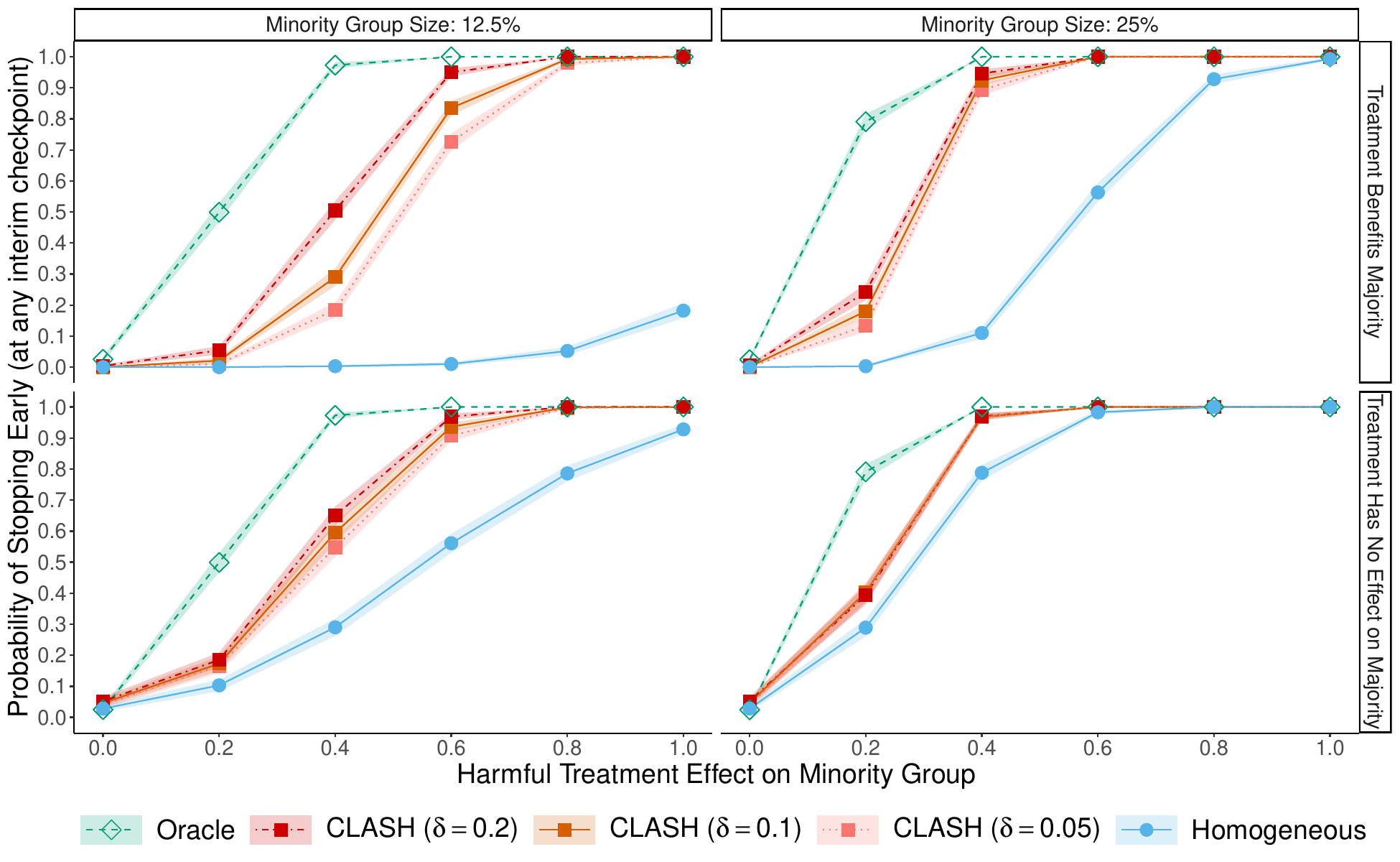}}%
\qquad
\subfigure[mSPRT]{%
\includegraphics[width = 0.75\textwidth]{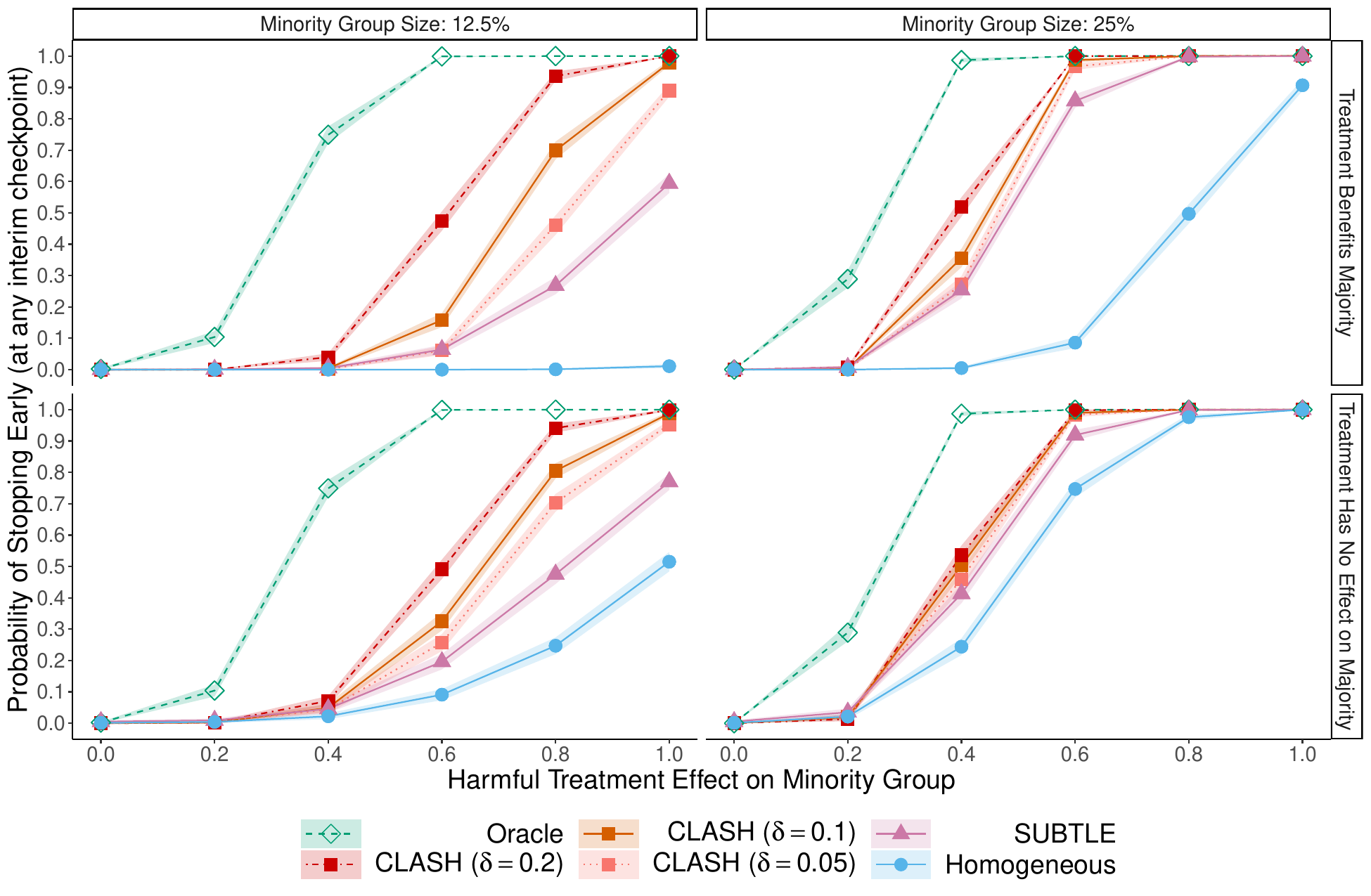}}%
\caption{Effect of $\delta$ on CLASH's stopping probability. We consider the  stopping probability of an (a) OF-adjusted z-test and (b) mSPRT across 1,000 replications of an experiment with Gaussian outcomes, $N=4000$, and five covariates.}
\label{fig:delta}
\end{figure*}

\subsubsection{Larger Minority Group}

\cref{fig:gaussiansummary} in the main text summarizes CLASH's performance in an experiment where the minority group comprises $12.5\%$ of the population. We now increase the size of the minority group, and present results in \cref{fig:gaussiansupp}. With a 25\% minority group, CLASH improves stopping probability over the homogeneous baseline, though the gains are more modest than for a 12.5\% minority group. CLASH also performs slightly better than SUBTLE, though this difference is also reduced. All methods perform similarly with a 50\% minority group.

\begin{figure*}[!h]
    \centering
    \subfigure[Minority Group Size: 25\%]{\includegraphics[width=0.8\textwidth]{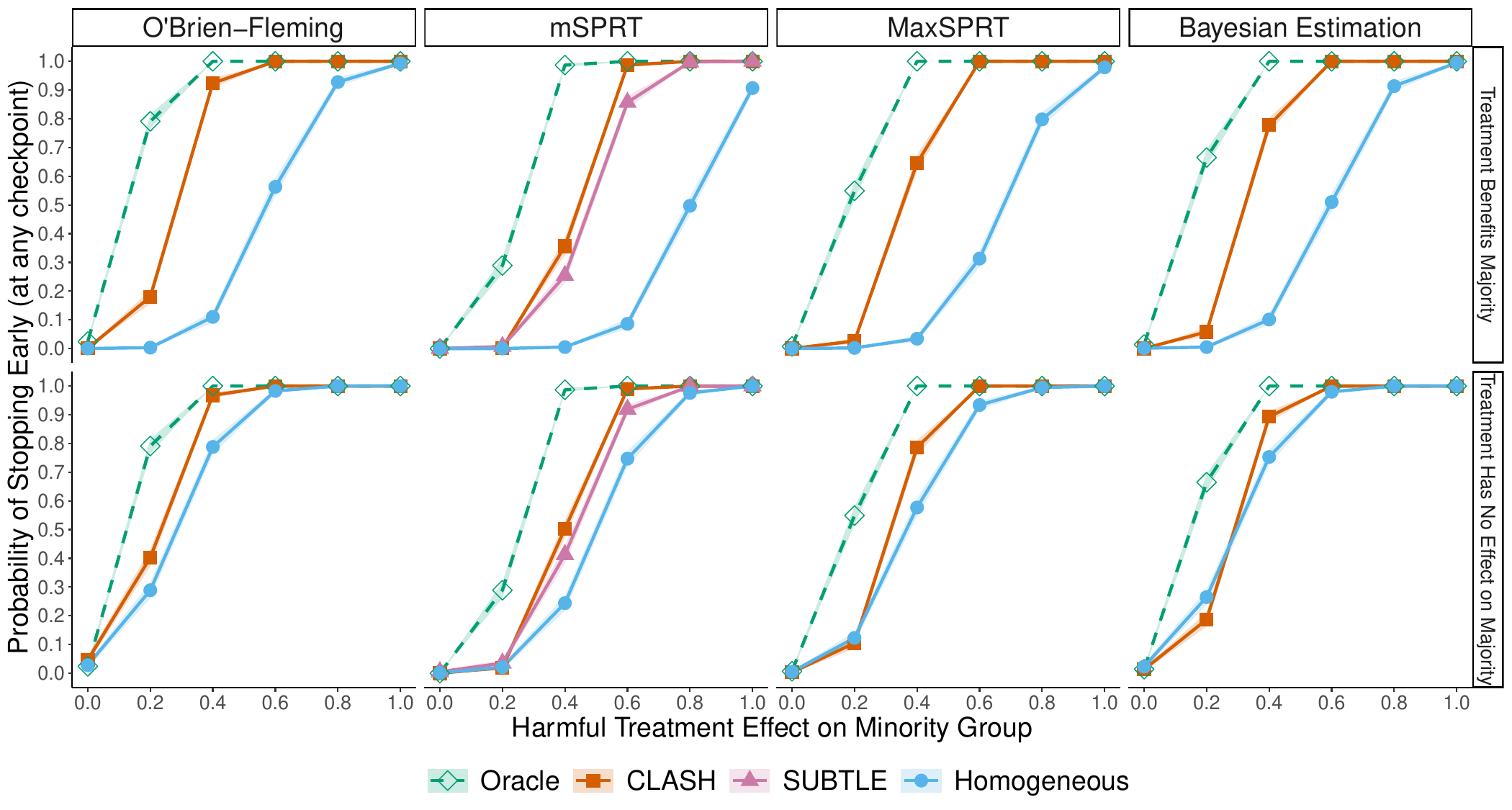}}
    \qquad
    \subfigure[Minority Group Size: 50\%]{\includegraphics[width=0.8\textwidth]{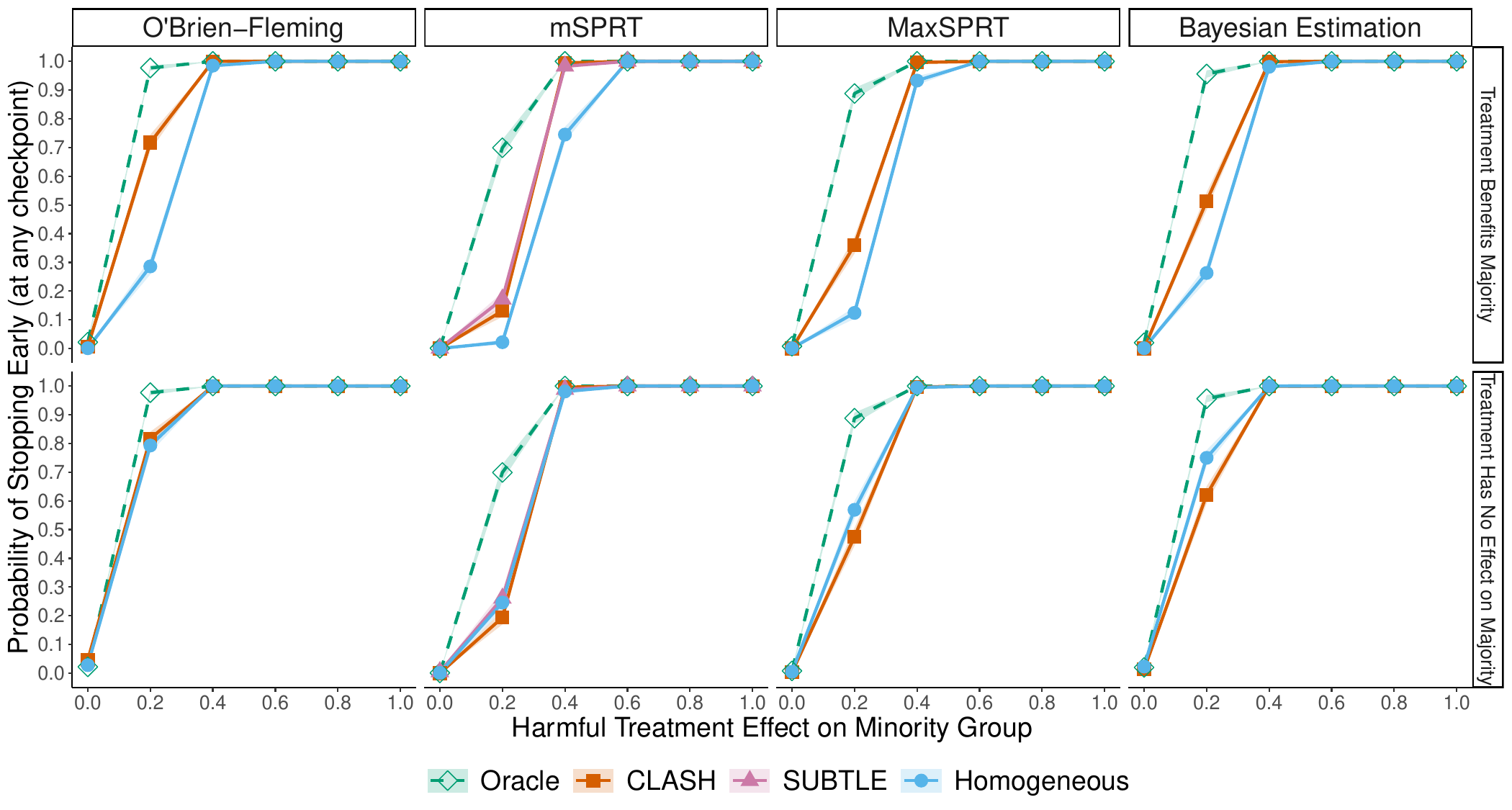}}
    \caption{Performance of CLASH in a simulation experiment with Gaussian outcomes, five covariates, and a minority group that forms (a) 25\% or (b) 50\% of all participants. We simulate 1,000 trials with 4,000 participants each and plot the stopping probability (with 95\% CI) at any interim checkpoint.}
    \label{fig:gaussiansupp}
    \vspace{-0.4cm}
\end{figure*}

\newpage
\newpage
\subsubsection{Comparison with SUBTLE}

\cref{fig:nd} in the main text compares CLASH's performance to that of SUBTLE in when the treatment benefits the majority group and harms (or has no effect on) the minority. We present an analogous comparison in the case when the treatment has no effect on the majority group in \cref{fig:clashapp}. In most cases, CLASH increases the stopping probability over the SUBTLE baseline, especially at the first two interim checkpoints. The only exception is the setting with 10 covariates and a 12.5\% minority group, when CLASH and SUBTLE perform similarly. 

\begin{figure*}[!h]
\centering
\subfigure[Minority Group Size: 12.5\%]{%
\includegraphics[height=1.7in]{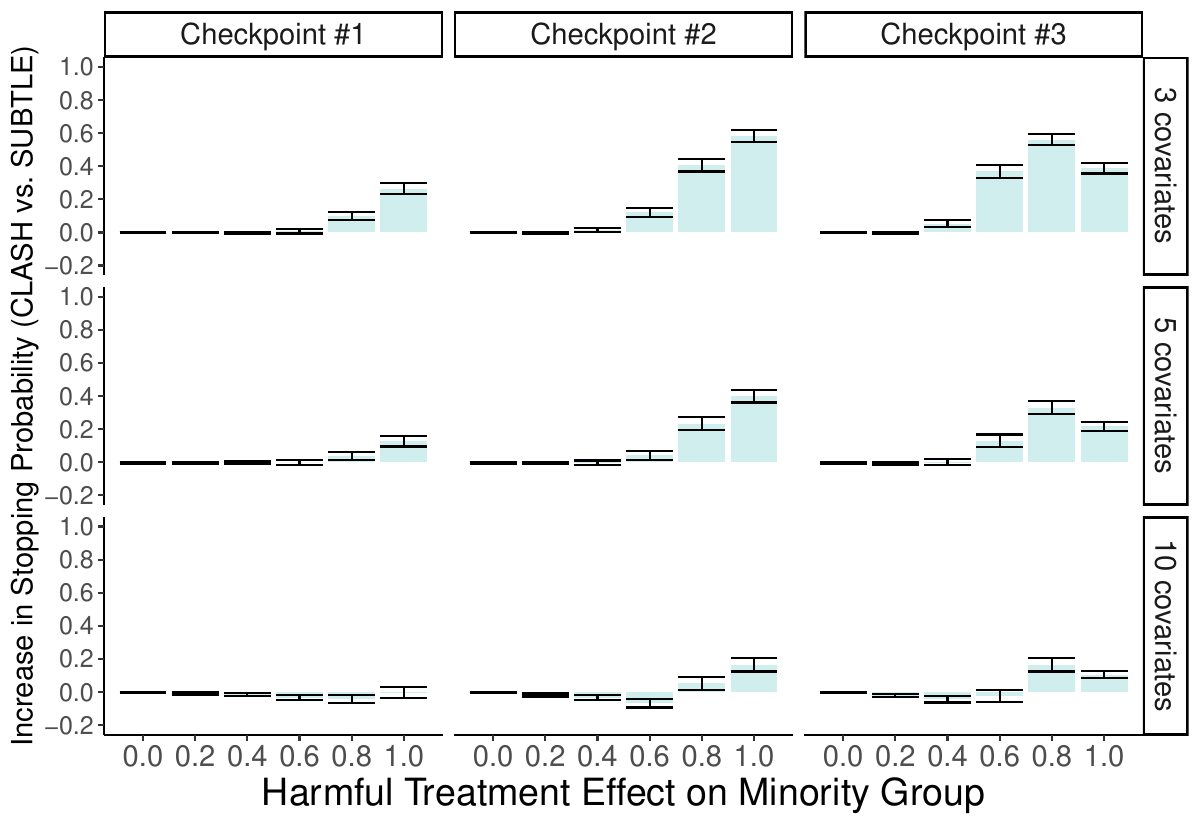}}%
\qquad
\subfigure[Minority Group Size: 25\%]{%
\includegraphics[height=1.7in]{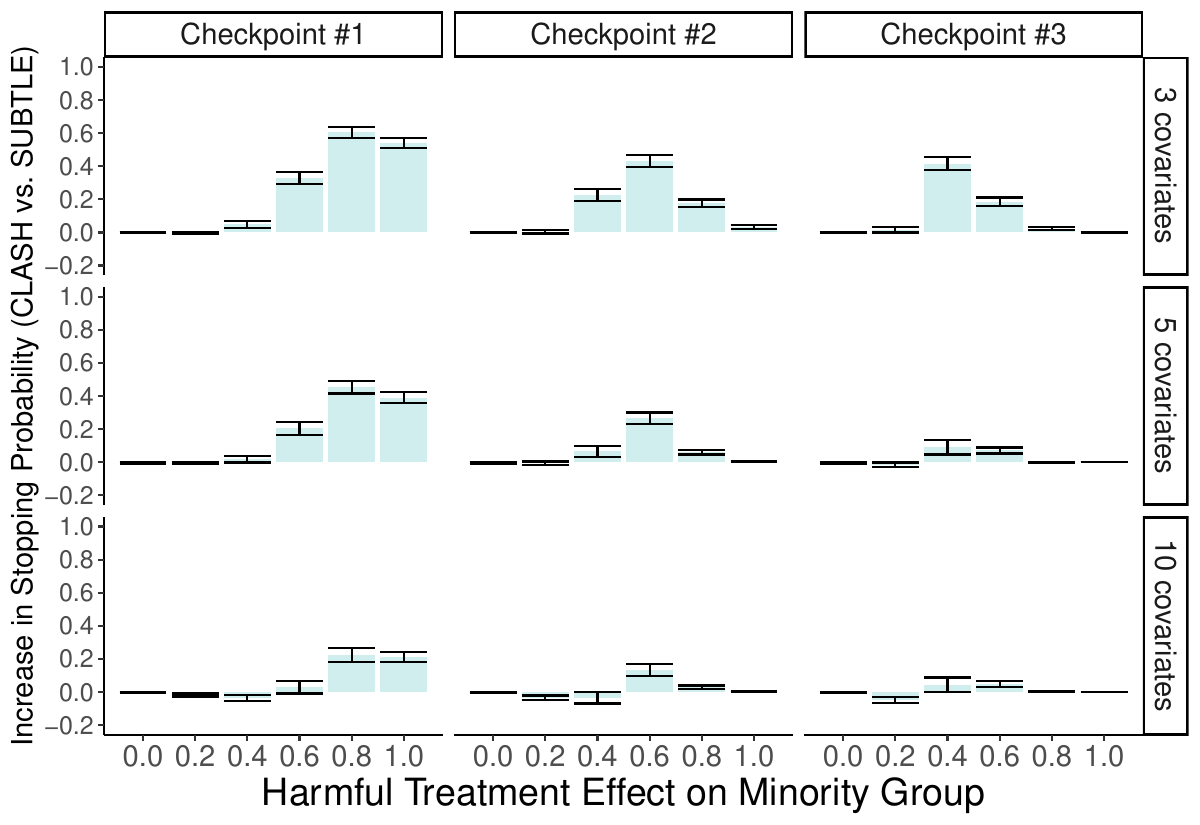}}%
\caption{CLASH's performance improvement over the SUBTLE baseline with Gaussian outcomes, when the treatment has no effect on the majority group. We plot the difference in stopping probability between CLASH (with an mSPRT) and SUBTLE with 95\% CIs, where the minority group forms (a) 12.5\% or (b) 25\% of all participants.}
\label{fig:clashapp}
\end{figure*}

\newpage 
\subsubsection{Comparison with Homogeneous baseline}

We now compare CLASH to the homogeneous baseline across a wide range of simulation settings. \cref{fig:clashhomog1} compares the two methods when the treatment benefits the majority, while \cref{fig:clashhomog0} compares them when the treatment has no effect on the majority. CLASH increases stopping probability in almost all cases.

\begin{figure*}[!h]
\centering
\subfigure[Minority Group Size: 12.5\%]{%
\includegraphics[height=1.7in]{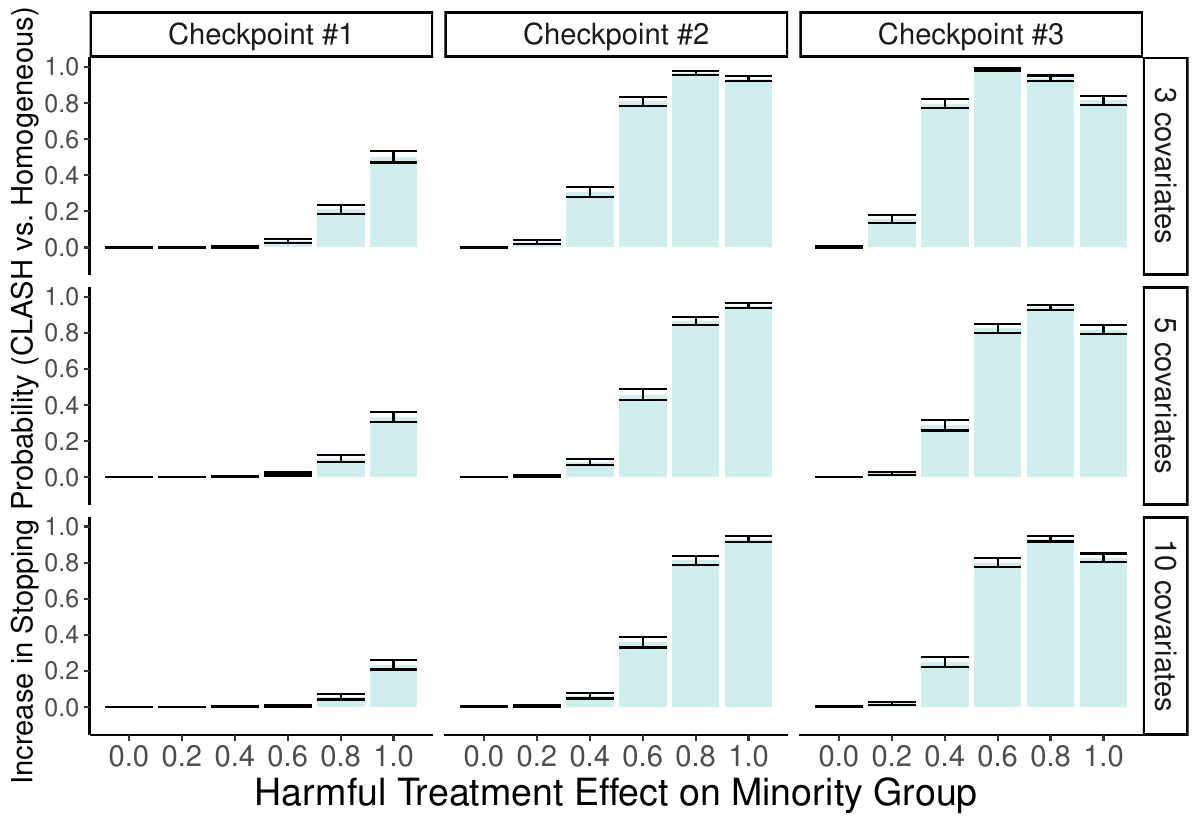}}%
\qquad
\subfigure[Minority Group Size: 25\%]{%
\includegraphics[height=1.7in]{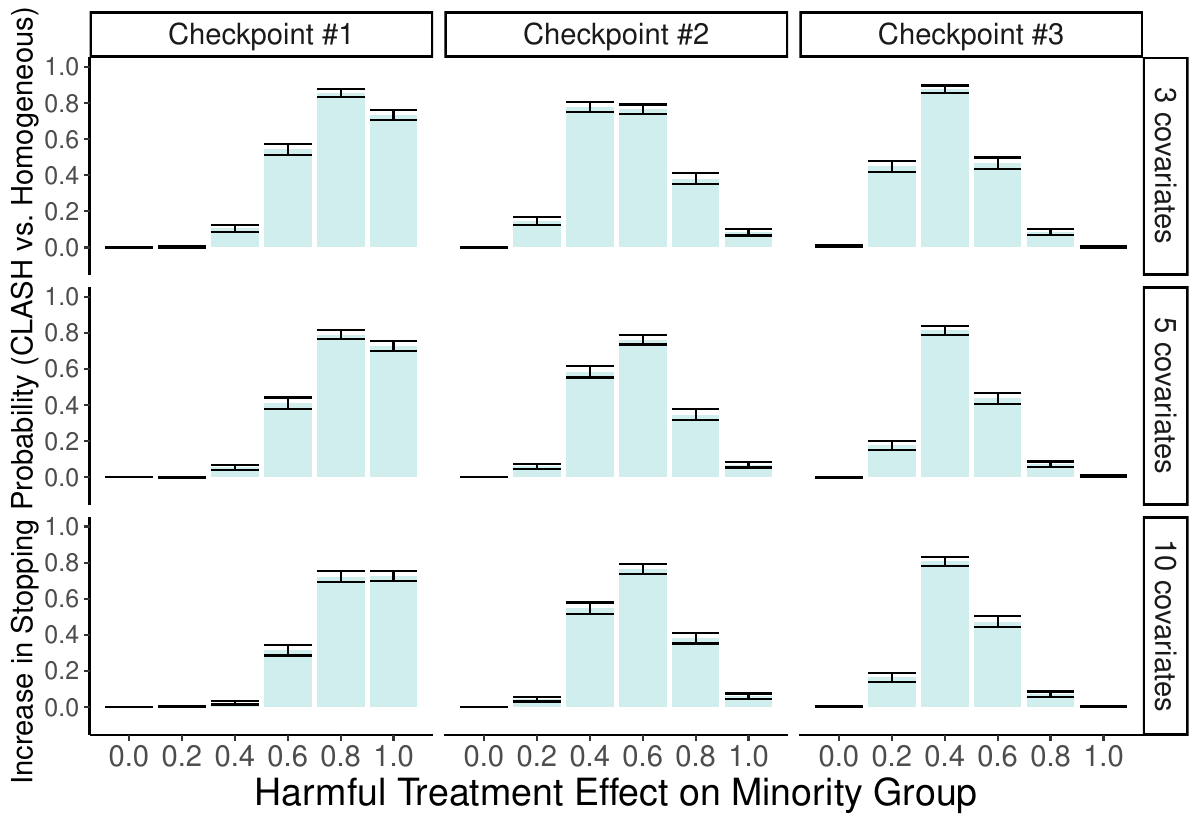}}%
\caption{CLASH's performance improvement over the homogeneous baseline with Gaussian outcomes, when the treatment benefits the majority group. We plot the difference in stopping probability between CLASH (with an OF-adjusted z-test) and the homogeneous baseline with 95\% CIs, where the minority group forms (a) 12.5\% or (b) 25\% of all participants.}
\label{fig:clashhomog1}
\end{figure*}

\begin{figure*}[!h]
\centering
\subfigure[Minority Group Size: 12.5\%]{%
\includegraphics[height=1.7in]{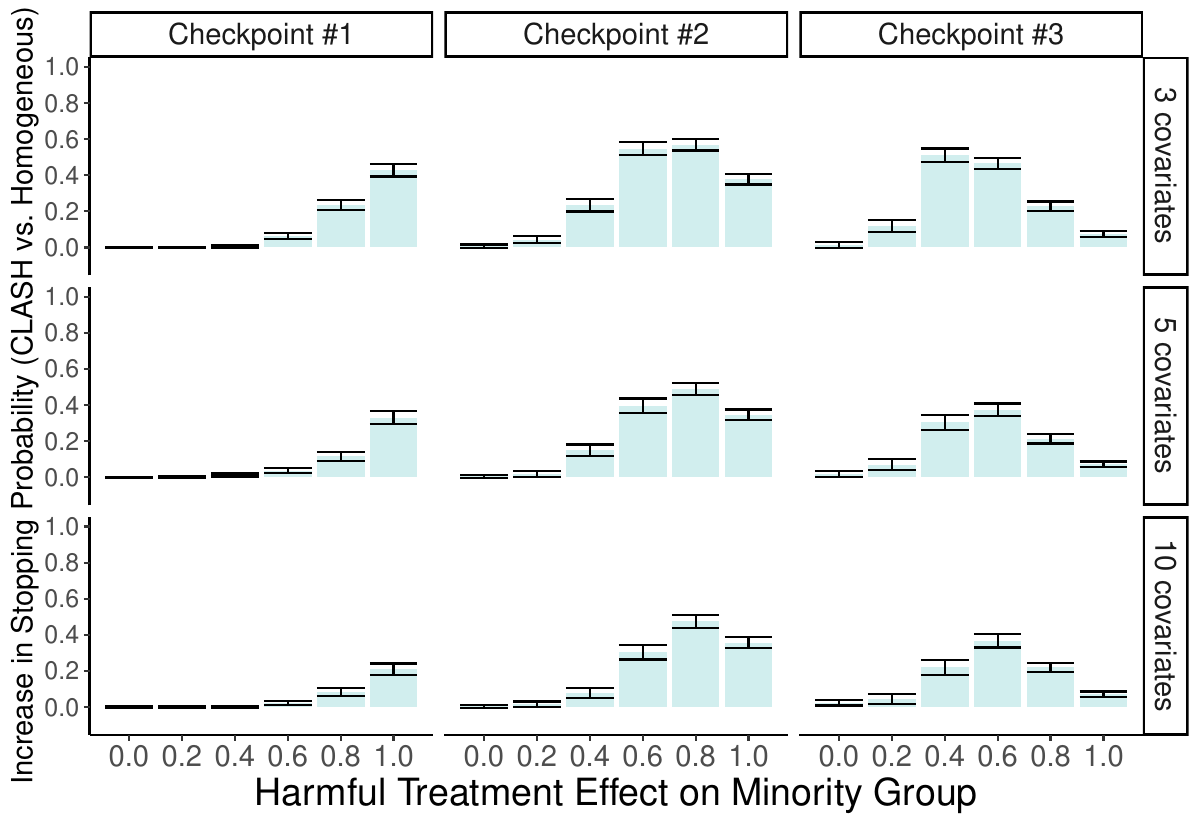}}%
\qquad
\subfigure[Minority Group Size: 25\%]{%
\includegraphics[height=1.7in]{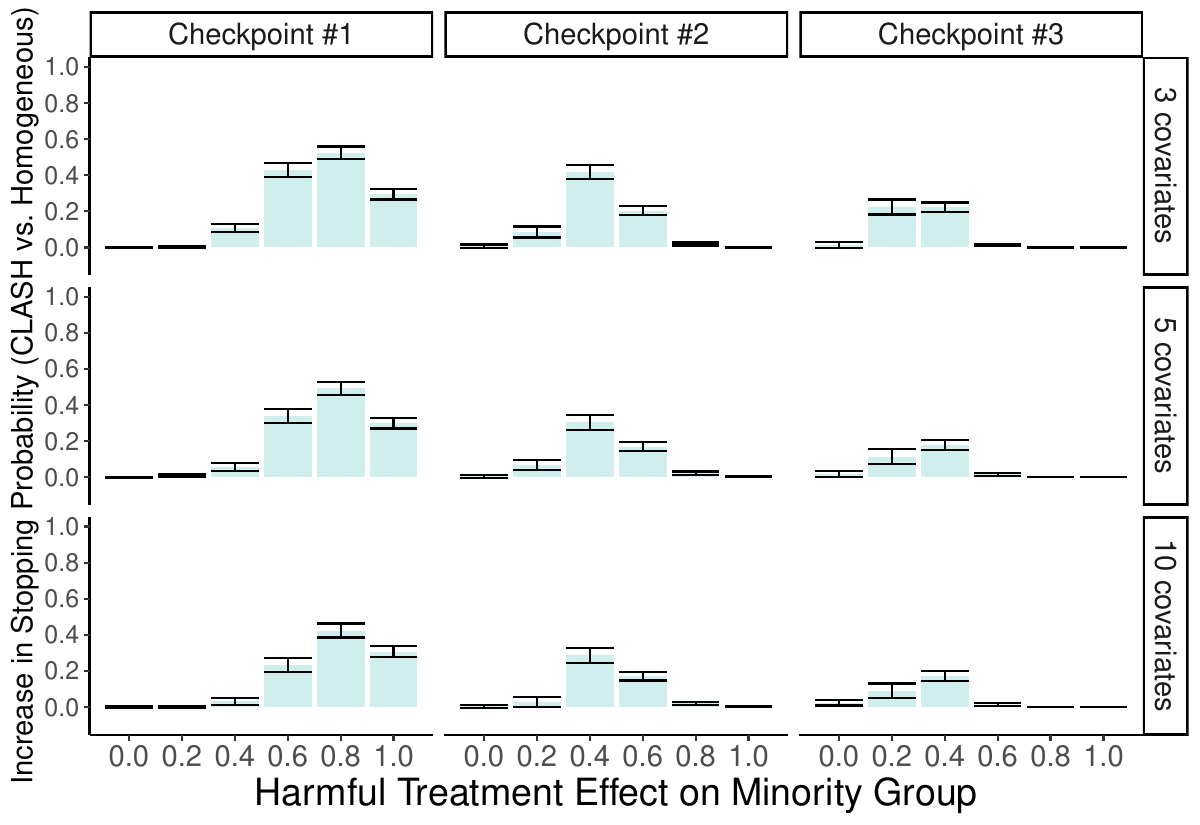}}%
\caption{CLASH's performance improvement over the homogeneous baseline with Gaussian outcomes, when the treatment has no effect on the majority group. We plot the difference in stopping probability between CLASH (with an OF-adjusted z-test) and the homogeneous baseline with 95\% CIs, where the minority group forms (a) 12.5\% or (b) 25\% of all participants.}
\label{fig:clashhomog0}
\end{figure*}
\newpage

\subsubsection{Small sample size}

We evaluate CLASH's performance with much smaller sample sizes ($N = 200, 400, 1000$). We consider an experiment in which the treatment harms the minority group and weakly benefits the majority (i.e., $\theta_0 = -0.1$). There are five covariates and the minority group forms 12.5\% of the population. \cref{fig:smalln} presents results for CLASH with an OF-adjusted z-test. CLASH outperforms the homogeneous baseline by a wide margin in experiments with moderately small samples (N=1000). While the gap between CLASH and the homogenous baseline decreases as sample size decreases, CLASH still outperforms the baseline with as few as 400 participants. However, with very small samples (N=200), neither CLASH nor the homogeneous baseline is likely to stop the experiment.

\begin{figure*}[!h]
\centering
\includegraphics[width=0.6\textwidth]{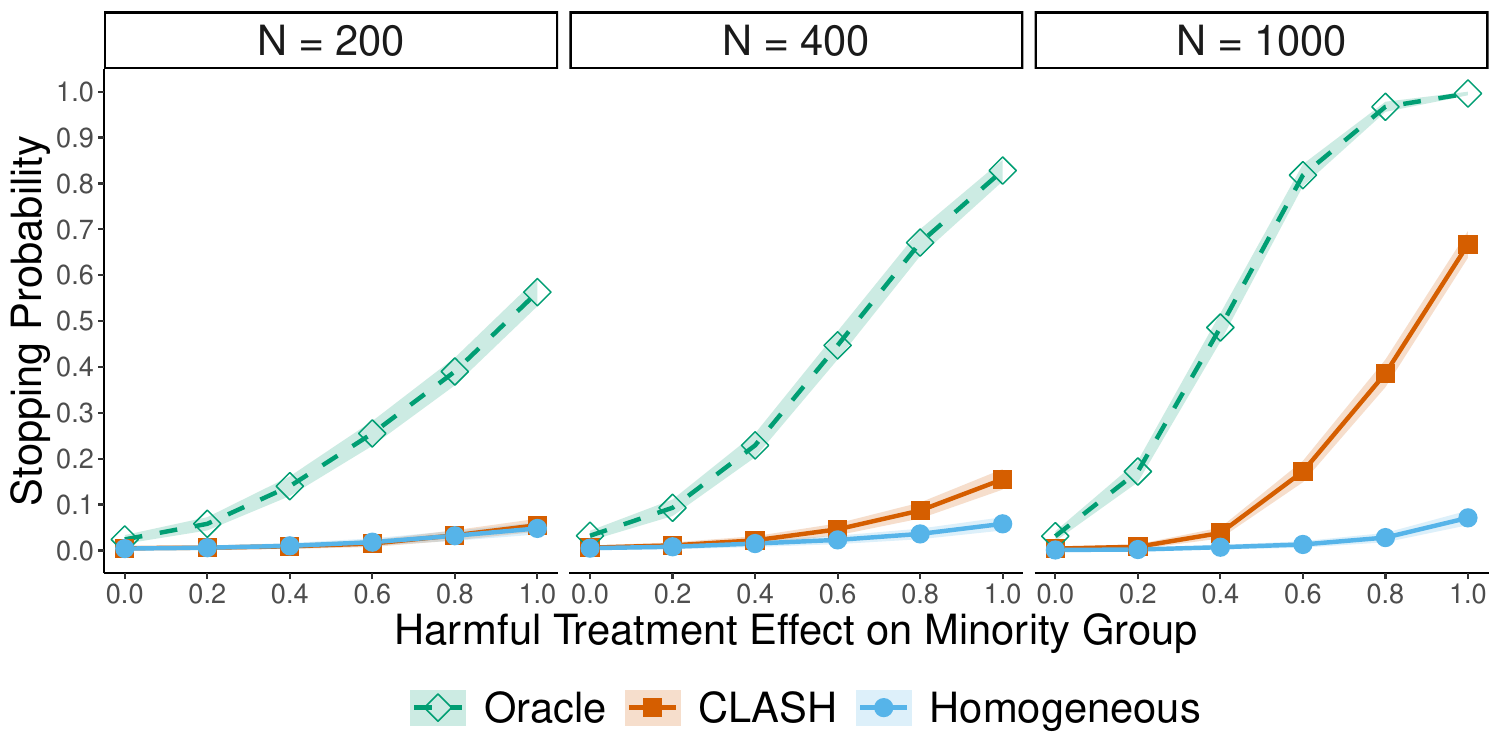}%
\caption{CLASH's performance in simulation experiments with small samples sizes (N=200, 400, 1000). We consider experiments with Gaussian outcomes, five covariates, a weakly benefitted majority group, and a 12.5\% harmed minority group. We simulate 1,000 trials and plot the stopping probability (with 95\% CI) at any interim checkpoint of CLASH (with an OF-adjusted z-test), the homogeneous baseline, and the Oracle.}
\label{fig:smalln}
\end{figure*}

\subsubsection{Large number of covariates}

We evaluate CLASH's performance with high-dimensional covariate sets (100 and 500 covariates). We consider an experiment with 4,000 participants in which the treatment harms a 12.5\% minority group and weakly benefits the majority (i.e., $\theta_0 = -0.1$). \cref{fig:largeD} presents results for CLASH with an OF-adjusted z-test. CLASH is robust to increasing dimensionality to a point, outperforming the homogeneous baseline even with 100 covariates. The extreme case with 500 covariates is more challenging: here, investigators may need to perform feature selection before running CLASH.

\begin{figure*}[!h]
\centering
\includegraphics[width=0.6\textwidth]{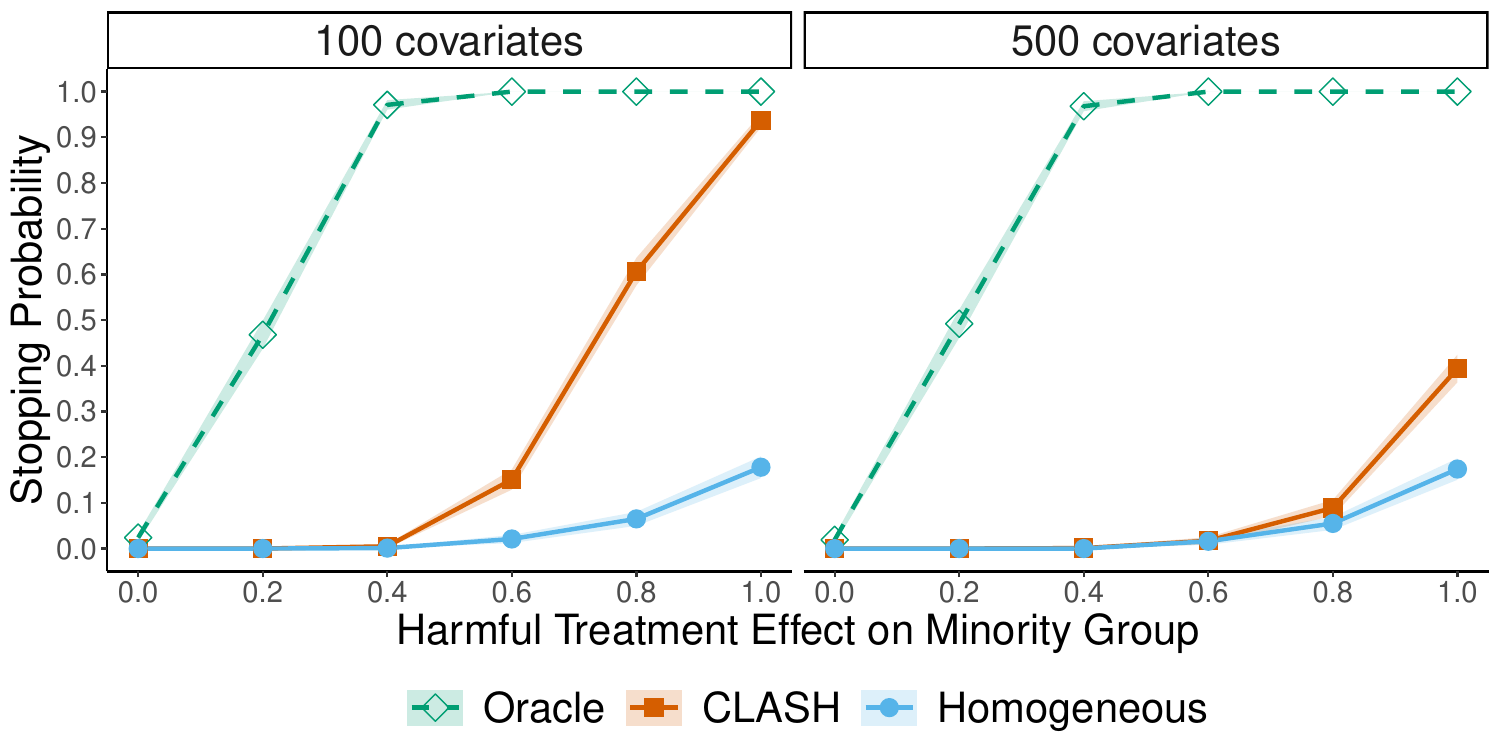}%
\caption{CLASH's performance in simulation experiments with high-dimensional covariate sets (100 and 500 covariates). We consider experiments with Gaussian outcomes, a weakly benefitted majority group, and a 12.5\% harmed minority group. We simulate 1,000 trials with 4,000 participants each and plot the stopping probability (with 95\% CI) at any interim checkpoint of CLASH (with an OF-adjusted z-test), the homogeneous baseline, and the Oracle.}
\label{fig:largeD}
\end{figure*}

\newpage
\subsubsection{Multiple harmed groups}
\renewcommand{\labelenumi}{\alph{enumi})}

We evaluate CLASH's performance with multiple ($>2$) groups in the population. We refer to the x-axis effect size values as $x$, and consider two settings:
\begin{enumerate}
    \item Three groups of unequal size (group size and effect size, respectively, in parentheses): one weakly benefited (87.5\%, $-0.1$), one strongly harmed (6.25\%, $x$), and one weakly harmed (6.25\%, $x$/2).
    \item Four equally sized groups (effect sizes in parentheses): strongly benefited ($-x$), weakly benefited ($-x$/2), weakly harmed ($x$/2), and strongly harmed ($x$).
\end{enumerate}

\cref{fig:multigroup} displays the results. Overall, CLASH performs well: it stops more frequently than the homogeneous baseline across a range of effect sizes and as often as the Oracle for larger effects. 

\begin{figure*}[!h]
\centering
\subfigure[3 groups]{%
\includegraphics[width=0.35\textwidth]{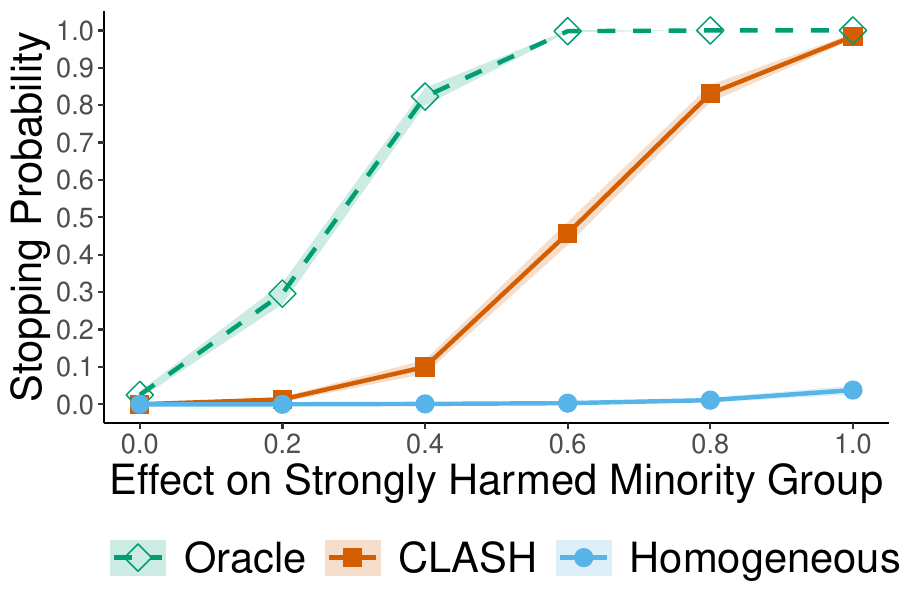}}%
\qquad
\subfigure[4 groups]{%
\includegraphics[width=0.35\textwidth]{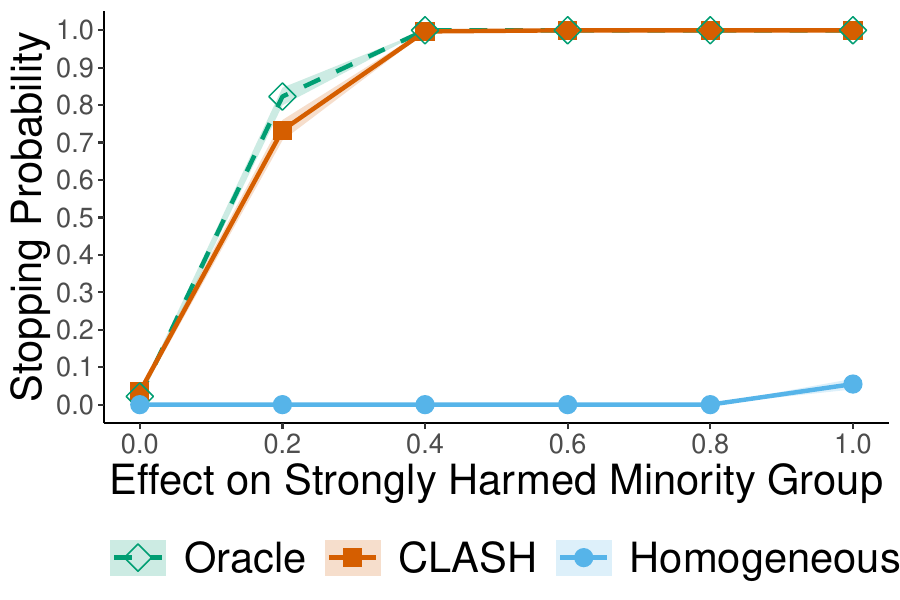}}%
\caption{CLASH's performance in simulation experiments with multiple harmed groups. We consider experiments with Gaussian outcomes, five covariates, and (a) three groups (one weakly benefited majority group, one weakly harmed 6.25\% minority group, one strongly harmed 6.25\% minority group), or (b) four equally sized groups (one strongly benefited, one weakly benefited, one weakly harmed, one strongly harmed). We simulate 1,000 trials with 4,000 participants each and plot the stopping probability (with 95\% CI) at any interim checkpoint of CLASH (with an OF-adjusted z-test), the homogeneous baseline, and the Oracle.}
\label{fig:multigroup}
\end{figure*}

\subsubsection{Stochastic group membership}
\label{sec:stochasticG}

In previous experiments, the mapping from covariates to groups was deterministic. We now evaluate CLASH's performance in situations where the covariates map stochastically to the benefited and harmed groups. We construct a 25\% minority group deterministically from covariates as before, but randomly assign $p\%$ of this group to be harmed. The remainder of the minority group and the majority group are both weakly benefited with effect size $\theta_0=-0.1$. We consider experiments with $N=$ 4,000 and five covariates, and plot results in \cref{fig:stochasticG}. CLASH outperforms the homogeneous baseline both when $p=0.5$ and $p=0.75$.

\begin{figure*}[!h]
\centering
\includegraphics[width=0.49\textwidth]{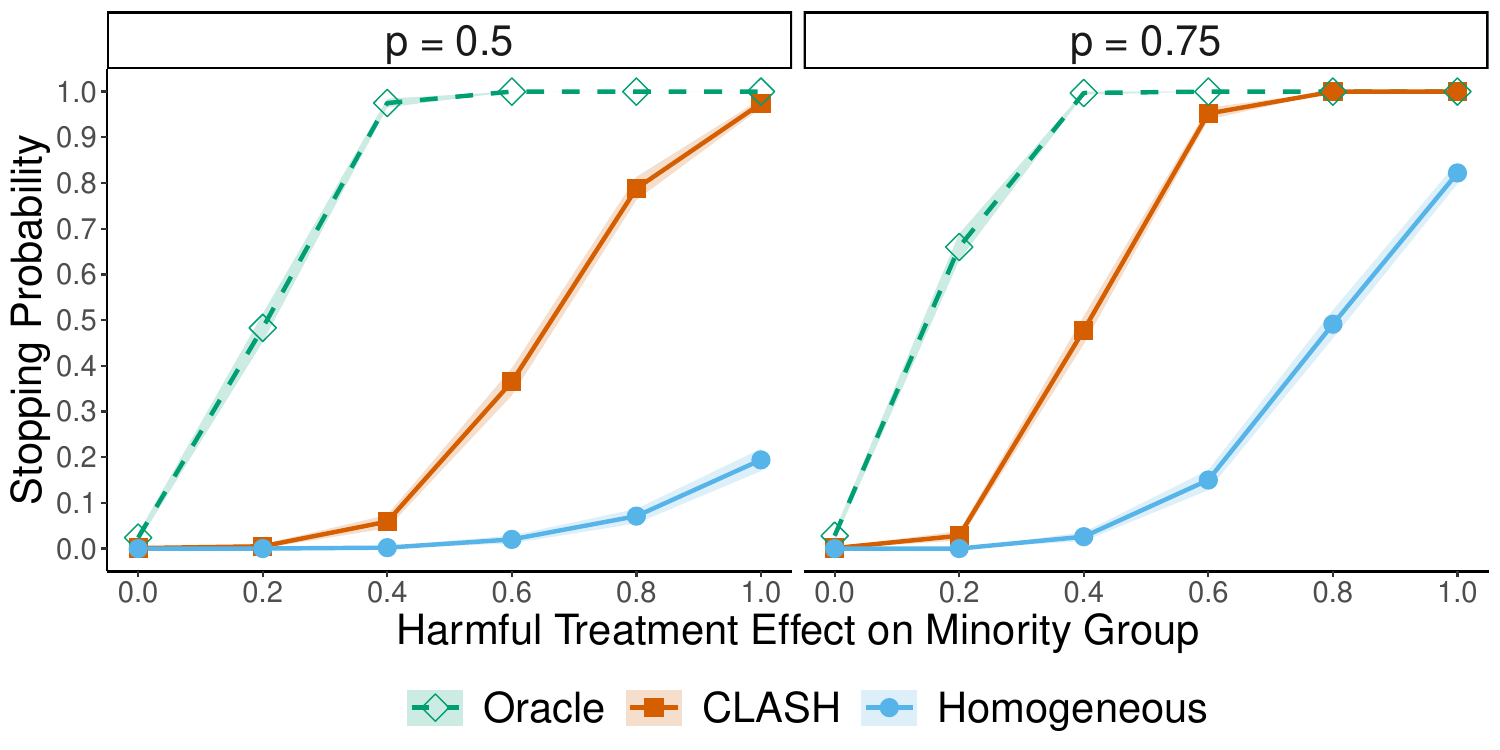}%
\caption{CLASH's performance in simulation experiments with stochastic group membership. We consider experiments with Gaussian outcomes, five covariates, and a weakly benefitted majority group. $p$\% of participants from a 25\% minority group are harmed by the treatment; the rest are weakly benefitted. We vary $p$ between 0.5 and 0.75. We simulate 1,000 trials with 4,000 participants each and plot the stopping probability (with 95\% CI) at any interim checkpoint of CLASH (with an OF-adjusted z-test), the homogeneous baseline, and the Oracle.}
\label{fig:stochasticG}
\end{figure*}

\subsubsection{Smaller / larger outcome variance}
\label{app:changevar}

We evaluate CLASH's performance as the variance in the outcomes increases. We consider an experiment with 4,000 participants and five covariates in which the treatment harms a 12.5\% minority group and weakly benefits the majority ($\theta_0=-0.1$). We vary $\text{Var}(Y)$ between 0.5, 1, and 2 (recall that $\text{Var}(Y)=1$ in previous experiments). \cref{fig:changevar} demonstrates that CLASH outperforms the homogeneous baseline in all settings. CLASH and the Oracle both perform better with lower variance.

\begin{figure*}[!h]
\centering
\includegraphics[width=0.6\textwidth]{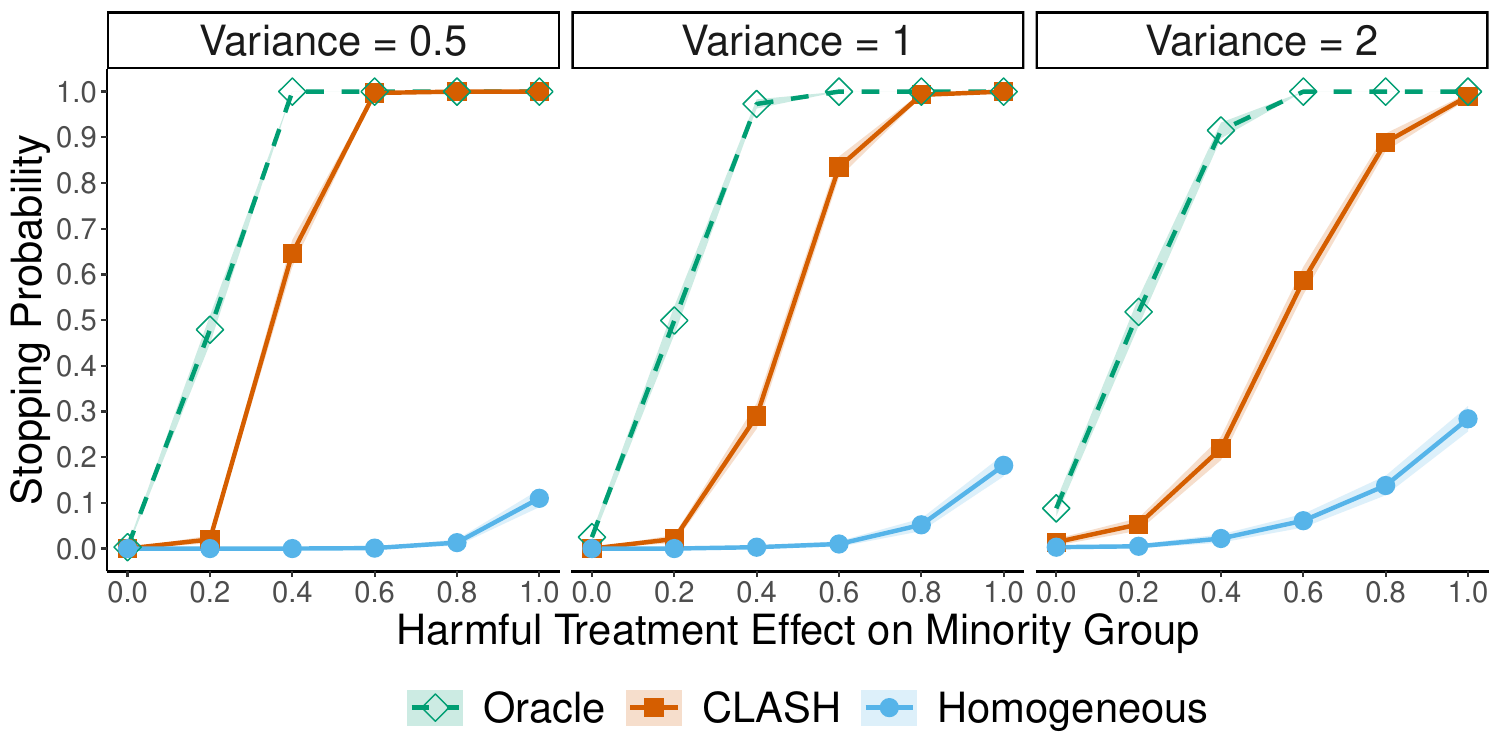}%
\caption{CLASH's performance in simulation experiments with smaller and larger variance in the observed outcomes. We consider experiments with Gaussian outcomes, five covariates, a weakly benefitted majority group, and a harmed 12.5\% minority group. We simulate 1,000 trials with 4,000 participants each and plot the stopping probability (with 95\% CI) at any interim checkpoint of CLASH (with an OF-adjusted z-test), the homogeneous baseline, and the Oracle.}
\label{fig:changevar}
\end{figure*}

\subsubsection{Unknown outcome variance}

We now consider a setting in which the variance in the outcomes is unknown. The setup is exactly the same as in \cref{app:changevar}, except that now we do not know $\text{Var}(Y)$. Instead, we estimate the variance from the observed outcomes and use the plug-in version of the z-test. \cref{fig:estvar} demonstrates that while estimating the variance has a small effect on CLASH’s performance, CLASH still drastically outperforms the homogeneous baseline.

\begin{figure*}[!h]
\centering
\includegraphics[width=0.6\textwidth]{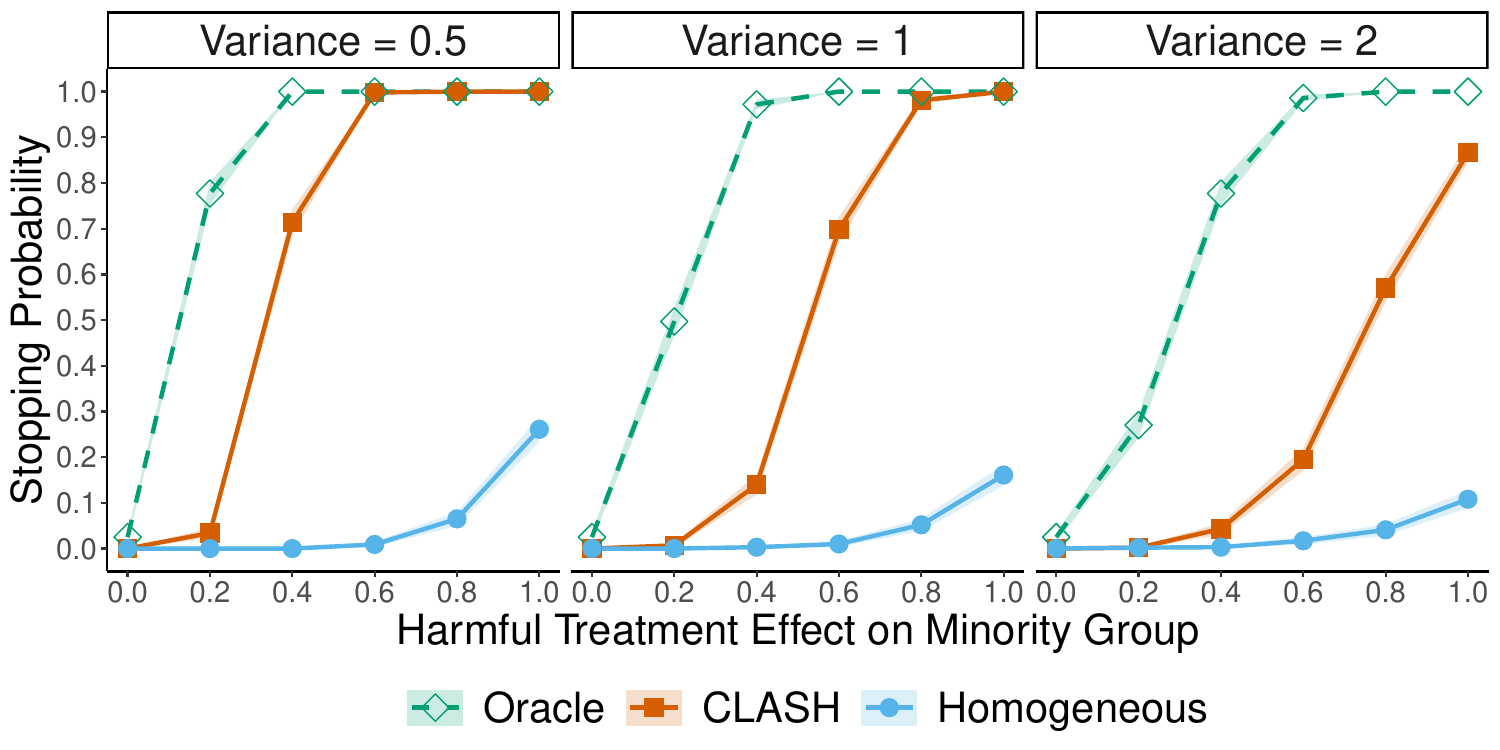}%
\caption{CLASH's performance in simulation experiments with unknown outcome variance. We consider experiments with Gaussian outcomes, five covariates, a weakly benefitted majority group, and a harmed 12.5\% minority group. We simulate 1,000 trials with 4,000 participants each and plot the stopping probability (with 95\% CI) at any interim checkpoint of CLASH (with an OF-adjusted z-test with plug-in variance), the homogeneous baseline, and the Oracle.}
\label{fig:estvar}
\end{figure*}

\newpage
\subsection{TTE Outcomes}
\label{morette}
\subsubsection{Setup}
\label{ttesetup}

We consider a clinical trial that measures time to a positive event (e.g., remission), and adapt the simulation setup from \cite{lu2021statistical}. The trial runs for 30 months and recruits $2,000$ participants, who accrue uniformly at random over the study's first year. Some participants drop out of the trial: drop out time follows an exponential distribution with an annual hazard rate of 0.014. Group membership and covariates are generated as in the Gaussian case. The treatment effect is measured using the hazard ratio (HR), where $HR>1$ is beneficial and $HR<1$ is harmful. Outcomes follow a survival function with $S(t) = \exp\{-0.1(1-D)t - \theta_0 D(1-G)t - \theta_1 DGt\}$, with $\theta_1 \in [0.05, 0.1]$ and $\theta_0 \in \{0.1, 0.12\}$. The treatment thus either benefits or has non effect on the majority group, and either harms or has no effect on the minority group. 

We conduct checkpoints at 12, 18, and 24 months.\footnote{We define checkpoints in terms of time, as TTE trials update observations for the same participants.}
CLASH's Stage 1 uses a survival causal forest \cite{cui2020estimating} and $\delta=0.05$. Stage 2 uses an OF-adjusted Cox proportional hazards regression. Note that SPRT-based tests have not yet been adapted to the TTE setting; among other reasons, the statistical dependence between checkpoints (induced by observing additional data from the same participant) proves challenging for these tests. Crucially, SUBTLE cannot be used either; CLASH is thus the first heterogeneous stopping method applicable to this setting. We compute CLASH's stopping probability at each checkpoint across 1,000 replications, and compare it to the homogeneous baseline and oracle.

As with the Gaussian experiments, each replication uses its own random seed. One replication---including all combinations of effect size, minority group size, number of covariates and stopping tests---takes between 137 and 202 minutes to run on a single CPU (depending on the random seed). This yields a total compute time of approximately 3,000 hours for the TTE experiments. The simulations were run in parallel on an academic computing cluster with over 200 CPUs. Note that each replication used one CPU with 4GB of RAM.

\begin{figure*}[!h]
    \centering
    \includegraphics[width=0.9\textwidth]{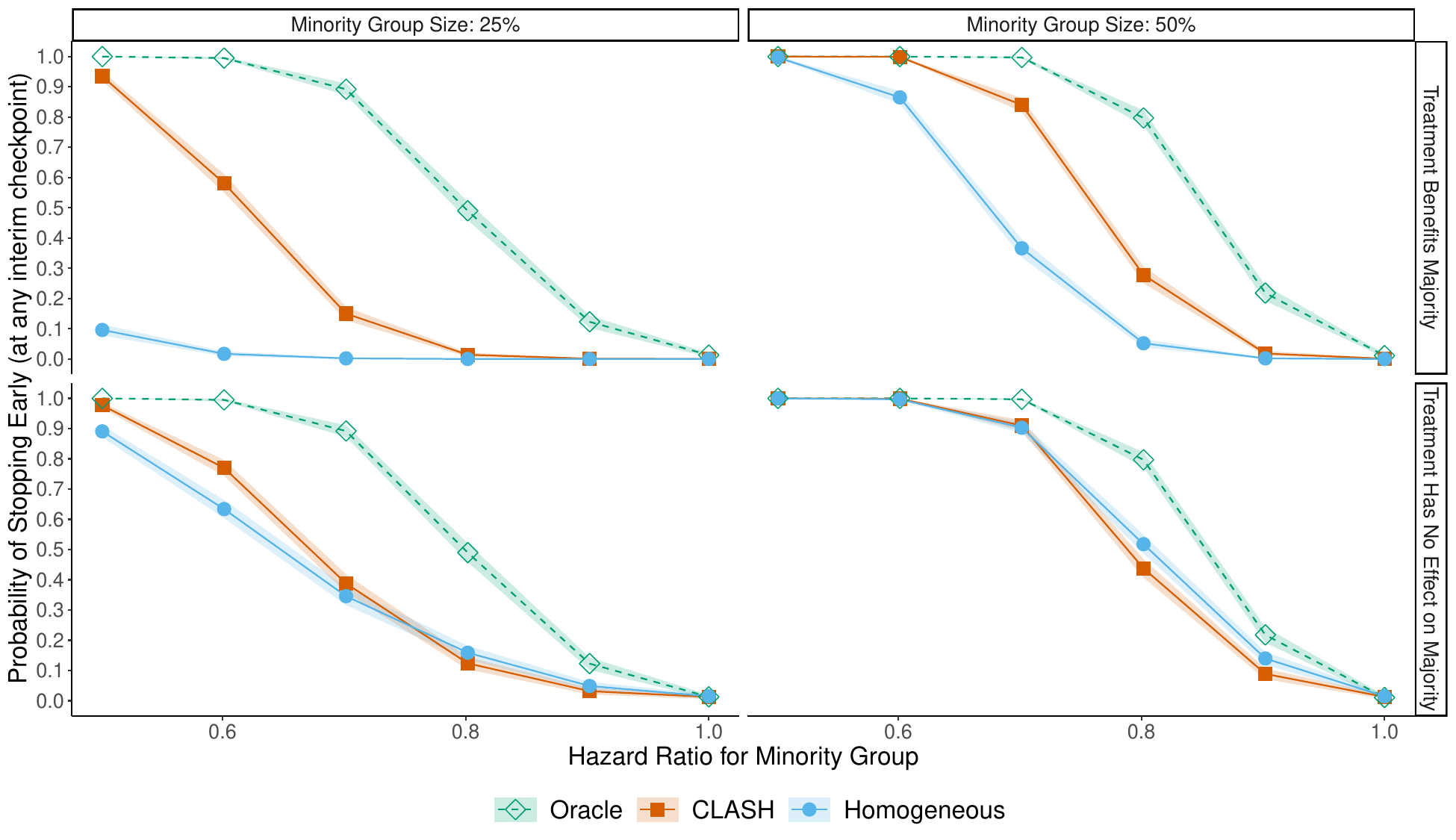}
    \vspace{-0.1cm}
    \caption{Performance of CLASH in a simulation experiment with TTE outcomes. If the minority group is harmed (lower hazard ratio), CLASH (red) significantly increases the stopping probability over the homogeneous approach (blue). This improvement is most notable when then treatment benefits the majority group. We simulate 1,000 trials with 2,000 participants and five covariates and plot the stopping probability (with 95\% CI) at any interim checkpoint.}
    \label{fig:ttes1}
\end{figure*}

\newpage
\subsubsection{Results}
CLASH (with an OF-adjusted Cox regression) significantly improves stopping probability over the homogeneous baseline if the treatment benefits the majority group and harms the minority (\cref{fig:ttes1}). For a 25\% minority group size, this performance improvement is most notable for large effect sizes ($HR \leq 0.6)$, while for a 50\% minority group size, the improvement is most notable for medium effects ($HR$ between 0.7 and 0.8). CLASH not only stops more often, but also faster, with higher stopping probability at earlier interim checkpoints (\cref{fig:ttes2}). These performance improvements are reduced when the treatment has no effect on the majority group (\cref{fig:ttes3}). Note that we do not consider harmed group sizes smaller than 25\%, as no considered method was able to detect harmful effects with TTE outcomes on such small populations.

\begin{figure*}[!ht]
\centering
\subfigure[Minority Group Size: 25\%]{%
\includegraphics[height=1.7in]{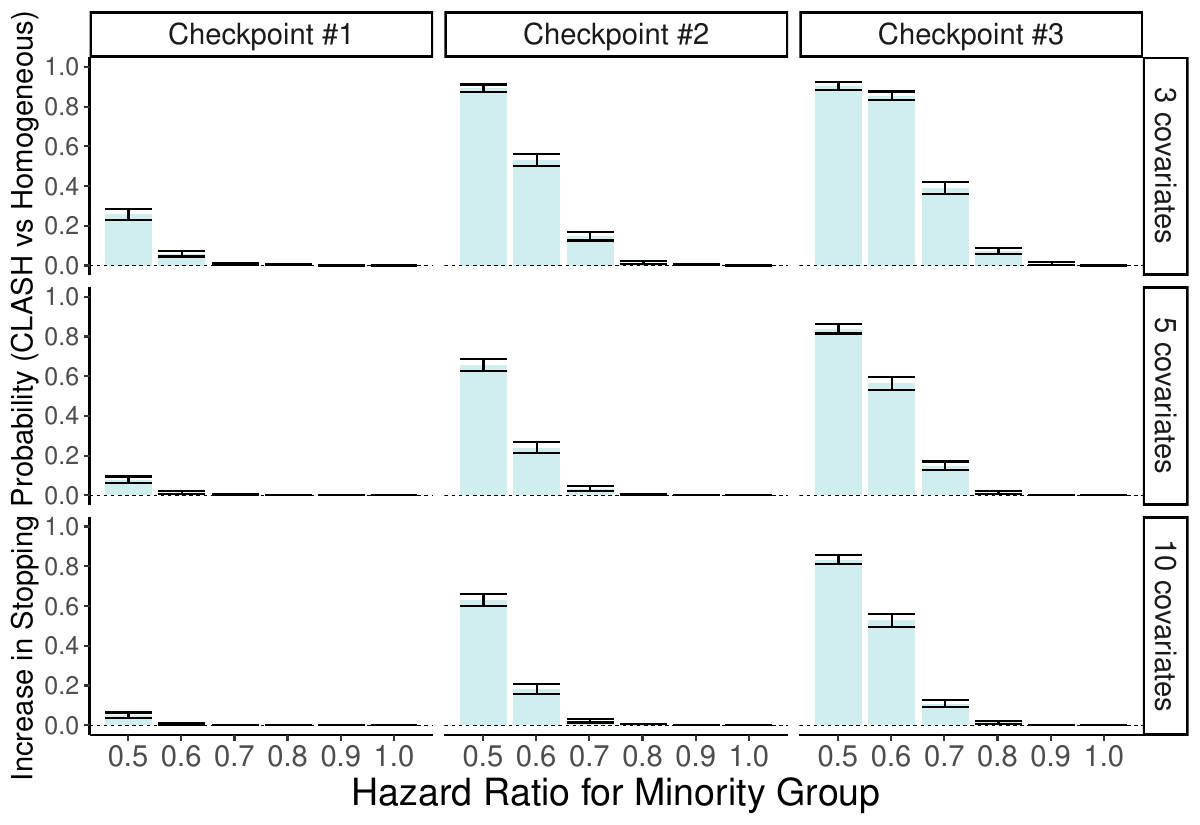}}%
\qquad
\subfigure[Minority Group Size: 50\%]{%
\includegraphics[height=1.7in]{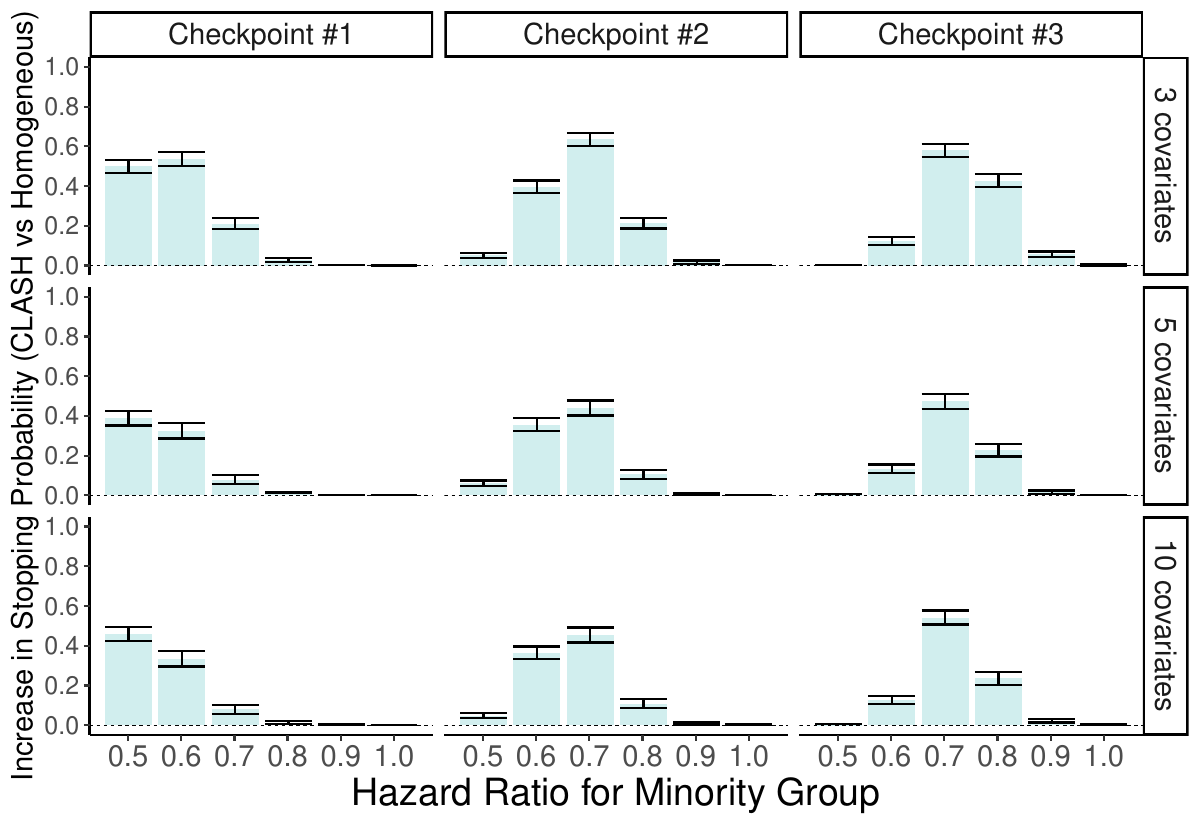}}%
\caption{CLASH's performance improvement with TTE outcomes when the treatment benefits the majority group. CLASH significantly increases the stopping probability across a range of harmful effect sizes. CLASH not only stops more often, but also faster, boosting the stopping probability at earlier checkpoints. We plot the difference in stopping probability between CLASH and the homogeneous baseline (with 95\% CIs), where the minority group forms (a) 25\% or (b) 50\% of all participants. The treatment effect is given by the hazard ratio (smaller HR indicates greater harm). For the larger group size, CLASH's improvement is most notable for medium effects, as the homogeneous baseline is also able to detect large effects. All performance boosts are robust to an increase in the number of covariates.}
\label{fig:ttes2}
\end{figure*}

\begin{figure*}[!ht]
\centering
\subfigure[Minority Group Size: 25\%]{%
\includegraphics[height=1.7in]{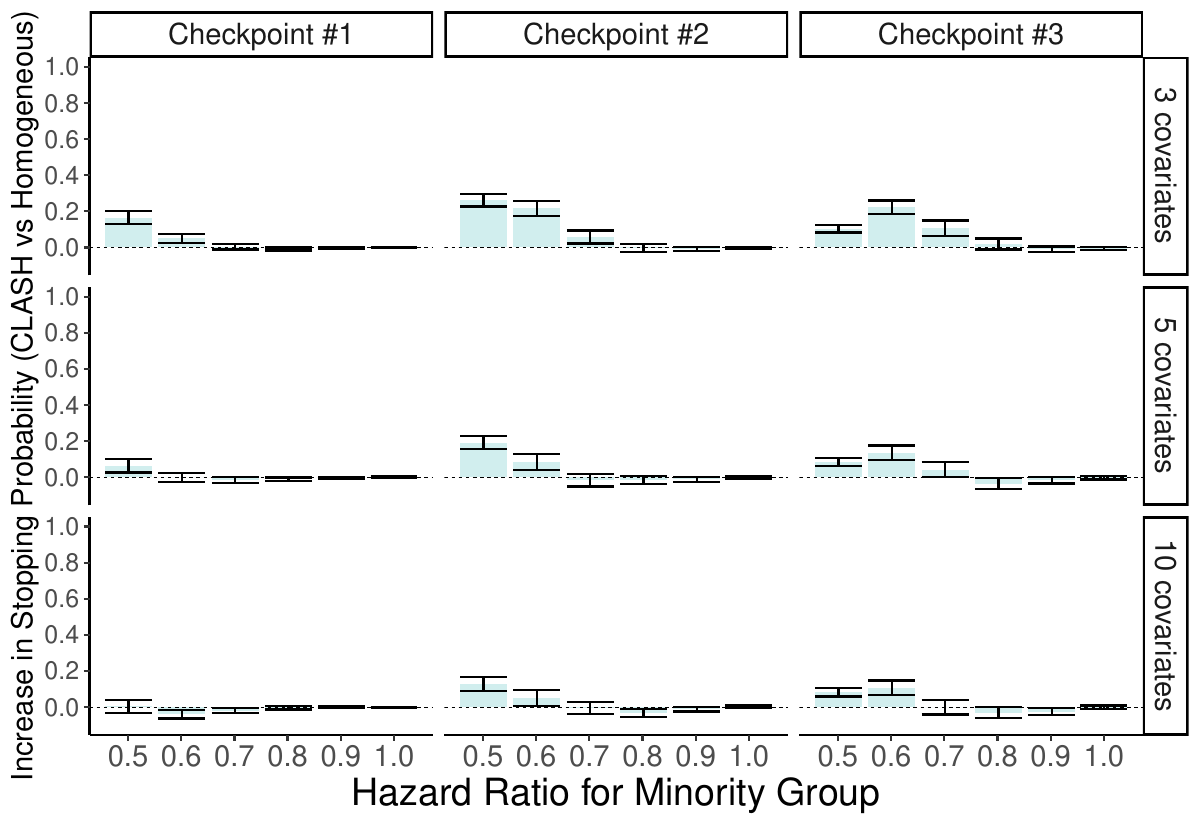}}%
\qquad
\subfigure[Minority Group Size: 50\%]{%
\includegraphics[height=1.7in]{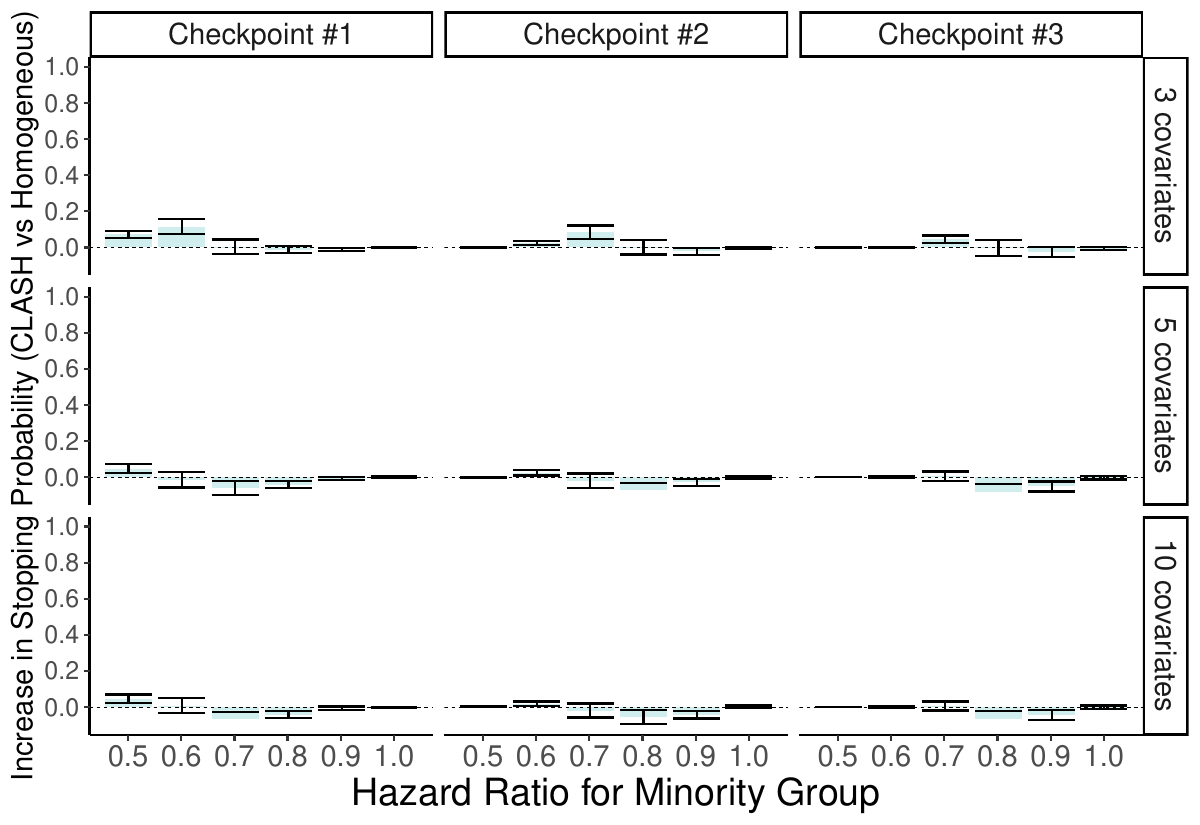}}%
\caption{CLASH's performance improvement with TTE outcomes when the treatment has no effect on the majority group. We plot the difference in stopping probability between CLASH and the homogeneous baseline (with 95\% CIs), where the minority group forms (a) 25\% or (b) 50\% of all participants. The treatment effect is given by the hazard ratio (smaller HR indicates greater harm).}
\label{fig:ttes3}
\end{figure*}

\newpage
\section{Real-world Application}
\label{moreemp}

We now present supplementary tables and figures that were referenced in our discussion of the real-world application in \cref{sec:emp}. We first present regional treatment effects, estimated with the whole dataset (\cref{table:empest}) and at CLASH's stopping time (\cref{table:empestinterim}). We then present results from our semi-synthetic evaluation of CLASH's performance by sample size. In \cref{fig:empn} in the main text, we only sampled from Region 1 (harmed group) and Regions 5-8 (unharmed group); this subset yielded a harmed group that comprised 29\% of the total population. We now increase the size of the harmed group by including data from Regions 2-4. Note that we considered a region as harmed if the treatment effect estimated with the whole dataset (\cref{table:empest}) was positive and significant; all other regions were considered unharmed. Further note that \cref{fig:empn} in the main text only included Region 1 (instead of one of Regions 2, 3, or 4) since it was the most harmed region (\cref{table:empest}). The results of this semi-synthetic evaluation (\cref{fig:empregions}) are similar to those discussed in \cref{sec:emp}. 
\begin{table}[!ht]
\centering
\caption{Regional Effect of Treatment on Performance Metric (estimated at experiment's end with 500,000 observations). We present coefficients from separate univariate negative binomial regressions.}
\label{table:empest}
\resizebox{0.5\linewidth}{!}{
\begin{tabular}[t]{lccc}
\toprule
  & Treatment Effect Estimate & Std. Error & p-value\\
\midrule
Region 1 & 0.385 & 0.014 & 0.000\\
Region 2 & 0.065 & 0.016 & 0.000\\
Region 3 & 0.107 & 0.034 & 0.002\\
Region 4 & 0.056 & 0.010 & 0.000\\
Region 5 & 0.006 & 0.015 & 0.684\\
Region 6 & 0.014 & 0.017 & 0.424\\
Region 7 & -0.048 & 0.021 & 0.023\\
Region 8 & -0.311 & 0.025 & 0.000\\
\bottomrule
\end{tabular}}
\end{table}

\begin{table}[ht]
\centering
\caption{Regional Effect of Treatment on Performance Metric (estimated at checkpoint with 40,000 observations, i.e., when CLASH stops the experiment). We present coefficients from separate univariate negative binomial regressions.}
\label{table:empestinterim}
\resizebox{0.5\linewidth}{!}{
\begin{tabular}[t]{lccc}
\toprule
  & Treatment Effect Estimate & Std. Error & p-value\\
\midrule
Region 1 & 0.453 & 0.051 & 0.000\\
Region 2 & 0.376 & 0.051 & 0.000\\
Region 3 & 0.250 & 0.130 & 0.054\\
Region 4 & -0.146 & 0.036 & 0.000\\
Region 5 & -0.115 & 0.065 & 0.078\\
Region 6 & 0.260 & 0.055 & 0.000\\
Region 7 & -0.253 & 0.077 & 0.001\\
Region 8 & -1.292 & 0.083 & 0.000\\
\bottomrule
\end{tabular}}
\end{table}

\begin{table}[ht]
\centering
\caption{Mean stopping time in our empirical application. We shuffle our real-world dataset 1,000 times and compute the mean stopping time (with standard errors) for CLASH, the homogeneous baseline, and the Oracle.}
\label{table:shuffles}
\begin{tabular}[t]{lc}
\toprule
Method & Mean Stopping Time (Std. Error)\\
\midrule
Oracle & 48,500 (431) \\
CLASH & 57,200 (609) \\
Homogeneous & 64,420 (848) \\
\bottomrule
\end{tabular}
\end{table}

\begin{table}[ht]
\centering
\caption{ATE estimation in our empirical application if the experiment is only stopped for the harmed group. We assume that the experiment is stopped in Region 1 at CLASH stopping time of 40,000 participants and continued in all other regions. We consider the ATE without stopping (i.e., after continuing to collect data from all regions) as the ground truth. The naive ATE estimate (i.e., an unweighted estimate using all collected data) is biased and underestimates the true ATE. Using the IPW approach detailed in \cref{app:stophetero} corrects this bias.}
\label{table:ate}
\begin{tabular}[t]{lc}
\toprule
& Estimate (Std. Error)\\
\midrule
ATE estimate without stopping & 0.11 (0.006)\\
Naive ATE estimate with stopping & 0.03 (0.006)\\
IPW ATE estimate with stopping  & 0.10 (0.006)\\
\bottomrule
\end{tabular}
\end{table}

\begin{figure}[hb]
    \centering
    \includegraphics[width=0.6\textwidth]{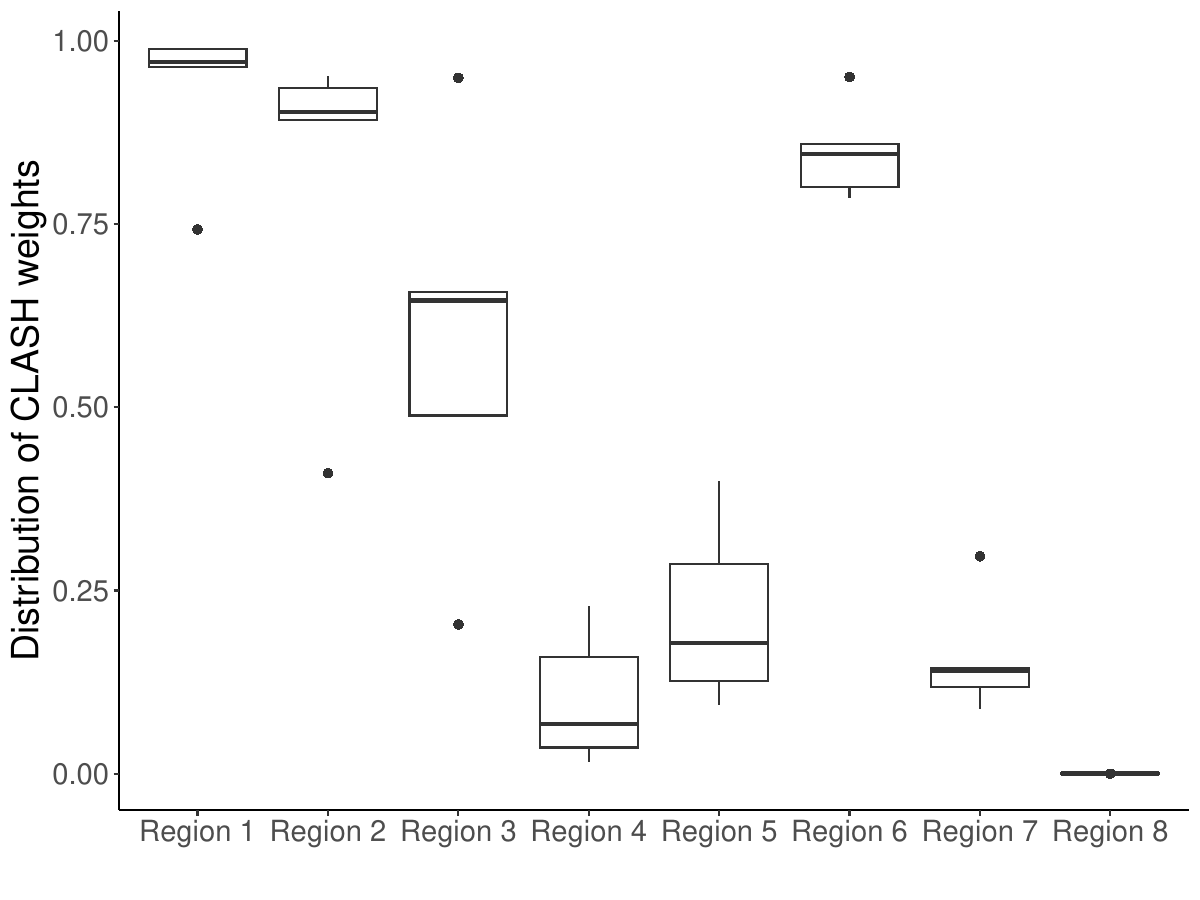}
    \caption{Distribution of CLASH-estimated weights by region (at checkpoint with 40,000 observations, i.e., when CLASH stops the experiment). CLASH identifies that the treatment has a strong harmful effect in Regions 1 and 2 and assigns these observations unit weight.}
    \label{fig:empw}
\end{figure}

\begin{figure*}[ht]
\centering
\subfigure[Region 1]{%
\includegraphics[width=0.45\textwidth]{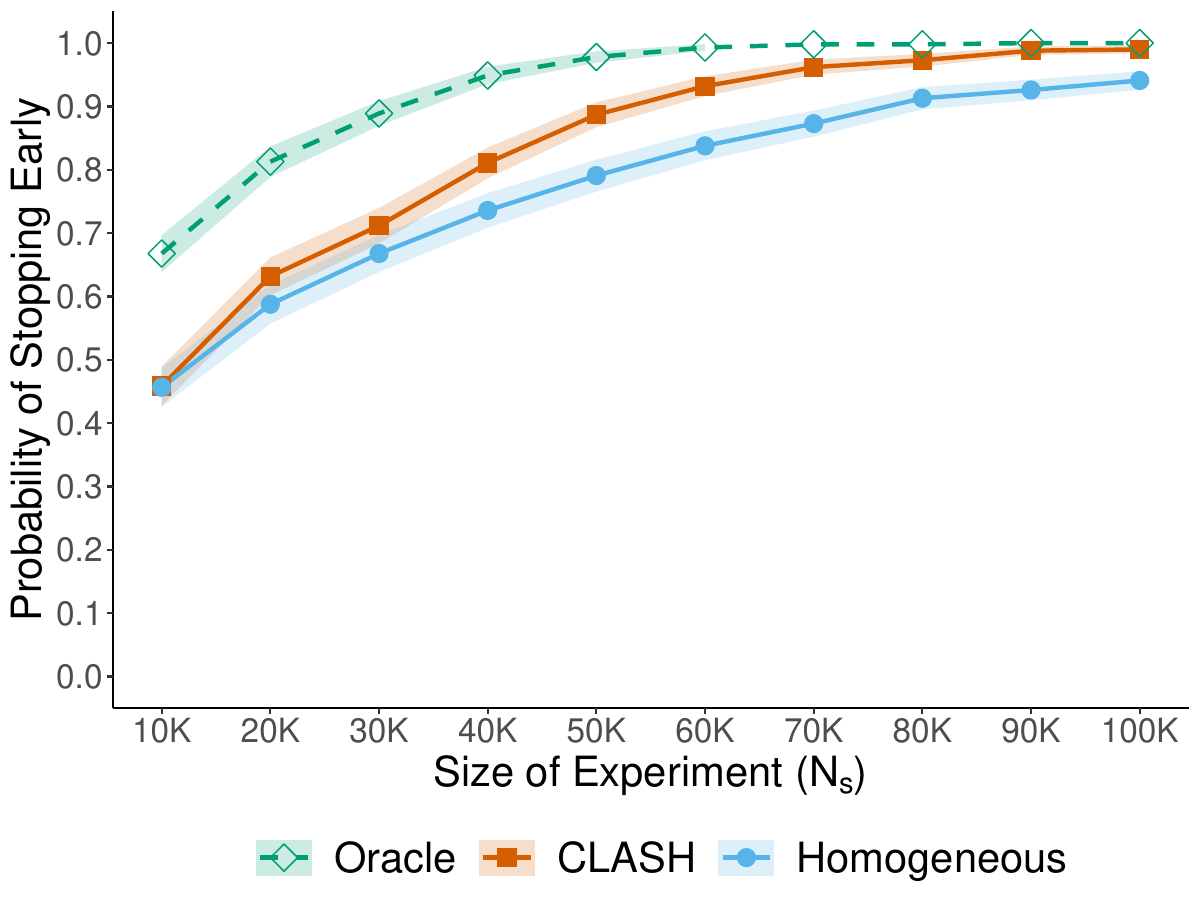}}%
\subfigure[Regions 1-2]{%
\includegraphics[width=0.45\textwidth]{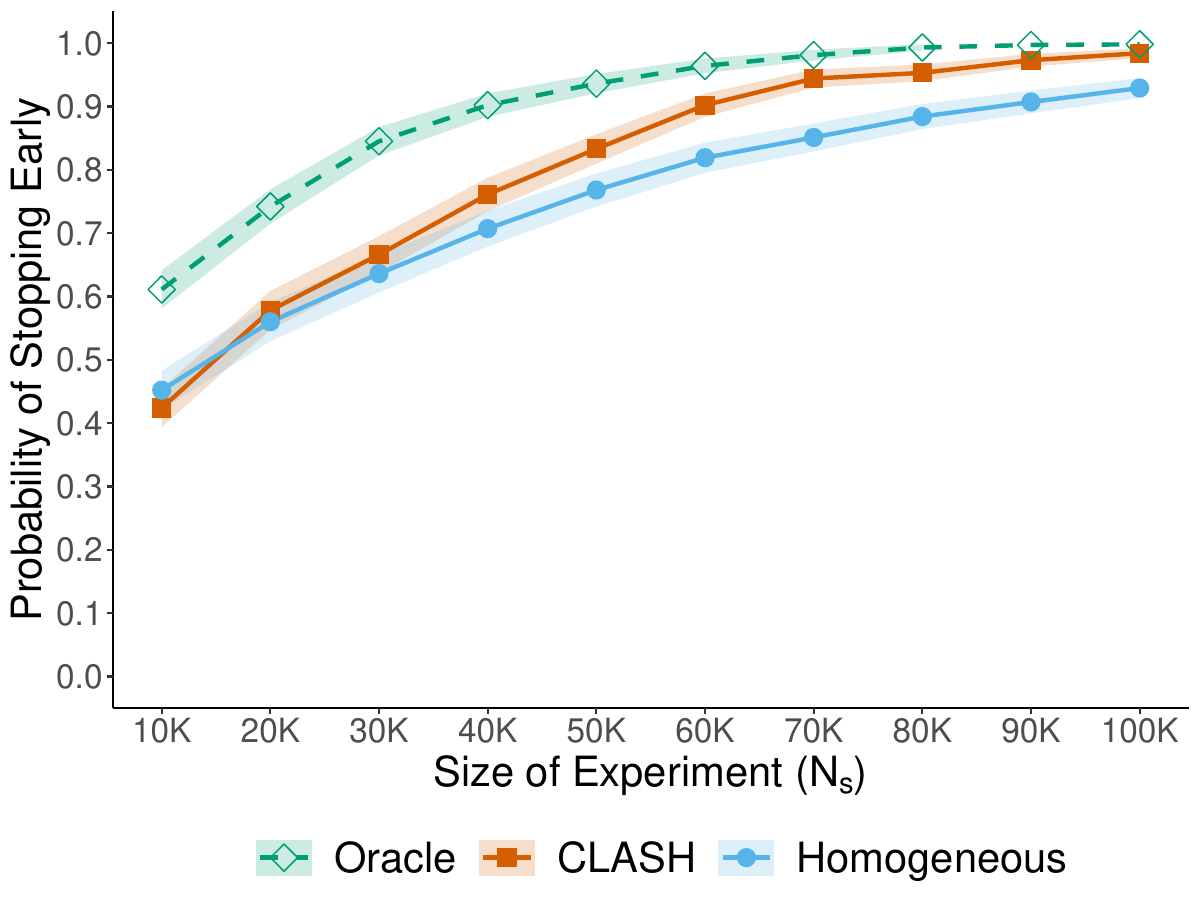}}%
\qquad
\subfigure[Regions 1-3]{%
\includegraphics[width=0.45\textwidth]{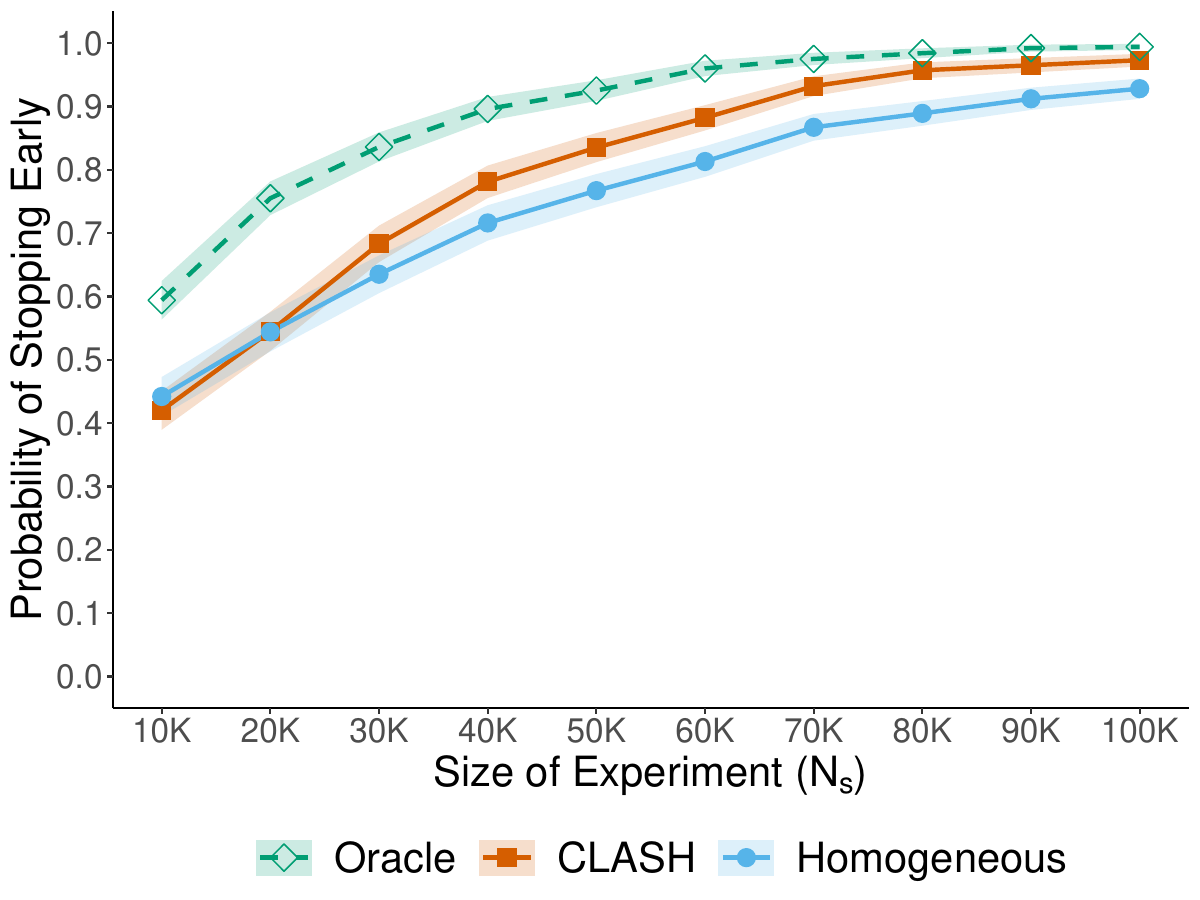}}%
\subfigure[Regions 1-4]{%
\includegraphics[width=0.45\textwidth]{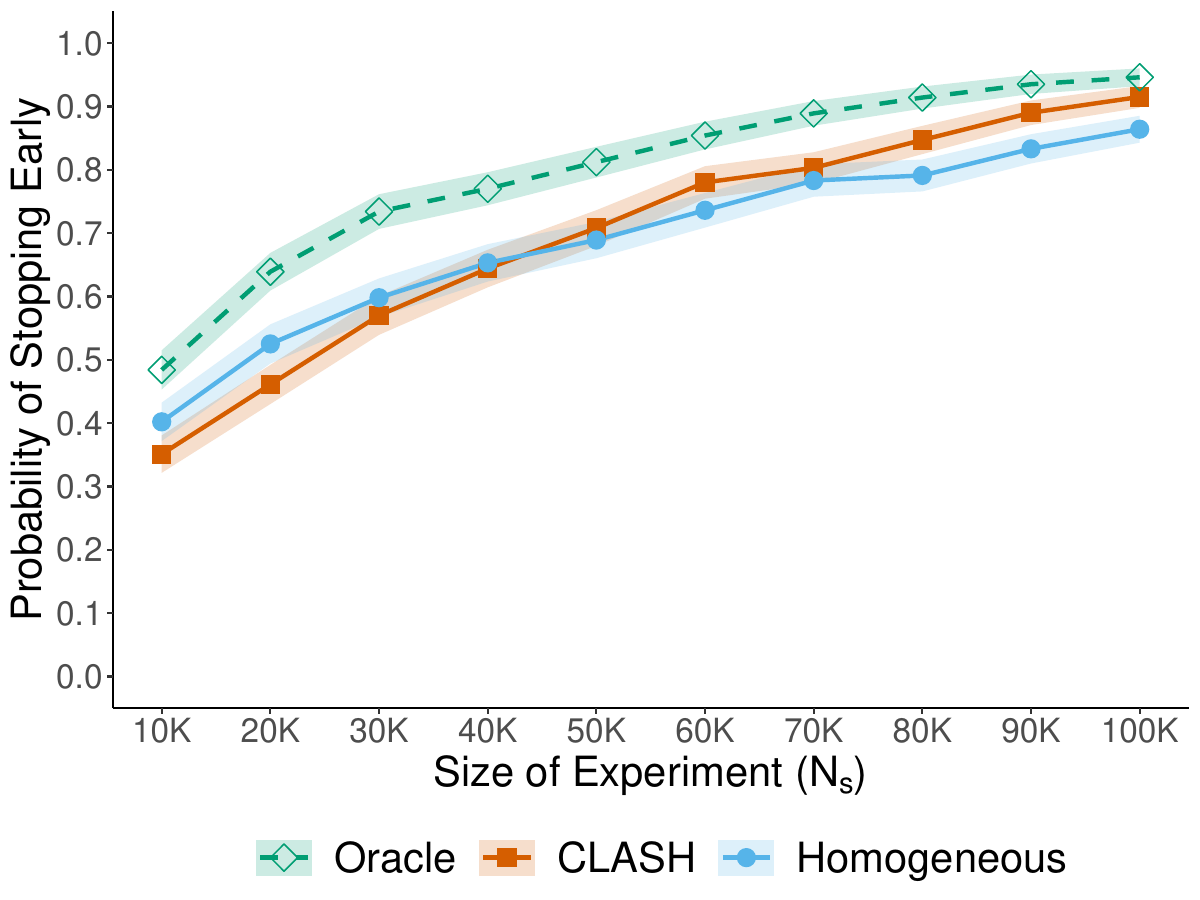}}%
\caption{CLASH's performance by sample size with real data from an A/B test. We plot stopping probability across 1,000 random samples with 95\% CIs. We define the unharmed group as Regions 5-8. We define the harmed group as (a) Region 1 (29\% harmed group size), (b) Regions 1-2 (41\% harmed group size), (c) Regions 1-3 (43\% harmed group size), or (d) Regions 1-4 (60\% harmed group size). In cases (a) - (c), CLASH significantly increases stopping probability over the homogeneous baseline for all $N_s \geq 50$,$000$, and achieves near-oracle level performance with $N_s=100$,$000$ (20\% of the overall experiment's size). Case (d) reflects a situation in which a majority group is harmed, which is not our method's primary focus. However, CLASH still outperforms the homogeneous baseline by $N_s=100$,$000$.}
\label{fig:empregions}
\end{figure*}

\end{document}